\newtheorem{theorem}{Theorem}[section]
\newtheorem{proposition}[theorem]{Proposition}
\newtheorem{definition}[theorem]{Definition}
\newtheorem{claim}[theorem]{Claim}
\newtheorem{lemma}[theorem]{Lemma}
\newtheorem*{lemma*}{Lemma}
\newtheorem{conjecture}[theorem]{Conjecture}
\newtheorem{corollary}[theorem]{Corollary}
\newtheorem{observation}[theorem]{Observation}
\newtheorem*{observation*}{Observation}
\newif\ifdraft
\renewcommand\epsilon\varepsilon
\newcommand{\eps}{\varepsilon}
\newcommand{\eq}[1]{\begin{align*}#1\end{align*}}
\newcommand{\eql}[2]{\begin{align}\label{#2}#1\end{align}}
\newcommand{\F}[2]{{\frac{#1}{#2}}}
\newcommand{\R}[1]{{\frac{1}{#1}}}
\newcommand{\B}[1]{\left( {#1} \right)}
\newcommand{\BF}[2]{\B{\F{#1}{#2}}}
\newcommand{\RL}{PF}
\newcommand{\reach}{t}
\newcommand{\period}{\delta}
\newcommand{\algzplus}{\texttt{A}_{walk}}
\newcommand{\algy}{\texttt{A}_{loop}}
\newcommand{\alg}{\texttt{A}}
\newcommand{\algsep}{\texttt{A}_{sep}}
\newcommand{\algcontract}{\texttt{A}_{query}}
\newcommand{\algregular}{\texttt{A}_{2-\text{layers}}}
\newcommand{\algtwosepcost}{
\sqrt{\Delta} \log n \cdot \log \log n}
\newcommand{\I}{\item}
\newcommand{\kside}{\F 1 6}
\newcommand{\kup}{\F 1 {12}}
\newcommand\RR{\mathbb{R}}
\newcommand{\constq}{(1-\varepsilon)}
\newcommand{\thresh}{m}
\newcommand{\smallh}{h_{2}}
\newcommand{\bigh}{h_{1}}
\newcommand{\newsource}{\sigma'}
\newcommand{\newtarget}{\tau_u}
\newcommand{\faulty}{faulty\,}
\def\NL{{N_{{layer}}}}
\newcommand{\condition}{($\star$)~}
\def\t2{(\log n)^{10}}
\def\queries{{\mathcal{Q}}}
\newcommand\dist{d}
\newcommand\adv{{\tt{adv}}}
\newcommand\advTo[1]{\overrightarrow{\adv}(#1)}
\newcommand\advAway[1]{\overleftarrow{\adv}(#1)}
\newcommand\E{\mathbb{E}}
\newcommand\Lvl{{\cal{L}}}
\newcommand\PP{\mathbb{P}}
\newcommand\EE{\mathbb{E}}
\newcommand\QE[1][x]{E_{#1}}
\newcommand{\listen}{\lambda}
\newcommand{\good}{\textup{\texttt{Good}}~}
\newcommand{\source}{\sigma}
\newcommand{\target}{\tau}
\newcommand{\treasure}{\target}
\newcommand{\diam}{D}
\newcommand{\depth}{D}
\newcommand{\fast}{{\texttt{A}_\textit{fast}}~}
\newcommand{\intermediate}{{\texttt{A}_\textit{mid}}~}
\newcommand{\excellent}{\textup{\texttt{Excellent}}~}
\newcommand{\excellentc}{\textup{\texttt{Excellent$^c$}}~}
\newcommand{\pathoo}[2]{\langle#1 ,#2\rangle}
\newcommand{\pathoc}[2]{\langle#1 ,#2]}
\newcommand{\pathco}[2]{[#1 ,#2\rangle}
\newcommand{\pathcc}[2]{[#1 ,#2]}
\newcommand{\leaf}[1][]{{\tau_{#1}}}
\newcommand{\set}[1]{\left\{ #1 \right\}}
\newcommand{\midline}[2]{{#1} \, \middle| \, {#2}}
\newcommand{\prob}[1]{\mathbb{P}\left( #1 \right)}
\newcommand{\sexpct}[2]{\mathbb{E}_{{#1}}\left[#2\right]}
\newcommand{\expct}[1]{\sexpct{}{#1}}
\newcommand{\cprob}[2]{\prob{\midline{#1}{#2}}}
\newcommand{\cexpct}[2]{\expct{\midline{#1}{#2}}}
\newcommand\score{\texttt{score}}
\newcommand{\local}[1][h]{{\textup{\texttt{local}}_{#1}}}
\newcommand{\Ceil}[1]{{\left\lceil {#1} \right\rceil}}
\newcommand\leaves{\mathcal{L}}
\newcommand{\badsides}[2][S]{M^{#1}_{\text{sides}}(#2)}
\newcommand{\badup}[2][S]{M^{#1}_{\text{up}}(#2)}
\newcommand{\repeatable}[5]{
\begin{#1}{\label{#4}}
{#5}
\end{#1}
\theoremstyle{#3}
\newtheorem*{temp#4}{#2 \ref{#4} \textit{(restated)}}
\global\expandafter\def\csname #4\endcsname{
\begin{temp#4}
{#5}
\end{temp#4}}
}
\def\ThmStyle{plain}
\newcommand{\rLemma}[2]{\repeatable{lemma}{Lemma}{\ThmStyle}{#1}{#2}}
\newcommand{\compleaves}{B}
\newcommand{\specT}{\textbf{H}}
\newcommand{\subT}{H}
\newcommand{\ldepth}{h}
\newcommand{\smallfun}{C}
\renewcommand{\epsilon}{\varepsilon}
\renewcommand{\Pr}{P}
\renewcommand{\epsilon}{\varepsilon}
\newcommand\bigO{\mathcal{O}}
\date{}
\title{
Searching Trees with Permanently Noisy Advice:\\
Walking and Query Algorithms
\footnote{This work has received funding from the European Research Council (ERC) under the European Union's Horizon 2020 research and innovation programme (grant  agreement No 648032). This work was also supported in part by the Israel Science Foundation (grant No. 1388/16).  A preliminary version of this paper  appeared in ESA 2018. The current journal version contains many additional results.
}
}
\author[1]{Lucas Boczkowski}
\author[2]{Uriel Feige}
\author[1]{Amos Korman}
\author[3]{Yoav Rodeh}
\affil[1]{CNRS, IRIF, Univ Paris Diderot, Paris, France.}
\affil[2]{The Weizmann Institute of Science, Rehovot, Israel.}
\affil[3]{Ort-Braude College, Karmiel, Israel.}
\begin{document}
\maketitle
\begin{abstract}
We consider a search problem on trees in which the goal is to find an adversarially placed treasure, while relying on local, partial information. Specifically, each node in the tree holds a  pointer to one of its neighbors, termed \emph{advice}. A node is faulty with probability $q$. The advice at a non-faulty node points to the neighbor that is closer to the treasure, and the advice at a faulty node points to a uniformly random neighbor. Crucially, the advice is {\em permanent}, in the sense that querying the same node again would yield the same answer.

Let $\Delta$ denote the maximal degree. Roughly speaking, when considering the expected number of {\em moves}, i.e., edge traversals, we show that a phase transition occurs when the {\em noise parameter} $q$ is about $1/\sqrt{\Delta}$. Below the threshold, there  exists an algorithm with expected move complexity $\bigO(D\sqrt{\Delta})$, where $D$ is the depth of the treasure, whereas above the threshold, every search algorithm has expected number of moves which is both exponential in $D$ and polynomial in the number of nodes~$n$.

In contrast, if we require to find the treasure with probability at least $1-\delta$, then
for every fixed $\varepsilon > 0$, if $q<1/\Delta^{\varepsilon}$ then there exists a search strategy that with probability $1-\delta$ finds the treasure using $(\delta^{-1}D)^{O(\frac 1 \varepsilon)}$ moves. Moreover, we show that $(\delta^{-1}D)^{\Omega(\frac 1 \varepsilon)}$ moves are necessary.

Besides the number of moves, we also study the number of advice {\em queries} required to find the treasure. Roughly speaking, for this complexity, we show similar threshold results to those previously stated, where the parameter $D$ is replaced by $\log n$.

\end{abstract}
\newpage
\tableofcontents

\section{Introduction}
This paper considers a search problem on trees, in which the goal is to find a treasure that is placed at one of the nodes by an adversary. Each node of the tree holds information, called {\em advice}, regarding which of its neighbors is closer to the treasure, and the algorithm may query the advice at some nodes in order to accelerate the search. In this paper we study both the number of walking steps and the number of advice queries needed in order to find the treasure.

Searching with advice on trees is an extension of binary search to tree topologies.
This type of extension has been the focus of numerous works \cite{Spirakis,BinarySearchFramework,Kempe16,Feige94, Onak08, Onak06}, some including noise or errors in the advice. The problem may also be viewed as searching a poset \cite{Onak06, Onak08}, instead of a completely ordered set as in typical binary search. Some authors also motivate the problem using the notion of ``bug detection", where the tree models dependencies between programs \cite{Onak08}.
When the searcher is restricted to walk on the edges of the underlying graph, it is possible to view the problem as a routing problem with unreliable local information \cite{eLife, Hanusse08, Kos10}. An interesting application was also given in \cite{NIPS}, in the context of interactive learning. 

The crucial feature of our model, that distinguishes it from most existing literature on search with noisy advice, is the permanent nature of the faults. Given the tree and the location of the treasure, there is a sampling procedure (which may be partly controlled by an adversary) that determines the advice at every node of the tree. Depending on the outcome of the sampling procedure, the advice at a node may either be correct or faulty (we also refer to the latter case as noise). The advice is
{\em permanent} -- it does not change after the sampling stage. Every query to a given node yields the same advice -- there is no re-sampling of advice. The difference between permanent noise and re-sampled one (as in e.g., \cite{bayes08,Kempe16,Feige94,Karp07}) is dramatic, since the re-sampled advice model allows algorithms to boost the confidence in any given piece of advice by repeatedly querying the same advice. Permanent noise was
considered in~\cite{Braverman08} for the task of sorting, but this task is very different than the search task considered in our paper (in particular, no algorithm can find the true sorted order when noise is permanent). Searching with permanent faulty nodes has also been studied in a number of works \cite{Brodal07,memory2,Kos10,Hanusse08,Hanusse04}, but assuming that the faulty nodes are chosen by an adversary. The difference between such worst case scenarios and the probabilistic version studied here is again significant, both in terms of results and in terms of techniques (see more details in Section \ref{sec:related}).

The model of permanent faults aims to model faults that occur in the physical memory associated with the node, rather than, for example, the noise that is associated with the actual mechanism behind the query.  Interestingly, the topic of noisy permanent advice is also meaningful outside the realm of classical computer science, and was shown to be relevant in the context of ant navigation \cite{eLife}. The authors therein conducted experiments in which a group of ants carry a large load of food aiming to transport it to their nest, while basing their navigation on unreliable advice  given by pheromones that are laid on the terrain. Indeed, although the directions proposed by pheromones typically lead to the nest,  trajectories as experienced by   small ants may be inaccessible to the load, and hence directional cues left by  ants sometimes lead the load towards dead-ends.

The current paper introduces the algorithmic study of search with permanent probabilistically noisy advice. Similarly to many other works on search  we focus on trees, which is a very important topological structure in computer science. Extending our work to general graphs seems technically challenging and remains for future work, see Section \ref{sec:open}.



\subsection{The Noisy Advice Model}
We start with some notation. Additional notation is introduced in Section \ref{sec:notations}.
We present the model for trees, but we remark that the definitions can be extended to general graphs (see also Section \ref{sec:open}).
Let $T$ be an $n$-node tree rooted at some arbitrary node $\source$.
 We consider an agent that is initially located at the root $\source$ of $T$,
aiming to find a node $\treasure$, called the {\em treasure}, which is chosen by an adversary.
That is, the goal of the agent is to be located at $\treasure$, and once it is there, the algorithm terminates.

The {\em distance} $\dist(u,v)$ is the number of edges on the path between $u$ and $v$.
The {\em depth} of a node $u$ is $d(u)=\dist(\source,u)$. Let $d=d(\treasure)$ denote the depth of $\treasure$, and let
the depth $\diam$ of the tree be the maximum depth of a node.
Finally, let $\Delta_u$ denote the degree of node $u$ and let $\Delta$ denote the maximum degree in the tree. For an integer  $\Delta\geq 2$, a {\em complete  $\Delta$-ary} tree is a tree such that every internal node has degree precisely $\Delta$. 

Each node $u\neq \treasure$ is assumed to be provided with an {\em advice}, termed $\adv(u)$, which provides information regarding the direction of the treasure. Specifically, $\adv(u)$ is a pointer to one of $u$'s neighbors. It is called {\em correct} if
the pointed neighbor is one step closer to the treasure than $u$ is.
Each node $u\neq \treasure$ is {\em faulty} with probability $q$ (the meaning of being faulty will soon be explained), independently of other nodes. Otherwise, $u$ is considered {\em sound}, in which case its advice is correct. We call $q$ the {\em noise parameter}. Unless otherwise stated, this parameter is the same across all nodes, but in some occasions, we also allow it to vary across nodes. In that case $q$ is defined as $\max_u(q_u)$.

\paragraph{Random and semi-adversarial variants.} We consider two models for faulty nodes. The main model assumes that the advice at a faulty node points to one of its neighbors chosen uniformly at random (and so possibly pointing at the correct one). We also consider an adversarial variant, called the {\em semi-adversarial model}, where this neighbor is chosen by an adversary. The adversary may either be {\em oblivious} or {\em adaptive}. An oblivious adversary first decides on adversarial advice for each node, afterwards each node becomes faulty independently with probability $q$, and then the true advice of faulty nodes is replaced by the respective adversarial advice. An adaptive adversary first sees the locations of all faulty nodes and only afterwards decides on the advice at the faulty nodes.
\paragraph{Move and query complexities.} The agent can move by traversing edges of the tree. At any time, the agent can query its hosting node in order to ``see'' the corresponding advice and to detect whether the treasure is present there. The search terminates when the agent queries the treasure.
In this paper two complexity measures are used. The {\em move complexity} is the number of edge traversals. The {\em query complexity}, is the number of queries made. The number of queries is always smaller than the number of moves. It is implicit when considering the query complexity that the algorithm ``knows" the topology of the tree $T$, because this topology can be learned without spending any queries.
\paragraph{Noise assumption.}
The noise parameter $q$ governs the accuracy of the environment. If $q = 0$ for all nodes, then advice is always correct. This case allows to find the treasure in $D$ moves, by simply following each encountered advice. On the other extreme, if $q=1$, then advice is essentially meaningless, and the search cannot be expected to be efficient.
An intriguing question is therefore to identify the largest value of $q$ that allows for efficient search.
\paragraph{Expectation and high probability.}
Importantly, we consider two kinds of guarantees: expectation, and high probability. In the first case, we measure the performance of the algorithm as the expected number of moves before the treasure is found.
Expectation is taken over both the randomness involved in sampling the noisy advice, and over the possible probabilistic choices made by the search algorithm.
In the second case, we consider the number of moves spent by algorithms that find the treasure with high probability (say, probability $0.9$). An upper bound on expectation can be converted into a high probability upper bound, by use of the Markov inequality. However, the converse need not hold. Indeed, as our work shows, in our setting the two kind of guarantees lead to quite different thresholds and techniques.



\paragraph{Full-information model.} For lower bound purposes, we find it instructive to also consider the following {\em full-information} model. Here the structure of the tree is known, the algorithm is given as input the advice of all nodes except for the leaves, and the treasure is at one of the leaves. The queried node can be an arbitrary leaf, and the answer reveals whether the leaf holds the treasure.

\subsection{Our Results}

We introduce the algorithmic study of search problems with probabilistic permanent faulty advice. The results all assume the underlying graph is a tree. Our results can be grouped along two axes. One is the complexity measure, which can refer to either {\em move} or {\em query}. The other is the convergence guarantee, which can refer to either {\em expectation} or {\em high probability}. We choose to split the results first according to the complexity measure. This is because, to a large extent, the query algorithms rely on good walking algorithms. The paper thus begins with the study of walking algorithms. Within this setting, we start with the average case, i.e., fast convergence in expectation.

We note however that a lower bound in the query model translates to one in the walk model. In fact, all our lower bounds are stated for queries.

\paragraph{Results in the walk model.}
Consider the noisy advice model on trees with maximum degree $\Delta$.
Roughly speaking, we show that  $1/\sqrt{\Delta}$ is the threshold for the noise parameter $q$, in order to obtain search algorithms with low expected complexities. Essentially, above the threshold, there exists trees (specifically, complete  $\Delta$-ary trees) such that for any algorithm, the expected number of moves required to  find the treasure is exponential $d$, the depth of the treasure. Conversely, below the threshold there exists an algorithm whose expected move complexity is almost linear, that is, $\bigO(d\sqrt{\Delta})$.

The proof that there is no algorithm with a small expected number of moves when the noise exceeds
$1/\sqrt{\Delta - 1}$ is rather simple. In fact, it holds even in the full information model. Intuitively,
the argument is as follows (the formal proof appears in Section~\ref{sec:lower-exp}). Consider a complete $\Delta$-ary tree of depth $D$ and assume that the treasure $\tau$ is placed at a leaf. The first observation is that the expected number of leaves having more advice point to them than to $\tau$  is a lower bound on the query complexity. The next observation is that there are more than $(\Delta-1)^D$ leaves whose distance from $\tau$  is $2D$, and for each of those leaves $u$, the probability that more advice points towards it than towards $\tau$ can be approximated by the probability that all nodes on path connecting $u$ and $\tau$ are faulty. As this latter probability is $q^{2D}$, the expected number of leaves that have more pointers leading to them is roughly $(\Delta-1)^{D} q^{2D}$. This term explodes when $q > 1/\sqrt{\Delta-1}$.

One of the main challenges we faced in the paper was show that for noise probability below $1/\sqrt{\Delta-1}$ (by a constant factor) the lower bound no longer holds, and in fact, there are extremely efficient algorithms.
Interestingly, the optimal algorithm we present is based on a Bayesian approach, which assumes the treasure location is random,
yet it works even under worst case assumptions. The challenging part in the construction was identifying the correct prior. Constructing algorithms that ensure worst-case guarantees through a Bayesian approach was done in \cite{bayes08} which studies a closely related, yet much simpler problem of search on the line. Apart from \cite{bayes08} we are unaware of other works that follow this approach.

We also analyze the oblivious semi-adversarial model, and show that the expected move complexity has a threshold also in this model, but it is much lower, around $1/\Delta$.

We then turn our attention to studying the move complexity under a given probability guarantee. We show that for every fixed $\epsilon > 0$, if $q<1/\Delta^{\varepsilon}$ then there exists a search strategy that with probability $1-\delta$ finds the treasure using $(\delta^{-1}d)^{O(\frac 1 \varepsilon)}$ moves. Moreover, we show that $(\delta^{-1}d)^{\Omega(\frac 1 \varepsilon)}$ moves are necessary. The upper bounds hold even in the adaptive semi-adversarial variant, whereas the lower bound holds even in the  purely randomized variant.

The key concept towards proving the upper bound is a notion of node \emph{fitness}.
Essentially, a node is declared fit if it has many pointers to it on the path coming from the root. This is good evidence that the node is either on the path to the treasure, or at least not too far from it. The idea is to explore the component of fit nodes to which the root, i.e., the starting point, belongs. If the component contains the treasure, the nodes on the root to treasure path, and not too many additional nodes, then the treasure is found quickly. With the appropriate formalization of fitness, efficient search can be achieved with high probability.

Unlike the lower bound on the expected number of queries, the lower bound in the high probability case 
uses the fact that obtaining the advice of nodes requires spending queries. Hence this lower bound  does not hold in the full information model.  Consider the complete $\Delta$-ary tree of depth $D$, and assume that the treasure is at a leaf. Set $q=\Delta^{-\varepsilon}$ and $h=\epsilon^{-1} \log_\Delta (D/\delta)$. Consider the length $D$ path from the root to the treasure.
On this path, with probability at least $\delta$, there exists a segment of length $h$, where all nodes are faulty. Let us denote by $H$ the subtree rooted at the highest endpoint of such a segment of $h$ consecutive faulty nodes. The algorithm needs to explore at least a constant fraction of $H$  before finding how to proceed towards the treasure. The lower bound follows as the size of the
subtree $H$ is $\Delta^h = (D/\delta)^{\varepsilon^{-1}}$.

As is always the case with non-negative random variables, the median cannot be much larger than the average, but it might be much smaller.
Our results imply that in the context of searching with noisy permanent advice, in a large regime of noise, there is an exponential gap between the median and the average. The high expectation running time is the consequence of a small fraction of the possible error patterns (the pattern of errors in the advice) for which the  search is very slow, but for almost all error patterns, the treasure is found much faster than what the high expectation suggests.

Figure~\ref{table:main-walk} summarizes the results presented in this paper regarding walking algorithms.
\begin{figure}
\begin{center}
\def\arraystretch{2}
  \begin{tabular}{c|c|c||c|c|}
    \cline{2-5}
     & \multicolumn{2}{c||}{Upper Bound } & \multicolumn{2}{c|}{Lower Bound } \\
        \cline{2-5} & Regime & Moves & Regime & Moves \\ \hline
    \multicolumn{1}{|c|}{Expectation} & $q \ll \frac 1{\sqrt{\Delta}}$ & $\bigO(d\sqrt{\Delta})$ & $q \gg \frac 1{\sqrt{\Delta}}$ & $e^{\Omega(d)}$ \\ \hline
    \multicolumn{1}{|c|}{Expectation (S.A.)} & $q \ll \frac 1{\Delta}$ & $\bigO(d)$ & $q \gg \frac 1{\Delta}$ & $e^{\Omega(d)}$ \\ \hline
    \multicolumn{1}{|c|}{High Probability} & $q < \Delta^{-\varepsilon}$& $d^{O(\varepsilon^{-1})}$ & $q > \Delta^{-\varepsilon}$ & $d^{\Omega(\varepsilon^{-1})}$ \\
    \hline
  \end{tabular}
  \caption{\label{table:main-walk} A summary of our results regarding the move complexity, in a simplified form.  The precise definition of the symbol $\ll$ will be clarified later. S.A. stands for oblivious semi-adversarial. The High Probability upper bound includes the adaptive semi-adversarial model.}
\end{center}
\end{figure}

\paragraph{Results in the query model.}
The number of queries is measured with respect to the total size of the tree $n$, rather than the distance to the treasure $d$. Roughly, we find that we can translate any result for walking algorithms, positive or negative, into one about query algorithms, replacing the depth of the treasure $d$ by $\log n$.

We build on a separator-based scheme, that would find the treasure using $O(\log n)$ queries in the absence of faults. Since the advice is not fully reliable, a mechanism is needed to catch the error in case the advice at a separator node is faulty. Probing the whole neighborhood around a separator node would be too costly. Hence, we resort to a local exploration which actually corresponds to a walking algorithm similar to the ones described earlier. The local exploration corrects potential errors at all $O(\log n)$ separator nodes on the way to the treasure, with high probability, thus leading to efficient algorithms in the high probability setting.

Local exploration may however fail due to a large quantity of errors around a separator node. A simple remedy is to settle this case by querying the whole tree.
This gives an $O(\sqrt{\Delta}\log \Delta \log^2 n)$ algorithm. To derive our best query-algorithm, we use two scales of local exploration. The higher scale is used as a fallback in case local exploration at the first scale fails.

Figure~\ref{table:main-queries} contains a summary of our results for query algorithms.
\begin{figure}
\begin{center}
\def\arraystretch{2}
  \begin{tabular}{c|c|c||c|c|}
    \cline{2-5}
     & \multicolumn{2}{c||}{Upper Bound } & \multicolumn{2}{c|}{Lower Bound } \\ \cline{2-5}
        \cline{2-5} & Regime & Queries & Regime & Queries \\ \hline
    \multicolumn{1}{|c|}{Expectation} & $q \ll \frac 1{\sqrt{\Delta}}$ & $\tilde{\bigO}(\sqrt{\Delta}\log n)$ & $q \gg \frac 1{\sqrt{\Delta}}$ & $n^{\Omega(1)}$ \\ \hline
    \multicolumn{1}{|c|}{High Probability} & $q = \Delta^{-\varepsilon}$& $(\log n)^{O(\varepsilon^{-1})}$ & $q = \Delta^{-\varepsilon}$ & $(\log n)^{\Omega(\varepsilon^{-1})}$ \\
    \hline
  \end{tabular}
  \caption{\label{table:main-queries} Query complexity results, in simplified form. The precise conditions behind the symbol $\ll$ will be clarified later.}
\end{center}
\end{figure}

\subsubsection{Results in Expectation for Walking Algorithms}
\label{sec:resultsinexp}

In Section \ref{sec:walk-exp} we present an algorithm whose expected move complexity is optimal up to a constant factor for the regime of noise below the threshold. Furthermore, this algorithm does not require prior knowledge of either the tree's structure, or the values of $\Delta$, $q$, $d$, or $n$.


Before presenting the result, we extend the model slightly, by allowing each node $v$ to have a distinct noise parameter $q_v$. This greater flexibility makes our results stronger. It also happens to be convenient from a technical standpoint. When $q_v$ does not depend on $v$, we say the noise is \emph{uniform}.
The following technical definition is used in our results, in place of the more crude $q \ll \frac{1}{\sqrt{\Delta}}$ given in Table \ref{table:main-walk}.
\begin{definition}\label{def:condition}
Condition \condition holds with parameter $0<\varepsilon<1$ if for every node $v$, we have
\begin{equation}\label{eq:condition}
q_v < \F{1 - \varepsilon - \Delta_v^{-\R{4}}}{\sqrt\Delta_v + \Delta_v^\R{4}}.
\end{equation}
\end{definition}
Since $\Delta_v \geq 2$, the condition  is always satisfiable when taking a small enough $\varepsilon$.

All our algorithms are deterministic, hence, expectation is taken with respect only to the sampling of the advice.
\begin{theorem}\label{thm:main-upper}
For every $\varepsilon>0$, if Condition \condition holds with parameter $\varepsilon$, then there exists a deterministic walking algorithm $\algzplus$
that requires $\bigO(\sqrt{\Delta} d)$ moves in expectation. The algorithm does not require prior knowledge of either the tree's structure, or any information regarding the values of $\Delta$, $d$, $n$, or the $q_v$'s.
\end{theorem}

In the above theorem (and some other places in this paper) the $\bigO$ notation hides terms that depends on $\varepsilon$. For Theorem~\ref{thm:main-upper}, this hidden term is $\varepsilon^{-3}$.

In Section \ref{sec:lower-exp} we establish the following lower bound.
\begin{theorem}\label{thm:main-lower}
Consider a complete $\Delta$-ary tree of depth $D$, and assume that the treasure is at a leaf. For every constant $\varepsilon>0$, if
$q\geq \F{1+\varepsilon}{\sqrt{\Delta-1}}$, then every randomized search algorithm has move (and query) complexity that in expectation is exponential in $D$. The result holds also in the full-information model. \parskip0cm
\end{theorem}
Observe that taken together, Theorems \ref{thm:main-upper}, \ref{thm:main-lower} and Condition~\condition (see Eq. \eqref{eq:condition}) imply
that for every given  $\varepsilon>0$ and large enough $\Delta$, efficient search can be achieved if $q<(1-\varepsilon)/\sqrt{\Delta}$ but not if $q>(1+\varepsilon)/\sqrt{\Delta}$.


We further complete our lower bounds with the following result, proved in Section \ref{sec:lb-query}.
\begin{theorem}\label{thm:lower-exp}
For a complete $\Delta$-ary tree of depth $D$, the expected number of queries for every algorithm is $\Omega(q\Delta D)$ (or equivalently, $\Omega(q\Delta\log_\Delta n)$).
\end{theorem}


In Section \ref{sec:PF-adv} we analyze the performance of simple memoryless algorithms called {\em probabilistic following},
suggested in \cite{eLife}. At every step, the algorithm follows the advice at the current vertex with some fixed probability $\listen$, and performs a random walk step otherwise. It turns out that such algorithms can perform well, but only in a very limited regime of noise. Specifically, we prove:
\begin{theorem}\label{thm:semi}
There exist positive constants $c_1$, $c_2$ and $c_3$ such that the following holds. If for every vertex $u$, $q_u < {c_1}/{\Delta_u}$ then there exists a probabilistic following algorithm that finds the treasure in less than $c_2 d$ expected steps.
On the other hand, if $q > {c_3}/{\Delta}$ then for every probabilistic following strategy the move complexity on a complete $\Delta$-ary tree is exponential in the depth of the tree.
\end{theorem}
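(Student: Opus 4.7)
The plan is to analyze the walk by projecting it onto its distance to the treasure. For a node $v \neq \treasure$ at distance $k$ from $\treasure$, let $v^*$ denote the unique neighbor at distance $k-1$, and set $\mu := \listen(1-q_v)$. A short computation, averaging over the random direction of faulty advice at $v$, gives that the one-step probability of moving to $v^*$ equals $p_+(v) = \mu + (1-\mu)/\Delta_v$, and the probability of moving to any specific other neighbor equals $(1-\mu)/\Delta_v$.

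For the upper bound, I would fix $\listen$ as a constant bounded away from $0$ and $1$, say $\listen = 3/4$. Under the hypothesis $q_u < c_1/\Delta_u$ with $c_1$ small, one has $\mu > 1/2 + \Omega(1)$, hence $p_+ - 1/2 = \Omega(1)$ uniformly. I would then decompose the walk into excursions off the root-to-treasure path: each off-path subtree is itself a rooted $\Delta$-ary tree, and the walk's depth projection there is a biased random walk with drift $\Omega(1)$ toward the subtree's root, so each excursion returns in $O(1)$ expected steps. At each main-path vertex, the walk makes a correct next step with probability $p_+ = \Omega(1)$, so the number of failed attempts is geometric with constant mean, each triggering an $O(1)$ excursion; summing over $d$ main-path vertices gives $\moves = O(d)$. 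The care required is handling that the environment is quenched rather than annealed: since misleading-advice nodes have density $O(1/\Delta)$, they cause only constant-sized local traps, which a straightforward martingale or coupling argument absorbs into the $O(1)$ excursion cost.

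For the lower bound, I would argue that for $q > c_3/\Delta$ on the complete $\Delta$-ary tree, no choice of $\listen \in [0,1]$ gives sub-exponential move complexity. The argument splits by $\listen$. When $\listen$ is very close to $1$, the walk is nearly deterministic along advice chains, and with probability bounded below a faulty vertex $v$ whose advice points to a sound neighbor $w$ (with $w^{*}=v$) creates a $2$-cycle; each escape costs $\Theta(1/(1-\listen))$ steps, and a branching count of such traps along the walk's trajectory produces super-polynomial delay. When $\listen$ is bounded away from $1$, the per-step bias is too weak relative to the branching number $\Delta - 1$ of the subtree once $q > c_3/\Delta$: a Lyons-style RWRE estimate on an off-path subtree (which is itself a rooted $\Delta$-ary tree with i.i.d.\ advice on each node) shows that with probability bounded away from $0$ the walk sinks to depth $\Theta(\depth)$ in some off-path subtree before returning to its root, costing at least $c^{\depth}$ steps for some $c > 1$. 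The threshold $q = \Theta(1/\Delta)$ emerges from balancing the branching number of the subtree against the per-edge bias any $\listen$ can produce.

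The main obstacle is the lower bound: one must rule out \emph{every} $\listen \in [0,1]$ uniformly and bridge the two regimes (cycle-trapping for $\listen \to 1$ and subtree-sinking for moderate $\listen$) at the threshold $q = \Theta(1/\Delta)$. The cleanest approach combines a Lyons-style electrical-network argument for the biased walk on a random-environment $\Delta$-ary subtree with a Galton-Watson-type estimate showing that a trap region of depth $\Theta(\depth)$ appears somewhere off the main path with probability bounded away from $0$.
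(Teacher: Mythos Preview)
Your upper-bound sketch is essentially the paper's argument: decompose the hitting time into expected return times $\EE\QE[u]\reach(u')$ along the $\source$--$\treasure$ path, bound each by a constant via the excursion count $N_u$ times the excursion length $L_u$. The paper makes the quenched/annealed issue you flag transparent by a backward induction on the distance from $\treasure$: the induction hypothesis bounds $\EE\QE[v]\reach(u)$ for children $v$ of $u$, which feeds directly into $L_u$, and $N_u$ depends only on the advice at $u$ itself. Your annealed one-step probability $p_+(v)$ is not the right object (it averages over the very environment that is fixed), but the induction sidesteps this cleanly without any separate martingale or coupling step.

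Your lower-bound plan is where you diverge from the paper, and it is substantially more complicated than necessary. You split into two regimes of $\listen$ and propose distinct mechanisms (2-cycle traps near $\listen=1$, Lyons-type subtree sinking for moderate $\listen$), and you correctly identify bridging them as ``the main obstacle.'' The paper avoids this entirely by reusing the \emph{same} excursion recursion as the upper bound. The point is that with probability at least $q(1-\R\Delta)$ the advice at $u$ does not point to its parent $u'$, and then the only way to step to $u'$ is a non-listening random step, giving $N_u$ geometric with parameter $(1-\listen)/\Delta$; hence
\[
\EE[N_u]\;\ge\; q\Bigl(1-\tfrac1\Delta\Bigr)\cdot\frac{\Delta}{1-\listen}-1\;\ge\; q\Delta\Bigl(1-\tfrac1\Delta\Bigr)-1,
\]
which for $q\Delta>10$ and $\Delta\ge 2$ is at least $3$, \emph{uniformly in $\listen\in[0,1)$}. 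Plugging $\EE[N_u]\ge 3$ back into the equality $\QE[u]\reach(u')=1+L_u N_u$ (with $L_u\ge 1+\min_{v}\EE\QE[v]\reach(u)$) yields the one-line recursion
\[
\min_{u\in\Lvl_i}\EE\QE[u]\reach(u')\;\ge\;1+3\min_{v\in\Lvl_{i+1}}\EE\QE[v]\reach(v'),
\]
hence $\Omega(3^{D})$. No case-split in $\listen$, no trap construction, no RWRE/electrical-network machinery. Your 2-cycle argument as stated also does not obviously produce an exponential bound (a trap costs $\Theta(1/(1-\listen))$, but there are only $O(d)$ of them along the path), so the paper's uniform approach is both simpler and closes what would otherwise be a genuine gap in your regime near $\listen=1$.
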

Since this  algorithm is randomized, expectation is taken over both the randomness involved in sampling  advice and the possible probabilistic choices made by the algorithm.

Interestingly, when $q_u < c_1/\Delta_u$ for all vertices, this algorithm works even in the oblivious semi-adversarial model. In fact, it turns out that in the semi-adversarial model, probabilistic following algorithms are the best possible up to constant factors, as the threshold for efficient search, with respect to any algorithm, is roughly $1/ \Delta$.

\subsubsection{Results in High Probability for Walking Algorithms}\label{subsec:result-high}
We start the investigation of algorithms having a good probability guarantee with the following upper bound. The proof is presented in Section \ref{sec:walk-hp}. The $O(1)$ term in the exponent is to be understood as an absolute constant, that does not depend on either $d$ or $\eps$.
\begin{theorem}
\label{thm:walk-hp}
Let $0 < \eps < {1}/{2}$ be a constant, and suppose that $q = \Delta^{-\eps}$, and that $\Delta$ is sufficiently large ($\Delta \geq 2^{6/\eps^2}$ suffices). Let $0<\delta<1$ be a constant. Then there exists an algorithm $A'_{walk}$ in the walk model that discovers $\tau$ in $(\frac{d}{\delta})^{O(\frac 1 \eps)}$ moves with probability $1- \delta$.
Moreover, the statement holds even in the adaptive semi-adversarial variant.
\end{theorem}


\paragraph{A remark about the parameters.} {\em The restriction of $\eps < \frac{1}{2}$ is inessential to Theorem~\ref{thm:walk-hp}, and is included because the algorithms of Theorem \ref{thm:main-upper} already handle the case $\eps \ge \frac{1}{2}$.
The requirement that $\Delta$ is sufficiently large as a function of $\eps$ is natural, particularly for the semi-adversarial setting. For example, taking $\Delta \le {(3/2)}^{1/\eps}$ and keeping the assumption that $q = \Delta^{-\eps}$ will lead to  $q \ge {2}/{3}$. In the semi-adversarial setting, such levels of noise could not be overcome efficiently. To see why, consider for instance a complete binary tree. The strategy of an adversary  could be, at each faulty node, to point to a uniformly chosen neighbor, amongst the two that do not lead to $\tau$. The result would then be that  at every node, each direction of the advice is uniform,  making it useless. On the other hand, if we require $q \le \min(c, \Delta^{-\eps})$ for some suitable constant $c > 0$ that depends only on $\epsilon$, then the requirement that $\Delta$ is sufficiently large can be removed.
One can take $c = 2^{-6/\eps}$, and define $\Delta_0= 2^{6/\eps^2}$. For $\Delta \ge \Delta_0$ Theorem~\ref{thm:walk-hp} applies because $\Delta$ is sufficiently large, whereas for $\Delta \le \Delta_0$ Theorem~\ref{thm:walk-hp} applies because we may pretend that the largest degree is $\Delta_0$, and this does not affect the proofs.}\\

The upper bound shown in Theorem \ref{thm:walk-hp} is matched up to the constant in the exponent, by the following lower bound, presented in Section \ref{sec:lower-hp}.
\begin{theorem}\label{thm:lower-eps}
Let $0 < \eps < 1/2$ be an arbitrary constant, and suppose that $q = \Delta^{-\eps}$, and
that $D$ is sufficiently large, as a function of $\Delta$ and $\varepsilon$.
Consider the complete $\Delta$-ary tree of depth $D$, with the treasure placed in one of its leaves.
Let $A$ be an algorithm with success probability $1-\delta$. Then,  with constant probability, $A$ needs at least
$(\delta^{-1}D)^{\frac {1-\eps} \eps (1-o_D(*))}$ queries (and consequently also moves) before finding $\tau$.
($o_D(\cdot)$ denotes a function of $\delta$, $\epsilon$, $\Delta$ and $D$, that tends to $0$ when fixing the former parameters and letting $D$ grow to infinity.)
The statement also holds with respect to randomized algorithms.
\end{theorem}

\subsubsection{Results in the Query Model}

We start by noting that the lower bound presented in Theorem \ref{thm:main-lower}, phrased in terms of the depth $D$ of the tree, in fact already holds for query algorithms. Since the trees considered in this lower bound are complete and regular, we may replace $D$ by $\log_\Delta n$, and obtain a lower bound of  $n^{\Omega(1)}$ when $q\geq \F{1+\varepsilon}{\sqrt{\Delta-1}}$.

Concerning upper bounds on the query complexity,
we first  consider the special case that the tree is a path. For this case, we present in Section \ref{sec:path} a simple algorithm whose expected query complexity is almost tight. The path algorithm that we present and its analysis can be thought of as an adaptation of an algorithm of~\cite{bayes08} to our setting.

 Extending the path algorithm to general trees suggests the use of a separator decomposition. However, querying a separator may yield a wrong answer, and repeatedly querying a separator is useless in our model. While this can rather easily be circumvented in the case that the tree is a path, overcoming this in general trees seems to require more sophistication.
Our main idea for this purpose is to apply our walking algorithm as a subroutine to be run in the vicinity of a separator, so as to increase the reliability of detecting the direction in which to pursue the search.
Since it is easier to explain our idea in the high probability setting, we first derive a result in  that setting, and later consider results in expectation. The following is proved in Section \ref{sec:corquery}.
\begin{theorem}\label{cor:query}
Assume that $n$ is sufficiently large (as a function of $\eps$ and~$\Delta$). Under the assumptions of Theorem \ref{thm:walk-hp}, there exists an algorithm $A_{query}$ in the query model
that finds the treasure with probability  at least $1- \delta$ whose number of queries
scales like $(\delta^{-1} \log n)^{O(\varepsilon^{-1})}$.
This result holds in the adaptive semi-adversarial variant as well.
\end{theorem}

We next consider query upper bounds in expectation. The following theorem, proved in Section \ref{sec:weak-upper}, assumes that Condition \condition hold. Recall, the condition was introduced in Equation \eqref{eq:condition}, and
 can roughly be understood as saying that for all nodes $v$, $q_v < \frac{1-\eps}{\sqrt{\Delta_v}}$.

\begin{theorem}\label{thm:weak-upper}
 For every $\varepsilon>0$, there exists a  deterministic query algorithm $\algcontract$ such that if Condition \condition holds (see Eq. \eqref{eq:condition}) with parameter $\varepsilon$, then $\algcontract$ needs at most $\bigO(\sqrt{\Delta}\log \Delta \cdot \log^2 n)$ queries in expectation.
\end{theorem}
The following theorem yields a better upper bound than the one in Theorem \ref{thm:weak-upper}, but works assuming the advice parameter is the same at every node, or is bounded by the maximum degree rather than the local degree. The proof of the theorem follows the main ideas as in the proof of Theorem \ref{thm:weak-upper} but it is more technical. We therefore defer the proof to Appendix~\ref{sec:regular-upper}. Note that for constant $\Delta$ there is gap of $O(\log\log n)$ between the bound in the following theorem and that of Theorem~\ref{thm:lower-exp}. Closing this gap remains open.

\begin{theorem}\label{thm:regular-upper}
 Assume that the noise parameter at every node is bounded by $q < c/\sqrt{\Delta}$ for a sufficiently small constant $c>0$. Then there exists a deterministic query algorithm $\algregular$ for which the expected number of queries is
 $\bigO(\sqrt{\Delta}\log n \cdot \log \log n)$.
\end{theorem}



\subsection{Related Work}\label{sec:related}
In  computer science, search algorithms have been the focus of numerous works, often on a tree topology, see e.g., \cite{Laber, Newman, Onak06, Onak08}.
Within the literature on search, many works considered noisy queries \cite{Feige94, Karp07,Kempe16}.
However, it was typically assumed that noise can  be {\em resampled} at every query.
Dealing with permanent faults entails challenges that are fundamentally different from those that arise when allowing queries to be resampled. To illustrate this difference, consider the simple example  of a star graph and a constant $q<1$. Straightforward amplification can detect the target in $\bigO(1)$ queries in expectation. In contrast, in our model, it can be easily seen that $\Omega(n)$ is a lower bound for both the expected move and the query  complexities, for any constant noise parameter.

A search problem on graphs in which the set of nodes with misleading advice is chosen by an adversary was studied in \cite{Kos10,Hanusse08,Hanusse04}, as part of the more general framework of the \emph{liar models}
\cite{Aslam,BorgstromK93,CicaleseV00,Pelc02}.
Data structures with adversarial memory faults have been investigated in the so called faulty-memory RAM model, introduced in \cite{memory1}. In particular, data structures (with adversarial memory faults) that support
the same operations as search trees were studied in \cite{memory2,Brodal07}. Interestingly, the data structures developed in \cite{Brodal07} can cope with up to $O(\log n)$ faults, happening at any time during the execution of the algorithm, while maintaining optimal space and time complexity.
It is important to observe that all these models take worst case assumptions, leading to technical approaches and results which are very different from what one would expect in average-case analysis. Persistent probabilistic memory faults, as we study here, have been explicitly studied in \cite{Braverman08}, albeit in the context of sorting.
Persistent probabilistic errors were also studied in contexts of learning and optimization, see~\cite{HassidimS17}.

The noisy advice model considered in this paper actually originated in the recent biologically centered work \cite{eLife}, aiming to abstract navigation relying on guiding instructions  in the context of collaborative transport by ants.
In that work, the authors modeled ant navigation as a probabilistic following algorithm, and noticed
 that an execution of such an algorithm can be viewed as an instance of Random Walks in Random Environment (RWRE) \cite{rwreS,rwreDR}. Relying on  results from this subfield of probability theory, the authors  showed that when tuned properly, such algorithms enjoy linear move complexity on grids, provided that the bias towards the correct direction is sufficiently high.

An important theme of our work is the distinction between bounds on the expectation and bounds that hold with high probability.
When randomization is an aspect of the algorithm rather than of the input instance, there is not much of a difference between expected running time and median running time, if one is allowed to restart the algorithm several times with fresh randomness. However, there might be a substantial difference if restarting with fresh randomness is not allowed. In our model, a simple example to illustrate this phenomenon is to  consider the star graph,  and assume that a node is faulty with some small constant probability $q$. In this example, finding the treasure requires $\Omega(n)$ moves in expectation, but can be done in at most 2 moves, with probability $1-q$.
In general, for settings in which the random aspect comes up in the generation of the input instances (as in our case), generating a fresh random instance is not an option, and such a difference may or may not arise.
  In the context of designing polynomial time algorithms for distributions over instances of NP-hard problems, it is often the case that high probability algorithms are designed first, and algorithms with low expected runtime (over the same input distribution) are designed only later (see for example \cite{krivelevich-sat,Coja-Oghlan07}).

\subsection{Notation}\label{sec:notations}

Denote $p = 1-q$, and for a node $u$, $p_u = 1 - q_u$. For two nodes $u,v$, let $\pathoo{u}{v}$ denote the simple path connecting them, excluding the end nodes, and let $\pathco{u}{v}=\pathoo{u}{v}\cup \{u\}$ and $\pathcc{u}{v} = \pathco{u}{v}\cup\set{v}$.
For a node $u$, let $T(u)$ be the subtree rooted at $u$.
We denote by $\advTo{u}$ (resp. $\advAway{u}$) the set of nodes whose advice
points towards (resp. away from) $u$. By convention $u\notin\advTo{u} \cup \advAway{u}$.
Unless stated otherwise, $\log$ is the natural logarithm.

The nodes on the path from the root $\sigma$ to the treasure $\tau$ are named as $[\sigma, \tau] := \{v_0 = \sigma, v_1, \ldots, v_{d-1}, v_d = \tau\}$.
We say that node $v$ is a {\em descendant} of node $u$ if $u$ lies on the path from $\sigma$ to $v$, and $v$ is a {\em child} of $u$ if it is both a descendant of $u$ and a neighbor of $u$.


\section{Optimal Walking Algorithm in Expectation}\label{sec:walk-exp}
 In this section we prove  Theorem \ref{thm:main-upper}. At a high level, at any given time, the walking algorithm processes the advice seen so far, identifies a ``promising" node to continue from on the border of the already discovered connected component, moves to that node, and explores one of its neighbors. The crux of the matter is identifying a notion of \emph{promising} that leads to an efficient algorithm. We do so by introducing a carefully chosen \emph{prior} for the treasure location.

\subsection{Algorithm Design following a Greedy Bayesian Approach} \label{sec:intuition}
In our setting the treasure is placed by an adversary. However, we can still study algorithms that assume that it is placed according to some known distribution, and see how they measure up in our worst case setting.
A similar approach is used in \cite{bayes08}, which studies the related (but simpler) problem of search on the line. The success of this scheme highly depends on the choice of the prior distribution we take.

Suppose first that the structure of the tree is known to the algorithm, and that the treasure is placed according to some known distribution $\theta$ supported on the leaves. Let $\adv$ denote the advice on the nodes we have already visited. Aiming to find the treasure as fast as possible, a possible greedy algorithm explores the vertex that, given the advice seen so far, has the highest probability of having the treasure in its subtree.

We extend the definition of $\theta$ to internal nodes by defining $\theta(u)$ to be the sum of $\theta(w)$ over all leaves $w$ of $T(u)$. Given some $u$ that was not visited yet, and given the previously seen advice $\adv$,
the probability of the treasure being in $u$'s subtree  $T(u)$,  is:
\begin{align*}
\cprob{\tau \in T(u)}{\adv}
&= \F{\prob{\tau \in T(u)}}{\prob{\adv}} \cprob{\adv}{\tau \in T(u)}
\\ & = \F{\theta(u)}{\prob{\adv}}
 \prod_{w \in \advTo{u}} \B{p_w + \F{q_w}{\Delta_w}}
 \prod_{w \in \advAway{u}} \F{q_w}{\Delta_w}.
\end{align*}
The last factor is $q_w/\Delta_w$ because it is the probability that the advice at $w$ points exactly the way it does in $\adv$, and not only away from $\tau$.
Note that the advice seen so far
does not involve vertices in $T(u)$, because we consider a walking algorithm, and $u$ has not been visited yet.
Therefore, if $\treasure\in T(u)$ then correct advice in $\adv$ points to $u$. We ignore the term $p_w + q_w/\Delta_w$ because this will not affect the results by much, and applying a log
we can approximate the relative strength of a node by:
$$
\log(\theta(u)) + \sum_{w \in \advAway{u}} \log\BF{q_w}{\Delta_w}.
$$
We replace $q_w$ by its upper bound $1/\sqrt\Delta_w$ (consequently, the algorithm need not know the exact value of $q_w$). After scaling, we obtain our definition for the score of a vertex:
$$
\score(u) = \F{2}{3}\log(\theta(u)) - \sum_{w \in  \advAway{u}} \log(\Delta_w).
$$
When comparing two specific vertices $u$ and $v$, $\score(u) > \score(v)$ iff:
$$
\sum_{\substack{w \in \pathoo{u}{v}  \cap  \advTo{u}}} \log(\Delta_w)
-
\sum_{\substack{w \in \pathoo{u}{v} \cap  \advTo{v}}} \log(\Delta_w)
> \F{2}{3}\log\BF{\theta(v)}{\theta(u)}.
$$
This is because any advice that is not on the path between $u$ and $v$ contributes the same to both sides, as well as advice on vertices on the path that point sideways, and not towards $u$ or $v$.
Since we use this score to compare two vertices that are neighbors of already explored vertices, and our algorithm is a walking algorithm, then we will always have all the advice on this path.
In particular, the answer to whether $\score(u) > \score(v)$, does not depend on the specific choices of the algorithm, and does not change throughout the execution of the algorithm, even though the scores themselves do change.
The comparison depends only on the advice given by the environment.

Let us try and justify the score criterion at an intuitive level. Consider the case of a complete $\Delta$-ary tree, with $\theta$ being the uniform distribution on the leaves.
Here $score(u) > score(v)$ if (cheating a little by thinking of $\log(\Delta)$ and $\log(\Delta - 1)$ as equal):
$$
\bigl|\advTo{u} \cap \pathoo{u}{v}\bigr| -
\bigl|\advTo{v}  \cap \pathoo{u}{v}\bigr|
>
\F{2}{3}\bigl(d(u) - d(v)\bigr).
$$
If, for example, we consider two vertices $u,v \in T$ at the same depth, then $score(u) > score(v)$ if there is more advice pointing towards $u$ than towards $v$. If the vertices have different depths, then the one closer to the root has some advantage, but it can still be beaten.

For general trees, perhaps the most natural $\theta$ is the uniform distribution on all nodes (or just on all leaves - this choice is  actually similar). It is also a generalization of the example above.
Unfortunately, however, while this works well on the complete $\Delta$-ary tree,  the uniform distribution fails on other (non-complete) $\Delta$-ary trees (see a preliminary version of this work \cite{advice1} for details).

\subsection{Algorithm $\algzplus$}
In our context, there is no distribution over treasure location and we are free to choose  $\theta$ as we like. Take $\theta$ to be the distribution defined by a simple random process. Starting at the root, at each step, walk down to a child uniformly at random, until reaching a leaf. For a leaf $v$, define $\theta(v)$ as the probability that this process eventually reaches $v$. Our extension of $\theta$ can be interpreted as $\theta(v)$ being the probability that this process passes through $v$. Formally, $\theta(\sigma)=1$, and
$\theta(u) = (\Delta_\sigma \prod_{w \in \pathoo{\sigma}{u}} (\Delta_w - 1))^{-1}$.
It turns out that this choice, slightly changed, works remarkably well, and
gives an optimal algorithm in noise conditions that  practically match those of our lower bound.
For a vertex $u \neq \sigma$, define:
\begin{equation}\label{eq:beta}
\beta(u) = \prod_{w \in \pathco{\sigma}{u}} \Delta_w.
\end{equation}
It is a sort of approximation of $1/\theta(u)$, which we prefer for technical convenience. Indeed,
 for all $u$, $1/\beta(u) \leq \theta(u)$.
 A useful property of this $\beta$ (besides the fact that it gives rise to an optimal algorithm) is that to calculate $\beta(v)$ (just like $\theta$), one only needs to know the degrees of the vertices from $v$ up to the root.

In the walking algorithm, if $v$ is a candidate for exploration, the nodes on the path from $\sigma$ to $v$ must have been visited already and so the algorithm does not need any a priori knowledge of the structure of the tree.
 The following claim will be soon useful:
\begin{claim}\label{claim:exp-leaves}
The following two inequalities hold for every $c<1$:
$$
\sum_{v \in T} \F{c^{d(v)}}{\beta(v)} \leq \R{1-c},~~\, \,
\sum_{v \in T} \F{d(v)c^{d(v)}}{\beta(v)} \leq \F{c}{(1-c)^2}.
$$
\end{claim}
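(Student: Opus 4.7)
The plan is to reduce both inequalities to a single combinatorial fact: for every depth $k \geq 0$,
\[
S_k \;:=\; \sum_{v\,:\,d(v)=k} \frac{1}{\beta(v)} \;\leq\; 1.
\]
Once this is established, both claims follow by swapping the order of summation and plugging in the standard geometric identities $\sum_{k\geq 0} c^k = 1/(1-c)$ and $\sum_{k\geq 0} k c^k = c/(1-c)^2$:
\[
\sum_{v\in T} \frac{c^{d(v)}}{\beta(v)} = \sum_{k\geq 0} c^k S_k \leq \frac{1}{1-c}, \qquad \sum_{v\in T} \frac{d(v)\, c^{d(v)}}{\beta(v)} = \sum_{k\geq 0} k c^k S_k \leq \frac{c}{(1-c)^2}.
\]

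To prove $S_k \leq 1$, I would proceed by induction on $k$. The base case $k=0$ contributes only the root and is trivial. For the inductive step, each vertex $v$ at depth $k$ has a unique parent $u$ at depth $k-1$, and by definition of $\beta$ one has $\beta(v) = \beta(u)\cdot \Delta_u$ (if $u \neq \sigma$) or $\beta(v) = \Delta_\sigma$ (if $u=\sigma$). Since a non-root vertex $u$ has at most $\Delta_u - 1$ children, and the root has exactly $\Delta_\sigma$ children, the contribution to $S_k$ coming from the children of a fixed $u$ is
\[
\sum_{v\,:\,\text{parent}(v)=u} \frac{1}{\beta(v)} \;\leq\; \frac{\Delta_u}{\beta(u)\cdot \Delta_u} \;=\; \frac{1}{\beta(u)}.
\]
Summing over all $u$ with $d(u)=k-1$ and invoking the induction hypothesis yields $S_k \leq S_{k-1} \leq 1$.

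Conceptually, this is the statement that $1/\beta(v)$ behaves like a subprobability mass on the tree, consistent with the remark just before the claim that $1/\beta(v) \leq \theta(v)$ where $\theta(v)$ is the hitting probability of $v$ under the downward uniform random walk. The only mild obstacle is bookkeeping around the root: one needs the inductive inequality to remain valid at $u=\sigma$ even though $\sigma$ has $\Delta_\sigma$ children rather than $\Delta_\sigma - 1$, which is exactly why the bound $\Delta_u / (\beta(u)\Delta_u)$ above is stated with $\Delta_u$ rather than $\Delta_u - 1$. No delicate estimate is required beyond this.
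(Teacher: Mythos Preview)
Your proof is correct. The key fact you isolate, $S_k := \sum_{v:\,d(v)=k} 1/\beta(v) \le 1$, is exactly what drives the paper's argument as well, so the two proofs are close cousins; the difference is only in how this fact is obtained and used.

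The paper does not prove $S_k \le 1$ by induction. Instead it invokes the inequality $1/\beta(v) \le \theta(v)$ and interprets $\sum_v \theta(v)\,c^{d(v)}$ as the expected number of ``points'' collected by the downward random walk defining~$\theta$, where passing through a vertex at depth $k$ awards $c^k$ points. Since any realization of the walk visits at most one vertex per depth, the total is deterministically at most $\sum_{k\ge 0} c^k = 1/(1-c)$, and the expectation inherits this bound. Your induction is essentially the same statement unpacked combinatorially: each parent's $1/\beta$-mass dominates the total $1/\beta$-mass of its children. Your route is slightly more elementary (no probabilistic coupling needed), while the paper's is a touch more conceptual and explains \emph{why} $\beta$ was chosen this way. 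One small bookkeeping remark: since the paper defines $\beta(u)$ only for $u\neq\sigma$, your base case implicitly uses the empty-product convention $\beta(\sigma)=1$, which is consistent with $1/\beta(\sigma)\le\theta(\sigma)=1$ but worth stating explicitly.
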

\begin{proof}
To prove the first inequality,
follow the same random walk defining $\theta$. Starting at the root, at each step choose uniformly at random one of the children of the current vertex. Now, while passing through a vertex $v$ collect $c^{d(v)}$ points.
No matter what choices are made, the number of points is at most $1 + c + c^2 + ...= 1/(1-c)$. On the other hand, $\sum_{v \in T} \theta(v)c^{d(v)}$ is
the expected number of points gained. The result follows since $1/\beta(v) \leq \theta(v)$.
The second inequality is derived similarly, using the fact that
$c + 2c^2 + 3c^3 + \ldots = c/(1-c)^2$.
\end{proof}
For a vertex $u\in T$ and previously seen advice $\adv$ define:
\begin{equation}\label{eq:score}
\score(u) = \F{2}{3}\log\BF1{\beta(u)}
-   \sum_{\substack{w \in \advAway{u}  }} \log(\Delta_w).
\end{equation}
This is similar to the definition of $\score(u)$ given in Section~\ref{sec:intuition}, except that $\theta(u)$ is now replaced by its approximation $\frac{1}{\beta(u)}$.

Algorithm $\algzplus$ keeps track of all vertices that are children of the vertices it explored so far, and repeatedly walks to and then explores the one with highest score according to the current advice, breaking ties arbitrarily.
As stated in the introduction, the algorithm does not require prior knowledge of either the tree's structure, or the values of $\Delta$, $q$, $D$ or $n$.
\subsection{Analysis}

Recall the definition of Condition \condition from Equation \eqref{eq:condition}. The next lemma provides a large deviation bound tailored to our setting.

\begin{lemma}\label{lem:maintec}
Consider independent random variables $X_1, \ldots, X_\ell$, where $X_i$ takes the values $(-\log \Delta_i, 0, \log \Delta_i)$ with respective probabilities $( \frac{q_i}{\Delta_i}, q_i(1-\frac{2}{\Delta_i}),p_i+\frac{q_i}{\Delta_i})$, for parameters $p_i,q_i = 1-p_i$ and $\Delta_i >0$. Assume that  Condition~\condition holds for some $\varepsilon>0$. Then for every integer (positive or negative) $m$,
$$
\prob{\sum_{i = 1}^{\ell} X_i \leq m}
\leq
 e^\F{3m}{4} \constq^\ell \prod_{i = 1}^{\ell} \R{\sqrt{\Delta_i}}.
$$
\end{lemma}
\begin{proof}
For every $s \in \RR$,
\eq{
\prob{\sum_{i=1}^{\ell} X_i \leq \thresh}
& =
\prob{e^{s\sum_{i=1}^{\ell} -X_i} \geq e^{-s\thresh}}
\leq
{e^{s\thresh}}{\expct{e^{s\sum_i -X_i}}}
=
{e^{s\thresh}}{\prod_i \expct{e^{-sX_i}}}
\\ & =
e^{s\thresh} \prod_{i=1}^{\ell} \B{\F{p_i+\F {q_i} {\Delta_i}}{e^{\log(\Delta_i)s}} + q_i\B{1-\F 2{\Delta_i}} + \F{q_i}{\Delta_i}e^{\log(\Delta_i)s}}
\\ &
\leq
{e^{s\thresh}} \prod_{i=1}^{\ell} \B{\R{\Delta_i^s} + q_i + q_i \Delta_i^{s-1}}.
}
We take $s = \F34$, and get:
\[\prob{\sum_{i=1}^{\ell} X_i \leq \thresh} \leq {e^\F{3\thresh}{4}} \prod_{i=1}^{\ell} \B{\Delta_i^{-\F34} + q_i + q_i\Delta_i^{-\R4}} \leq {e^\F{3\thresh}{4}} \prod_{i=1}^{\ell} \F{1 - \varepsilon}{\sqrt{\Delta_i}},
\]
where for the last step we used Condition ($\star$) which says $q_i < \F{1 - \varepsilon - \Delta_i^{-\R{4}}}{\sqrt\Delta_i + \Delta_i^\R{4}}$ implying that:
\eq{
q_i \Delta_i^\R2 + q_i \Delta_i^\R4  + \Delta_i^{-\R4} < 1 - \varepsilon \mbox{,~~and hence}\\
\Delta_i^{-\F34} + q_i + q_i\Delta_i^{-\R4} < \F{1 - \varepsilon}{\sqrt{\Delta_i}}.
}
\end{proof}
The next theorem 
establishes  Theorem \ref{thm:main-upper}.
\begin{theorem}\label{thm:zplus}
Assume that Condition~\condition holds for some fixed $\varepsilon>0$. Then $\algzplus$ requires only $\bigO(d\sqrt{\Delta})$ moves in expectation. The constant hidden in the $\bigO$ notation only depends polynomially on $1/\varepsilon$.
\end{theorem}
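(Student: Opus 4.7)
The plan is to prove the theorem in two stages: first bound the expected number of nodes explored by $\algzplus$, and then bound the total number of edge traversals in terms of it. Let $N$ be the total number of explored nodes. Since the explored set always forms a connected subtree containing $\source$, walks between consecutive explorations stay inside it, and an amortized analysis over the edges of this explored subtree bounds the total walk length by $\bigO(N)$; hence it suffices to prove $\EE[N] = \bigO(d\sqrt{\maxdeg})$.

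Write $\EE[N] \leq d+1 + \sum_{v \notin \pathcc{\source}{\treasure}} \prob{v \text{ is explored}}$. The key algorithmic observation is that at the moment $\algzplus$ picks $v$, every ancestor of $v$ is already explored (since the explored region is a rooted subtree), and in particular the LCA $a = a(v)$ of $v$ and $\treasure$ is already explored. Hence the topmost unexplored node $c$ of $\pathcc{\source}{\treasure}$ lies at or strictly below $a$ on the path to $\treasure$, and every node of $\pathoo{v}{c}$ is already explored by then. Consequently the observed-score comparison that $\algzplus$ performs coincides with the comparison using the fully revealed advice, and a union bound over the at most $d$ possible values of $c$ gives
\[
\prob{v \text{ is explored}} \leq \sum_{c \in \pathcc{\source}{\treasure} \setminus \{\source\}} \prob{\score(v) \geq \score(c)}.
\]

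To bound a single term, observe that in $\score(v) - \score(c)$ every node not in $\pathoo{v}{c}$ cancels, while a node $w \in \pathoo{v}{c}$ contributes $-\log\maxdeg_w$ when its advice points toward the $c$-side, $+\log\maxdeg_w$ toward the $v$-side, and $0$ sideways, with probabilities precisely matching those of the variable $X_i$ in Lemma \ref{lemMaintec}; the deterministic shift equals $\F{2}{3}\log(\beta(c)/\beta(v))$. Plugging $m = \F{2}{3}\log(\beta(v)/\beta(c))$ into Lemma \ref{lemMaintec} and using that $\prod_{w \in \pathoo{v}{c}} \maxdeg_w$ telescopes with $\sqrt{\beta(c)/\beta(v)}$, I obtain
\[
\prob{\score(v) \geq \score(c)} \leq \frac{(1-\varepsilon)^{|\pathoo{v}{c}|}}{\beta_p(v)\,\sqrt{\maxdeg_{a}}},
\]
where $p$ is the child of $a$ on the way to $v$ and $\beta_p$ is the analogue of $\beta$ with $p$ as root.

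The last step is a clean geometric summation. For fixed $a$ at depth $d_a$ and a fixed child $p$ of $a$ not lying on the way to $\treasure$, Claim \ref{claim:exp-leaves} applied to $T(p)$ gives $\sum_{v \in T(p)} (1-\varepsilon)^{d_p(v)+1}/\beta_p(v) = \bigO(1/\varepsilon)$, where $d_p(v)$ is the depth of $v$ in $T(p)$. A further geometric sum over the $\bigO(d)$ choices of $c$ contributes $\bigO(1/\varepsilon)$; summing over the at most $\maxdeg_a$ choices of $p$ yields $\maxdeg_a/\sqrt{\maxdeg_a} \leq \sqrt{\maxdeg}$; and summing over the $d$ ancestors $a \in \pathcc{\source}{\treasure}$ contributes a factor $d$. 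Altogether $\EE[N] = \bigO(d\sqrt{\maxdeg}/\varepsilon^2)$, which together with the walking-cost bound yields the theorem. The main obstacle is the third step: verifying that the distribution of each intermediate node's contribution matches exactly the form required by Lemma \ref{lemMaintec}, and that the product of path-degrees telescopes against the $\beta$-ratio cleanly enough so that the subsequent geometric sums via Claim \ref{claim:exp-leaves} collapse to a single $\sqrt{\maxdeg}$ factor, matching the lower bound up to constants.
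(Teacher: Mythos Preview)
Your bound on the expected number of explored nodes $\EE[N]=\bigO(d\sqrt{\Delta}/\varepsilon^2)$ is correct, and the organization (sum over off-path vertices $v$, then union-bound over the first still-unexplored path vertex $c$) is a legitimate dual to the paper's phase-by-phase decomposition $E_k=\bigO(\sqrt{\Delta})$. The application of Lemma~\ref{lemMaintec} and the telescoping of $\sqrt{\beta(c)/\beta(v)}\cdot\prod_{w\in\pathoo{v}{c}}\Delta_w^{-1/2}$ down to $1/(\beta_p(v)\sqrt{\Delta_a})$ are fine.

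The gap is the first sentence: ``an amortized analysis over the edges of the explored subtree bounds the total walk length by $\bigO(N)$.'' This is not true for an arbitrary exploration order of a rooted subtree (take a complete binary tree explored level by level but in a random order within each level: consecutive explored nodes at depth $k$ are at expected distance $\Theta(k)$, so the walk is $\Theta(N\log N)$), and you give no argument that the specific order induced by $\algzplus$ avoids this. The paper does \emph{not} reduce to a node count; instead it keeps the distance weight throughout: between reaching $u_{k-1}$ and $u_k$ the walk is bounded by $2\sum_{u}\prob{\score(u)\ge\score(u_k)}\cdot d(u_k,u)$, and this extra $d(u_k,u)$ factor is precisely why the \emph{second} inequality of Claim~\ref{claim:exp-leaves} (the one with the $d(v)$ weight) is invoked, producing the $1/\varepsilon^3$ rather than $1/\varepsilon^2$ dependence.

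Your argument is easily repaired the same way: instead of summing $\prob{v\text{ explored}}$, sum $\prob{\score(v)\ge\score(c)}\cdot d(v,c)$ over $v$ and $c$. Splitting $d(v,c)=d_p(v)+1+d(a,c)$, the $d_p(v)$ part is handled by the second inequality of Claim~\ref{claim:exp-leaves}, and the $d(a,c)$ part by the weighted geometric series $\sum_j j(1-\varepsilon)^j=\bigO(1/\varepsilon^2)$; the final bound becomes $\bigO(d\sqrt{\Delta}/\varepsilon^3)$, matching the paper.
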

\begin{proof}
Denote the vertices on the path from $\source$ to $\treasure$ by $\sigma = u_0, u_1, \ldots, u_d=\treasure$ in order. Denote by $E_k$ the expected number of moves required to reach $u_{k}$ once $u_{k-1}$ is reached. We will show that for all $k$,  $E_k = \bigO(\sqrt \Delta)$, and by linearity of expectation this concludes the proof.

Once $u_{k-1}$ is visited, $\algzplus$ only goes to some of the nodes that have score at least as high as  $u_k$. We can therefore bound $E_k$ from above by assuming we go through all of them, and this expression does not depend on the previous choices of the algorithm and the nodes it saw before seeing $u_k$.
The length of this tour is bounded by twice the sum of distances of these nodes from $u_k$. Hence,
$$
E_k  \leq 2\sum_{i=1}^k
\sum_{u \in C(u_i)}
\prob{\score(u) \geq \score(u_k)} \cdot \dist(u_k, u),
$$
where $C(u_k) = T(u_{k-1}) \setminus T(u_{k})$, and so $\bigcup_{i=1}^k C(u_i) = T \setminus T(u_k)$.
Recall from Eq.~\eqref{eq:score} that scores are defined so that
\begin{align*}
\score(u_k)  \leq \score(u)
\\ \Longleftrightarrow \\
\sum_{w \in  \advAway{u}} \log(\Delta_w) -
\sum_{w \in  \advAway{u_k}} \log(\Delta_w)
\leq
\F{2}{3}\B{\log\BF1{\beta(u)} -  \log\BF1{\beta(u_k)}}
\\ \Longleftrightarrow \\
\sum_{w \in \pathoo{u}{u_k}} \begin{cases}
\log(\Delta_w) & w \text{ points towards } u_k\\
-\log(\Delta_w) & w \text{ points towards } u \\
0 & \text{otherwise} \\
\end{cases}
\ \
\leq
\F{2}{3}\log\BF{\beta(u_k)}{\beta(u)}
\end{align*}
Indeed, a vertex $x$ on the path should point towards $u_k$: this happens with probability $p_x + q_x/\Delta_x$. Otherwise, it points towards $u$ with probability $q_x/\Delta_x$, and elsewhere with probability $q_x(1 - 2/\Delta_x)$.
Denoting $c = 1 - \varepsilon$,
setting $m=\F{2}{3}\log(\beta(u_k)/\beta(u))$, and applying Lemma \ref{lem:maintec} we can upper bound the probability of this happening:
\eq{
\F {E_k} 2 &\leq
\sum_{i=1}^k
\sum_{u \in C(u_i)}
e^{\F{3}{4}\cdot\F{2}{3}\log\BF{\beta(u_k)}{\beta(u)}}\cdot {c^{\dist(u_k, u)-1}}
\sqrt{
\prod_{v \in \pathoo{u}{u_k}} \R{\Delta_v}
}\cdot \dist(u_k,u)
\\ & =
\F{1}{c}\sum_{i=1}^k
\sum_{u \in C(u_i)}
\F{c^{\dist(u_k, u)}}{\sqrt{\F{\beta(u)}{\beta(u_k)}}}
\sqrt{\F{\Delta_{u_i}}{
\F{\beta(u_k)}{\beta(u_i)} \cdot \F{\beta(u)}{\beta(u_i)}
}} \cdot \dist(u_k,u)
\\ & \leq
\F{\sqrt{\Delta}}{c}
\sum_{i=1}^{k}
c^{\dist(u_k,u_i)}
\sum_{u \in C(u_i)}
c^{\dist(u_i,u)}\F{\beta(u_i)}{\beta(u)} \cdot \Big(\dist(u_k,u_i) + \dist(u_i,u)\Big)
.}
By Claim \ref{claim:exp-leaves}, applied to the tree rooted at $u_i$, we get:
\[\sum_{u \in C(u_i)} \F{c^{\dist(u_i,u)} \beta(u_i)}{\beta(u)} < \R{1-c}, ~~~~\mbox{and}~~~~
\sum_{u \in C(u_i)} \F{c^{\dist(u_i,u)} \beta(u_i)}{\beta(u)} \dist(u_i,u) < \F{c}{(1-c)^2}.\]
And so:
\eq{
\F {E_k} 2 &\leq
\F{\sqrt\Delta}{c(1-c)}
\sum_{i=1}^{k}
c^{\dist(u_k,u_i)}\dist(u_k,u_i)
+\F{\sqrt\Delta}{(1-c)^2}
\sum_{i=1}^{k}
c^{\dist(u_k,u_i)}
\\ & \leq
\F{(1+c)\sqrt\Delta}{(1-c)^3}
\leq \F{2\sqrt{\Delta}}{\varepsilon^3}
= \bigO\left( \sqrt{\Delta}\right),}
where we again used the equality $c + 2c^2 + 3c^3 + \ldots = c/(1-c)^2$.
\end{proof}

\section{Lower bounds in Expectation}\label{sec:lower-exp}

Several of our lower bounds will invoke the following straightforward observation (whose proof we omit).
\begin{observation}\label{choosing}
Any randomized algorithm trying to find a treasure chosen uniformly at random between $k$ identical objects will need an expected number of queries that is at least $(k+1)/2$.
\end{observation}

\subsection{The random noise model}

We prove here Theorem \ref{thm:main-lower}. Namely, that
for every fixed $\varepsilon>0$, and for every complete $\Delta$-ary tree, if $q\geq \F{1+\varepsilon}{\sqrt{\Delta-1}}$, then every randomized search algorithm has query (and move) complexity which is exponential in the depth $D$ of the treasure.

\begin{proof}
Our lower bound holds also in the query model, as we assume that the algorithm gets as input the full topology of the tree. Moreover, to simplify the proof, we give the algorithm access to additional information, and prove the lower bound even against this stronger algorithm.
The algorithm is strengthened in two respects: for every internal (non-leaf) node, the algorithm is told whether the node is faulty, and moreover, if the internal node is non-faulty, the advice of the node is revealed to the algorithm. This information for all internal nodes is revealed to the algorithm for free, without the algorithm making any query. Given that the algorithm is told which nodes are faulty, we may assume that faulty nodes have no advice at all, because faulty advice is random and hence can be generated by the algorithm itself.

Given the pattern of faults, a leaf is called {\em uninformative} if the whole root to leaf path is faulty. Denote the number of leaves by $N = \Delta (\Delta-1)^{D-1}$, and the expected number of uninformative leaves by $\mu = N q^D$. Let $p_i$ denote the probability that there are exactly $i$ uninformative leaves.

The adversary places the treasure at a random leaf. Conditioned on there being $i$ uninformative leaves, the probability of the treasure being at an uninformative leaf is exactly $\frac{i}{N}$. If the treasure is located at an uninformative leaf, the advice of all nonfaulty internal nodes points to the root (recall that there is no advice on faulty nodes), and the algorithm can infer that the treasure is at an uninformative leaf. As all uninformative leaves are equally probable, the expected number of queries that the algorithm needs to make is exactly $\frac{i+1}{2}$. Hence the expected number of queries (in this stronger model) is:

$\sum_i \frac{i+1}{2} \frac{i}{N} p_i > \frac{1}{2N} \sum_i i^2 p_i \ge \frac{1}{2N}\mu^2 = \frac{Nq^{2D}}{2}$

\noindent In the last inequality we used the fact that $E[X^2] \ge (E[X])^2$ for every random variable $X$. (In our use the random variable is $i$, the number of uninformative leaves.)

Hence if $q\geq \F{1+\varepsilon}{\sqrt{\Delta-1}}$, the expected number of queries is at least $\frac{1}{2}(1+\varepsilon)^{2D}$.
\end{proof}

\subsection{The Semi-Adversarial Variant}\label{sec:adv_lb}

Recall that in contrast to the purely probabilistic model, in the oblivious semi-adversarial model,
a faulty node $u$ no longer points to a neighbor chosen uniformly at random, and instead, the neighbor $w$ which such a node points to is chosen by an adversary. Importantly, for each node $u$, the adversary must specify its potentially faulty advice $w$, before it is known which nodes will be faulty. In other words, first, the adversary specifies the faulty advice $w$ for each node $u$, and then the environment samples which node is faulty.

In the semi-adversarial noise model, if $q>1/(\Delta-1)$ then any algorithm must have expected query and move complexity that are exponential in the depth $D$.

\begin{theorem}\label{thm:lb_adv}
Consider an algorithm in the oblivious semi-adversarial model.
On the complete $\Delta$-ary tree of depth $D$, the expected number of queries to find the treasure is $\Omega\left( (q \Delta)^D \right)$.
The lower bound holds even if the algorithm has access to the advice of all internal nodes in the tree.
\end{theorem}

\begin{proof}
Consider the complete $\Delta$-ary tree and restrict attention to the cases where the treasure is located at a leaf. The adversary behaves as follows. For every advice it gets a chance to manipulate, it always make it point towards the root.
With probability $q^D$ the adversary gets to choose all the advice on the path between the root and the treasure. Any other advice points towards the root as well (either because it was correct to begin with or because it was set by the adversary).
Hence with probability $q^D$ the tree that the algorithm sees is the same regardless of the position of the treasure. When this happens, the expected time to find the treasure is $\Omega(\Delta^D)$ (linear in the number of leaves), by Observation~\ref{choosing}.
\end{proof}

\section{Probabilistic Following Algorithms}\label{sec:PF-adv}

In this section we present our results on the probabilistic following algorithms described in the introduction. As mentioned, such algorithms can perform well also in the more difficult oblivious semi-adversarial setting. 

 Recall that a {\em Probabilistic Following} ($\RL$) algorithm is specified by a {\em listening} parameter $\listen \in (0,1)$. At each step, the algorithm ``listens'' to the advice with probability $\listen$ and takes a uniform random step otherwise. The first item in the next theorem states that if the noise parameter is smaller than $c/\Delta$ for some small enough constant $0<c<1$, then there exists a listening parameter $\listen$ for which Algorithm $\RL$  achieves $\bigO(D)$ move complexity. Moreover, this result holds also in the oblivious semi-adversarial model. Hence, together with Theorem \ref{thm:lb_adv}, it implies that in order to achieve efficient search, the noise parameter threshold for the semi-adversarial model is $\Theta(1/\Delta)$.

\begin{theorem}\label{thm:rw}
\begin{enumerate}
    \item
    Assuming $q_u < 1/(10\Delta_u)$ for every $u$, then $\RL$ with parameter $\lambda = 0.7$ finds the treasure in less than $100 D$ expected steps, even in the oblivious semi-adversarial setting.
    \item
    Consider the complete $\Delta$-ary tree and assume that $q> {10}/{\Delta}$. Then for every  choice of $\lambda$ the hitting time of the treasure by $\RL$ is exponential in the depth of the tree, even assuming the faulty advice is drawn at random.
\end{enumerate}
\end{theorem}
\begin{proof}
Our plan is to show that the expected time to make one step in the correct direction is $\bigO(1)$, from any starting node. Conditioning on the advice setting, we make use of the Markov property to relate these elementary steps to the total travel time. The main delicate point in the proof stems from dealing with two different sources of randomness. Namely the randomness of the advice and that of the walk itself.

In this section, it is convenient to picture the tree as rooted at the target node $\treasure$. For every node $u$ in the tree, we denote by $u'$ the parent of $u$ with respect to the treasure. With this convention, correct advice at a node $u$ points to $u'$, while incorrect advice points to one of its children.
The fact the walk moves on a tree means that for a given advice setting, the expected (over the walk) time it takes to reach $u'$ from $u$ can be written conveniently as a product of two independent random variables: one random variable that depends only on the advice at $u$, and the other depends only on the advice on the set of $u$'s descendants.

We denote by $\reach(u)$ the time it takes to reach node $u$. We also introduce a convention regarding the notation used to denote expectations.
In our setting there are two sources of randomness, the first being the randomness used in drawing the advice, and the second being the randomness used in the walk itself.
We write $\EE$ for expectation taken over the advice, while we use $\QE[u]$ to denote expectation over the walk, conditioning on $u$ being the starting node.
Observe that $\QE[u](\reach(v))$ is a random variable with respect to the advice, whereas $\EE \QE[u](\reach(v))$ is just a number.

The following is the central lemma of this section.
\begin{lemma}\label{lem:ind}
Assume that for every vertex $u$, $q_u  < 1/(10\Delta_u)$, and $\lambda =0.7$. Then for all nodes $u$, $\EE \QE[u]\reach(u') \leq 100$. The result holds also in the oblivious semi-adversarial model.
\end{lemma}

Let us now see how we can conclude the proof of the first item in Theorem \ref{thm:rw}, given the lemma. Consider a designated source $\source$.
Let us denote by $\source=u_d,u_{d-1}, \ldots, u_0 = \target$ the nodes on the path from $\source$ to $\treasure$.
Let $\period_i$ be
the random variable indicating the time it takes to reach $u_{i-1}$ after $u_i$
has been visited for the first time.
The  time to reach $\treasure$ from $\source$ is precisely $\sum_{i=1}^{\dist(\source,\treasure)} \period_i$.
Hence, the expected  time to reach $\treasure$ from $\source$ is $\sum_{i=1}^{\dist(\source,\treasure)} \EE[\QE[\source]\period_i]~.$
Conditioning on the advice setting, the process is a Markov chain and we may write
\eq{
\QE[\source]\period_i = \QE[u_{i}]\reach(u_{i-1}).}
Taking expectations over the advice ($\EE$),
under the assumptions of Lemma \ref{lem:ind}, it follows that
$ \EE(\QE[\source]\period_i) \leq 100$, for every $i\in[\dist(\source,\treasure)]$. And this immediately implies a bound of $100 \cdot \dist(\source,\treasure)$.

\begin{proof}[Proof of Lemma \ref{lem:ind}]
We start with partitioning the nodes of the tree according to their distance from the root~$\treasure$.
More precisely, for $i=1,2,\ldots, \depth$, where $\depth$ is the depth of the tree, let us define
 $$
 \Lvl_i := \{ u \in T: \dist(u,\treasure)=i \}~.
 $$
The nodes in $\Lvl_i$  are referred to as {\em level-$i$} nodes.
We treat the statement of the lemma for nodes $u \in \Lvl_i$ as an induction hypothesis, with $i$ being the induction parameter.
The induction goes backwards, meaning we assume the assumption holds at level $i+1$ and show it holds at level $i$. The case of the maximal level (base case for the induction) is easy since, at a leaf the walk can only go up and so if $u$ is a leaf $\EE \QE[u](\reach(u')) = 1 < 100$.

 Assume now that $u \in \Lvl_i$. We first condition on the advice setting.
A priori, $\QE[u]t(u')$ depends on the advice over the full tree, but in fact it is easy to see that only advice at layers $\geq i$ matter. Recall from Markov Chain theory that an \emph{excursion} to/from a point is simply the part of the walk  between two visits to the given point.
We denote $L_u$ the average (over the walk only) length of an excursion from $u$ to itself that does not go straight to $u'$, and we write $N_{u}$ to denote the expected (over the walk only) number of excursions before going to $u'$. We also refer to this number as a number of \emph{attempts}. The variable $N_{u}$ can be $0$ if the walk goes directly to $u'$ without any excursion.
We decompose $\reach(u')$ in the following standard way, using the Markov property
\eql{
\QE[u]\reach(u') = 1+ L_u \cdot N_{u}.
}{eq:firstWald}
Indeed, the expectation $\QE[u]\reach(u')$ can be seen as the expectation (over the walk) of $1+\sum_{i=1}^T Y_i$ where the $Y_i$'s are the lengths of each excursion from $u$ and $T$ is the (random) number of such excursions before hitting $u'$. The term $1+$ accounts for the step from $u$ to $u'$. The event $\{T \geq t \}$ is independent of $Y_1, \ldots, Y_t$ and so using
Wald's identity we have that
$\QE[u]\reach(u')  = 1+\QE[u] T \cdot \QE[u] Y_1$. The term $\QE[u] T$ is equal to $N_u$ (by definition) while $\QE[u] Y_1$ is equal to $L_u$ (by definition).

We now want to average equality \eqref{eq:firstWald}, which is only an average over the walk, by taking the expectation over all advice in layers $\geq i$. To this aim, we write $L_u$ as follows
\eq{
L_u = 1+\sum_{v\neq u', v\sim u} p_{u,v}\QE[v] \reach(u),
}
where we write $u\sim v$ when $u$ and $v$ are neighbors in the tree and $p_{u,v}$ is the probability to go straight from $u$ to $v$ given the advice setting.
By assumption on the model, $\QE[v]\reach(u)$ depends on the  advice at layers $\geq i+1$ only, if we start at a node $v \in \Lvl_{i+1}$, while
both $p_{u,v}$ and $N_{u}$ depend only on the advice at layer $=i$ of the tree. This is true also in the oblivious semi-adversarial model.
Hence when we average, we can first average over the advice in layers $> i$ to obtain, denoting $\EE^{>i}$, the expectation over the layers $>i$,
\eql{
\EE^{>i}\QE[u]\reach(u') &= 1+ \left( 1+\sum_{v\neq u', v\sim u} p_{u,v} \EE^{>i}\QE[v]\reach(u)\right) N_{u},\nonumber\\
&= 1+ \left( 1+\sum_{v\neq u', v\sim u} p_{u,v} \EE\QE[v]\reach(u)\right) N_{u}.
}{eq:decomptau}
and using the fact that,$
\sum_{v \neq u'} p_{u,v} \leq 1,$
together with the induction assumption at rank $i+1$,
we obtain
\eq{
\EE^{>i}\QE[u]\reach(u') & \leq
1+ \left( 1+100\right) N_{u}.
}
Averaging over the layer $i$ of advice we obtain
\eq{
\EE\QE[u]\reach(u') & \leq
1+ 101 \EE N_{u}.
}
It only remains to analyse the term $\EE N_{u}$. If the advice at $u$ is correct, which happens with probability $p_u = 1 - q_u$, then the number of attempts follows a (shifted  by $1$) geometric law with parameter
$\listen +\frac{ (1-\listen)}{\Delta_u}$. In words, when the advice
points to $u'$ which happens with probability at most $1$, the walker can go to the correct node either because she listens to the advice, which happens with probability $\listen$, or because she did not listen, but still took the right edge, which happens with probability $\frac{ (1-\listen)}{\Delta_u}$. Similarly, when the advice points to a node $\neq u'$, which happens with probability at most $q_u$, then $N_{u}$ follows a geometric law (shifted by $1$) with parameter $\frac{ (1-\listen)}{\Delta_u}$. The conclusion is that
\eql{
\EE N_{u} &\leq  \left(\frac{1}{\listen +\frac{ (1-\listen)}{\Delta_u}}-1\right)
+ q_u \left(\frac{\Delta_u}{1-\listen}-1\right)
\nonumber \\ &\leq
\frac{1}{\listen} - 1 + \frac{q_u \Delta_u}{1-\listen}
}{eq:nu}
And so it follows that
\eq{
\EE\QE[u]\reach(u') \leq
1+ 101 \cdot
\left(\frac{1}{\listen} - 1 + \frac{q_u \Delta_u}{1-\listen} \right)
}
Hence if $q_u\Delta_u < 0.1$ and we choose $\lambda = 0.7$ (this choice is a tradeoff between two considerations: $\lambda$ bounded away above $\frac{1}{2}$ is required so that in expectation we make constant progress towards the treasure when the advice is correct, and $\lambda$ bounded away below~1 is required so as to have probability larger than $q$ for advancing when the advice is incorrect), we see that $\EE N_{u}<0.9 $. This is because
\[
\frac{1}{\listen} - 1 + \frac{0.1}{1-\listen}
\leq
\F{10}{7} - 1 + \frac{0.1}{1-0.7}  < 0.9
\]
Hence it follows that $
\EE\QE[u]\reach(u') \leq
1+0.9 \cdot 101< 100.$
By our (backwards) induction, we have just shown that, if $q< \frac{1}{10\Delta}$ and we set $\lambda = 0.7$ then
for all nodes $u$ in the tree
\eq{
\EE\QE[u]\reach(u') < 100.
}
This concludes the proof of Lemma \ref{lem:ind} and hence also of the first part of Theorem \ref{thm:rw}.
\end{proof}

Let us explain how the lower bound in the second part of Theorem \ref{thm:rw} is derived in the case that $q\Delta>10$.
We assume the complete $\Delta$-ary tree under our the uniform noise model.
With probability $q$ there is fault at $u$ and with probability $1-\F 1 \Delta$ the advice does not point to $u'$.
Recall that $N_{u}$ denotes the expected (over the walk only) number of excursions starting from $u$ before going to $u$'s parent.
Then, $N_u$ follows a geometric law with parameter $\F {1-\lambda}{\Delta}$.
Hence
\begin{align*}
\EE(N_u) \geq q\Delta\left(1-\F 1 \Delta\right)  \frac{1}{1-\listen} - 1 \geq \F{10 (1-\F 1 \Delta)}{1-\listen} -1\geq 10\left(1-\F 1 \Delta\right) -1\geq 3,
\end{align*}
for every choice of $\lambda$, since $\Delta\geq 2$.
We proceed by induction similar to the proof of the first part of Theorem \ref{thm:rw}, and use Equality (\ref{eq:decomptau}) together with the lower bound on $\EE(N_u)$ to obtain that for every node on layer $i$, $u$ with parent $u'$,
$\EE \QE[u]\reach(u') \geq 1+3 \min_{v \in \Lvl_{i+1}}\EE \QE[v]\reach(v')$, so in particular
\begin{align*}
\min_{u \in \Lvl_i}\EE \QE[u]\reach(u') \geq 1+3 \min_{v \in \Lvl_{i+1}}\EE \QE[v]\reach(v').
\end{align*}
The expected hitting time of the target $\target$, even starting at one of its children is therefore of order $\Omega(3^D)$.
\end{proof}

%

\section{Walking Upper Bounds in High Probability}\label{sec:walk-hp}

Our goal in this section is to prove Theorem \ref{thm:walk-hp}, stating that for any constants $0<\delta<1$ and $0 < \eps < {1}/{2}$, if $q = \Delta^{-\eps}$, and  $\Delta$ is sufficiently large (specifically, $\Delta \geq 2^{6/\eps^2}$, see remark in Section \ref{subsec:result-high}), then there exists a walking algorithm $A'_{walk}$ (parameterized by $\delta$, $\Delta$ and $\eps$) that  discovers $\tau$ in $({d}/{\delta})^{O(\frac 1 \eps)}$ moves with probability $1- \delta$.
Moreover, the statement holds even in the adaptive semi-adversarial variant. 

\subsection{The Meta Algorithm}\label{sec:meta}
Underlying the upper bound presented in Theorem \ref{thm:walk-hp} is a simple, yet general, scheme. It is based on a binary notion of \emph{fitness}, which we define later. This notion depends on the parameters of the model. It is carefully crafted such that the following {\em fitness properties} hold:
\begin{itemize}
\item {F1.} Whether or not a node $u$ is fit only depends on the advice on the path $[\sigma, u]$, excluding $u$.\label{bullet:fitness}
\item {F2.} For every node $u$ on the path $[\sigma, \tau]$,
$\Pr(u \mbox{ is fit})\geq 1-\frac{\delta}{2D}.
$
\label{bullet:correct}
\item {F3.} With probability at least $\geq 1-\frac{\delta}{2}$, the connected component of fit nodes that contains the root is of size bounded by $f(D, \delta)$, for some function $f$.
\label{bullet:fast}
\end{itemize}

Once fitness is appropriately defined so that properties  F1 - F3 hold, a {\em depth first search} algorithm can be applied in the walk model. It consists of exploring in a depth-first fashion the connected component of fit nodes containing the root. We refer to this algorithm as $A'_{walk}$.
Property F1 ensures that $A'_{walk}$ is well-defined.
The time to explore a component is at most twice its size, because each edge is traversed at most twice.

\begin{claim}\label{claim:f2}
Property F2 implies that $A'_{walk}$ eventually finds $\tau$ with probability $\geq 1- \frac{\delta}{2}$.
\end{claim}
\begin{proof}
Using Property F2, the probability that all nodes on the root to treasure path $[\sigma, \tau]$ are fit is  at least as large as
 $1-\frac{\delta D}{2D} = 1-\frac{\delta}{2}$.  Under this event, $\tau$ belongs to the same component of fit nodes as the root, and hence $A'_{walk}$ eventually finds it.
\end{proof}
By Property F3, the $A'_{walk}$ algorithm needs a number of steps which is upper bounded by $2f(D, \delta)$ with probability $1- \frac{\delta}{2}$.
Using a union bound we derive the following claim.

\begin{claim}\label{claim:meta}
If the fitness is defined so that properties F1-F3 are satisfied, then $A'_{walk}$ finds $\tau$
in at most $2f(D, \delta)$ steps with probability $\geq 1- \delta$.
\end{claim}

\subsection{Upper Bound in the Walk Model with High Probability}\label{sec:up-walk}
This subsection is devoted to the proof of Theorem \ref{thm:walk-hp}. We assume the following, w.l.o.g.

\begin{itemize}

\item The noise model is the adaptive semi-adversarial variant. Hence the results apply also to the oblivious semi-adversarial and to the random variants.

\item The algorithm knows the depth $d$ of the treasure.
This assumption can be removed by an iterative process that guesses the depth to be 1,2,\ldots. By running each iteration for a limited time as specified in the theorem,  the asymptotic runtime is not violated.

    Given that the algorithm knows the depth $d$ of the treasure, we further assume w.l.o.g.~that it never searches a node at depth greater than $d$. Equivalently, we may (and do) assume that the depth of the tree is $d=D$, i.e., that the treasure is located at a leaf.

\item The tree is balanced: all leaves are at depth $D$, and all non-leaf nodes have degree exactly $\Delta$. To remove this assumption,
whenever the algorithm visits a node $v$ at depth $i < D$ of degree $d_v < \Delta$, it can connect to it $\Delta - d_v$ ``auxiliary trees", where each auxiliary tree has depth $D - i - 1$ and is balanced. The advice in all nodes of these auxiliary trees points towards $v$, which is a valid choice in the adaptive semi-adversarial model.

\end{itemize}
Our algorithm $A'_{walk}$ follows the general scheme presented in Section \ref{sec:meta}.
It is based on a notion of fitness presented below.
With this notion in hand, $A'_{walk}$ simply visits, in a depth-first fashion, the component of fit nodes to which the root $\sigma$ belongs.
\begin{definition}\label{def:fit}[Advice-fitness]
Let $\ldepth_1 = \frac{2}{\eps} \log_{\Delta} (4\delta^{-1} D )$
and $\ldepth_2 = \frac{6}{\eps^2}\log_{\Delta} (4\delta^{-1} D)$.
Let $u$ be a node and $u_{-\ldepth_2}$ be the ancestor of $u$ at distance $\ldepth_2$ from $u$, or $\sigma$ if $u$ is at distance $< \ldepth_2$ from $\sigma$.
The node $u$ is said to be \emph{fit} if the number of locations  on the path $[u_{-\ldepth_2}, u]$ that do not point towards $u$ is less than $\ldepth_1$. It is said to be \emph{unfit} otherwise.
Moreover, a fit node is said to be \emph{reachable} if it is in the connected component of fit nodes that contains the root (as in Property F3). Equivalently, a node is reachable if either it is the root, or it is fit and its parent is reachable.
\end{definition}
Note that by definition, all nodes at depth $<\ldepth_1$ are fit and reachable.
The notion of fitness clearly satisfies the first fitness property F1.
We want to show that it also satisfies the properties F2 and F3 with $f(D, \delta) = (\delta^{-1} D)^{O(1/\eps)}$.
Let us first give two useful conditions satisfied by our choice of $\ldepth_1$ and $\ldepth_2$.
\begin{claim}\label{claim:technical}
The following inequalities hold:
\begin{enumerate}
\item \label{eq:first}
$2^{\ldepth_2} \Delta^{-\varepsilon \ldepth_1} \leq \frac{\delta}{4D}$,
\item \label{eq:second}
$2^{\ldepth_2}\Delta^{(1+\varepsilon)\ldepth_1 - \eps \ldepth_2}  \leq \frac{\delta}{4D}.$
\end{enumerate}
\end{claim}
\begin{proof}
{\emph{Equation \eqref{eq:first}}}. Replacing $\ldepth_{1/2}$ by their values, we bound the left hand side in Equation~\eqref{eq:first} as follows
\begin{align*}
2^{\frac{6}{\eps^2} \log_{\Delta} (4\delta^{-1} D )} &\Delta^{-\varepsilon \frac{2}{\eps} \log_\Delta (4\delta^{-1}D)}= 2^{(6\eps^{-2} - 2\log \Delta )\log_\Delta (4\delta^{-1}D)}.
\end{align*}
 We assume that $\Delta \geq 2^{6\eps^{-2}}$ so that $6\eps^{-2} \leq \log \Delta$ and $6\eps^{-2} -2 \log \Delta \leq -\log \Delta$. Hence the left hand side in Equation \eqref{eq:first} is not greater than
$2^{-\log \Delta \log_\Delta (4\delta^{-1}D)}= \frac {\delta}{4D}$.\\

\noindent {\emph{Equation \eqref{eq:second}}}. Using the fact that $\eps \leq 1$ and that $\ldepth_2 = \frac 3 \eps \ldepth_1$, we obtain
that \[(1+\eps)\ldepth_1 - \eps \ldepth_2 \leq 2 \ldepth_1 - 3\ldepth_1 = -\ldepth_1\leq - \eps \ldepth_1.\]
The result follows from Equation \eqref{eq:first}.
\end{proof}
\begin{proposition}\label{pro:succeed}
The notion of advice-fitness introduced above satisfies Property F2, namely, that for every node $u$ on the path $[\sigma, \tau]$,
we have $\Pr(u \mbox{ is fit})\geq 1-\frac{\delta}{2D}$.
Hence, using Claim \ref{claim:f2}, with probability at least $1- \frac{\delta}{2}$,  algorithm $A'_{walk}$ succeeds in finding $\tau$.
\end{proposition}
\begin{proof}
Let $u \in [\sigma, \tau]$. We want to upper bound the probability that $u$ is unfit.
The probability to be fit decreases with the distance to $\sigma$ until reaching depth $\ldepth_2$, by definition. Hence, it suffices to check the case where $u$ is at distance at least $\ldepth_2$ from $\sigma$.

Let $S$ be a set of $\ldepth_1$ nodes on $[u_{-\ldepth_2}, u]$.
The set $S$ is completely faulty with probability $q^{\ldepth_1} \leq \Delta^{-\varepsilon h_1}$.
The number of such sets S is at most $2^{\ldepth_2}$.
Hence $
\Pr(u\mbox{ is unfit}) \leq 2^{\ldepth_2} \Delta^{-\varepsilon \ldepth_1}.$
We conclude using Equation \eqref{eq:first} from Claim \ref{claim:technical}, that $\Pr(u\mbox{ is unfit}) \leq \frac{\delta}{4D} \leq  \frac{\delta}{2D}$.
\end{proof}

\begin{figure}[!ht]
    \centering
	    \includegraphics[width=0.45\textwidth]{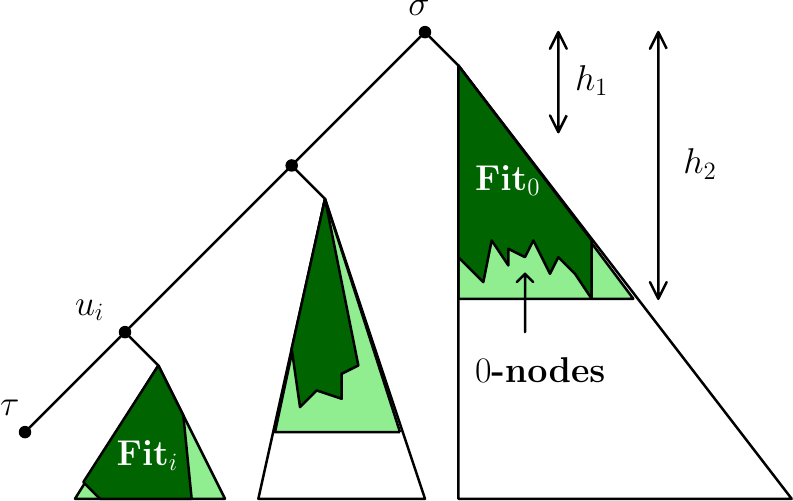}
        \caption{\small The partition of fit vertices introduced in the proof of Proposition \ref{pro:runtime}. The colored nodes in the subtree on the right are the close $0$-nodes, where those colored with dark green are the reachable fit $0$-nodes.  There are no fit $0$-nodes at depth greater than $h_2$ in this example.}\label{fig:fit}
\end{figure}

\begin{proposition}\label{pro:runtime}
The notion of advice-fitness obeys Property F3 with $f(D, \delta) = (\delta^{-1} D)^{O(\eps^{-1})}$. Hence
the move complexity of $A'_{walk}$ is less than $(\delta ^{-1}D)^{O(\eps^{-1})}$, with probability $\geq 1-\frac{\delta}{2}$.
\end{proposition}
\begin{proof}
Let $\mbox{Fit}$ be the connected set of reachable nodes (as defined in Definition \ref{def:fit}).
Our goal is to show that with high probability, namely, with probability at least  $1-\frac{\delta}{2}$, we have  $|\mbox{Fit}|=(\delta^{-1}D)^{O(\eps^{-1})}$.

For $i \ge 0$, the term {\em $i$-node} will refer to any node whose common ancestor with $\tau$ is at depth~$i$. An $i$-node is further said to be \emph{close} if its depth lies in the range $[i, i+ \ldepth_2]$.
Let $\mbox{Fit}_i$ be the set of close $i$-nodes in $\mbox{Fit}$ (see Figure \ref{fig:fit}).

Our first goal is to show that with high probability, Fit does not contain any $0$-node at depth $\ldepth_2$ (Claim \ref{claim:disco}).
Under this high probability event, $A'_{walk}$ visits only fit $0$-nodes that are close (i.e., at depth at most $\ldepth_2$), because $A'_{walk}$ visits only reachable nodes, and fit $0$-nodes that are not close are disconnected from the root at depth $\ldepth_2$. Hence all the $0$-nodes visited by $A'_{walk}$ are in $\mbox{Fit}_0$.
By symmetry, a similar statement holds for each layer $i$, and the corresponding subset $\mbox{Fit}_i$. Thus, under a high probability event, the nodes visited by $A'_{walk}$ during its execution form a subset of  $\bigcup_{i=0}^D \mbox{Fit}_i$ (namely, $f(D, \delta) \leq  \lvert \bigcup_{i=0}^D \mbox{Fit}_i \rvert$).

Denote the expected number of fit 0-nodes at depth $h$ by \[\alpha_h:=\sum_{\mbox{$u$ is a $0$-node at depth $h$}} \Pr(u\mbox{ is fit}).\]
We have
\begin{align}\label{eq:unifu}
\E\left(\lvert \mbox{Fit}_0 \rvert\right) &= \sum_{\mbox{$u$ is a close $0$-node}}\Pr(\mbox{$u$ is fit})=  \sum_{h\leq \ldepth_2}\alpha_h.
\end{align}

\begin{claim}\label{claim:bigU}
\begin{enumerate}
\item \label{eq:(*)}
$\sum_{\ldepth\leq \ldepth_1} \alpha_\ldepth \leq 2\Delta^{h_1}$.
\item
For every $\ldepth_1 < \ldepth\leq \ldepth_2$, we have  $\alpha_h\leq \Delta^{\ldepth_1 (1+ \eps)} 2^{\ldepth} \Delta^{-\eps \ldepth}.$
\end{enumerate}
\end{claim}
\begin{proof}
Every node at depth at most $\ldepth_1$ is fit. There are at most  $2\Delta^{h_1}$ such nodes. Estimation \eqref{eq:(*)} follows.

We now show the second estimate.
Let $U_h$ be a node chosen uniformly at random among all $0$-nodes at depth $h$. Then $\Pr(U_h\mbox{ is fit}) = \sum_{\mbox{$u$ is a close $0$-node}} \Pr(U_h = u) \Pr(u \mbox{ is fit})$, and hence we may write:
\[
\alpha_h = (\Delta-1)^{h}\Pr(U_h\mbox{ is fit}).\]
Choosing $U_h$ randomly rather than arbitrarily is of crucial importance in the adaptive semi-adversarial variant, because the adversary might choose to direct all the faulty advice towards a specific node $u$. This could result in some terms in the sum (in the definition of $\alpha_h$) being much bigger than the average. So it is important to avoid bounding the average by the max.
We draw a uniform path $\sigma = U_0, U_1, \ldots, U_h$ of length $h$ from the root, in the component of 0-nodes. Consider a node $U_i$ on this path. With probability $q$, it is faulty. In this case, regardless of how the adversary could set its advice,
 the advice at $U_i$ points to $U_{i+1}$ with probability of at most $\frac{1}{\Delta-1}$ over the choice of $U_{i+1}$.

It follows that the number of ancestors of $U_h$ whose advice points to $U_h$ may be viewed as the sum of $h$ Bernoulli variables $B_i$ with parameter $\frac q {\Delta-1}$. Moreover, the previous argument means that
$\Pr(B_i = 1 \mid B_j, j<i) = \frac q {\Delta-1}  =\Pr(B_i = 1).$ The $B_i$ variables are thus uncorrelated and hence independent since they are Bernoullis.
The node $U_h$ is fit if at least $\ldepth - \ldepth_1$ ancestors point to it.
Thus, by a union bound over the ${\ldepth \choose \ldepth - \ldepth_1} \leq 2^{\ldepth}$ possible locations of faults, $\Pr(U_h\mbox{ is fit}) \leq  2^{\ldepth} \left(\frac q {\Delta-1}\right)^{\ldepth - \ldepth_1}$. Hence
\[
\alpha_h \leq 2^{\ldepth}(\Delta-1)^{\ldepth} \left(\frac q {\Delta-1}\right)^{\ldepth - \ldepth_1} \leq 2^{\ldepth} q^{\ldepth- \ldepth_1} \Delta^{\ldepth_1} = \Delta^{\ldepth_1 (1+ \eps)} 2^{\ldepth} \Delta^{-\eps \ldepth}.
\]
In the last step, we used $q = \Delta^{-\eps}$. This concludes the proof of Claim \ref{claim:bigU}.
\end{proof}
For $\ldepth=\ldepth_2$,  combining Claim \ref{claim:bigU} and Equation \eqref{eq:second} stated in Claim \ref{claim:technical} implies:
\begin{equation}\label{eq:alpha}
\alpha_{h_2}\leq \frac{\delta}{4D}
\end{equation}
For $i \in [D]$, denote by $Z_i$ ($Z$ for {\em zero}) the event that there are no fit $i$-nodes at depth $i+\ldepth_2$.
Applying Markov inequality on Equation \eqref{eq:alpha} implies that $\Pr(Z_0)\geq 1- \delta (4D)^{-1}$. Since the same argument can be applied to any $i\leq D$, we get
\begin{claim}\label{claim:disco}
For every $i\leq D$, we have $\Pr(Z_i)\geq 1- \frac{\delta}{4D}$ and hence $\Pr(\bigcap Z_i) \geq 1- \frac{\delta}{4}$.
\end{claim}
If follows from the assumption on $\Delta \geq 2^{6 \eps^{-2}}$ that $\Delta^\eps \geq 2^{6} \geq 8$, so that $2\Delta^{-\eps} < \frac 14$.
Using Eq. \eqref{eq:unifu}, Claim \ref{claim:bigU} and the definition of $h_1$ we get:
\begin{align*}
\E\left(\lvert \mbox{Fit}_0 \rvert\right) &\leq \sum_{h\leq \ldepth_2}\alpha_h\leq 2\Delta^{h_1} + \sum_{\ldepth=\ldepth_1}^{\ldepth_2} \Delta^{\ldepth_1 (1+ \eps)} \left(2\Delta^{-\eps}\right)^h\\
&\leq 2\Delta^{h_1} + \Delta^{\ldepth_1 (1+ \eps)} \sum_{\ldepth \geq \ldepth_1} 4^{-h} \leq 2\Delta^{h_1} + 2
\Delta^{\ldepth_1 (1+ \eps)}\leq 4
\Delta^{\ldepth_1 (1+ \eps)}\\
&  \leq 4\cdot(4\delta^{-1} D)^{2\eps^{-1} + 2} \leq (4\delta^{-1} D)^{2\eps^{-1} + 3}.
\end{align*}
The computation is the same for every $i\leq D$, yielding that $\E(\lvert \mbox{Fit}_i|)\leq  (4\delta^{-1} D)^{2\eps^{-1} + 3}$, and by linearity of expectation, we obtain \[\E(|\bigcup_{i=0}^D \mbox{Fit}_i|)\leq D \cdot  (4\delta^{-1} D)^{2\eps^{-1} + 3}.\] Using the Markov inequality, with probability at least $1-\frac \delta{4D}$, this variable is upper bounded by
$4\delta^{-1} D^2 \cdot (4\delta^{-1} D)^{2\eps^{-1} + 3} \le (4\delta^{-1} D)^{2\eps^{-1} + 5}$.
As explained in the beginning of the proof, under the event $\bigcap_{i\leq D} Z_i$ (which occurs with probability at least $1-\frac{\delta}{4}$ thanks to Claim \ref{claim:disco}),
we have  $\mbox{Fit}\subset \bigcup_{i\leq D} \mbox{Fit}_i$. Using
a union bound, we conclude that with probability at least  $1-\frac{\delta}{2}$, we have
$|\mbox{Fit}| \le (4\delta^{-1}D)^{2\eps^{-1} + 5} = (\delta^{-1}D)^{O(\eps^{-1})}$, as desired.
\end{proof}

Claim~\ref{claim:meta} in combination with Propositions \ref{pro:succeed} and \ref{pro:runtime} proves Theorem~\ref{thm:walk-hp}.

\section{Lower Bound for High Probability Algorithms}\label{sec:lower-hp}
The goal of this section is to prove Theorem \ref{thm:lower-eps}. Informally, the theorem shows that the upper bound in Theorem \ref{thm:walk-hp} is the right bound, up to a constant factor in the exponent.

The proof is done for the query complexity, in a complete $\Delta$-ary tree of depth $D = \log_{\Delta} n$. This also implies at least the same lower bound for the move complexity.
Throughout the proof, $T$ denotes a complete $\Delta$-ary tree of depth $D$.
In our lower bound, the adversary places $\tau$ at a leaf of $T$ chosen at random according to the uniform distribution.
We denote by $F$ the set of faulty nodes (without directional advice).
Since this set as well as $\tau$ are chosen uniformly at random, we may assume without loss of generality that the algorithm is deterministic.
The presentation of our proof is simplified if we assume that the algorithm is told which nodes of $T$ are faulty (namely, are in $F$). We can make this assumption because it only strengthens our lower bound.

We reserve the letter $\subT$ to denote a subtree of $T$,  containing the descendants of its root (with respect to the original root $\sigma$). A subset $S$ is said to be \emph{completely faulty} if all nodes in $S$ are faulty. Overloading this expression, we say that a leaf $v$ of some subtree $\subT$ is \emph{completely faulty} if the path
from the root  of $\subT$ towards $v$ is completely faulty. The relevant reference subtree $\subT$ will be specified if it is not clear from the context. The number of completely faulty leaves of a subtree $H$ is denoted $B(\subT)$ or simply $B$ if $\subT$ is clear from the context.
When considering a subset $S \subseteq T$, we write $S^*$ to denote the pair $(S, S\cap F)$. In words, this corresponds to a subset with the information of which nodes are faulty.

\subsection{Proof of Theorem \ref{thm:lower-eps}}
At a high level, the argument is as follows. On the path from the root to $\tau$, with probability at least $\delta$, there exists a segment $[v_i,v_{i+h-1}]$ of length $h \simeq \frac 1 \varepsilon \log_{\Delta} (\frac D \delta)$, where all nodes are faulty (Lemma \ref{lem:faulty}). On the other hand, the tree rooted at $v_i$ of depth $h$,
 typically hosts many such completely faulty leaves (Lemma \ref{lem:manybis}). In some sense these leaves are \emph{indistinguishable}. This means that any algorithm has constant probability of trying a constant fraction of them before finding $v_{i+h}$, the one leading to $\tau$.

Let us make this intuition more precise. For each choice of faulty locations $F$ and treasure location $\tau = v$, we define $u(v,F)$ to be the first node on the path $[\sigma, v]$
such that $u$ and its $h-1$ ``direct'' descendants towards $v$, i.e., the next $h-1$ nodes on the path to $v$, are faulty, if such $u$ exists, and otherwise we say $u(v,F)$ is \emph{not defined}.

Denote by $\textbf{H}(v,F)$ the subtree rooted at $u(v,F)$ of depth $h$. A central object in the proof is $\textbf{H}^*(v,F)$ which corresponds to the couple $(\textbf{H}(v,F),(F\cap \textbf{H}(v,F)))$ (the subtree together with the faulty locations on it).
Henceforth, we will often drop the dependency on $v,F$ in the interest of readability, but we keep the bold notation to emphasize that
\textbf{H} is a random object (it depends on $F$). If $u(v,F)$ is not defined, we also say that $\specT$ is not defined.

The following lemma
lower bounds the probability that $\specT$ is well defined.
\begin{lemma}\label{lem:faulty}
 Let $0<\delta < \frac{1}{16}$. If $h$ satisfies $q^h \ge \frac{8\delta h}{D}$ and $D \ge \max[h,8\delta h]$ (for $D$ that does not satisfy this condition the statement does not make sense), then $\Pr\left(\mbox{$\specT$ is well defined}\right) \geq 4\delta.$
\end{lemma}

\begin{proof}
Recall that we write $[\sigma, \tau] := \{v_0 = \sigma, v_1, \ldots, v_{D-1}, v_D = \tau\}$.
For a given $i,h \in \mathbb{N}$,
let us denote by $F_{i,h}$ the event that $[v_i, v_{i+h-1}]$ is completely faulty.

With this definition, $\specT$ is well defined if the event $F_{i,h}$ holds for at least one value of $i$ in the range $0 \le i \le D-h$.  Hence what we want to show is that
\begin{equation}
\Pr\left(\mbox{NOT }\bigcup_{i=0}^{D-h} F_{i,h}\right) \leq 1-4\delta.
\end{equation}
For every fixed $i$ and $h$, $\Pr(F_{i,h}) = q^{h}$. Indeed, there are $h$ nodes on the path $[v_i, v_{i+h-1}]$ and each is independently faulty with probability $q$.
The parameter $h$ satisfies $q^{h} \ge \frac {8\delta h}{D}$ by assumption.

The events $F_{i'\cdot h,h}$ are independent when $i'$ varies in $[D/h]$. The probability that none of these holds is
\begin{align*}
(1-q^{h})^{D/h} \le \left(1-\frac {8\delta h}{D}\right)^{\frac{D}{h}} \leq  e^{-\frac {8\delta h}{D} \frac D h} = e^{-8\delta} \leq 1-8\delta/2 = 1-4\delta.
\end{align*}
The last inequality holds for sufficiently small $\delta$ (e.g., $\delta \le \frac{1}{16}$).
\end{proof}

From now on, we assume for simplicity that $h$ is chosen so that $q^h= \frac{8\delta h}{D}$. ($h$ being an integer, this is only an approximate equality in general. We ignore this point in the discussion, assuming $h$ has been appropriately rounded.)
Recall that $q=\Delta^{-\varepsilon}$, so taking logarithms we see that
$\eps h = \log_\Delta \frac{D}{\delta} - \log_{\Delta} 8h$. Viewing $\delta$ and $\Delta$ as fixed and letting $D$ go to infinity, the previous equality entails that
$h = \frac{1}{\eps}  (1-o_D(1)) \log_\Delta \frac{D}{\delta}$, where the term $o_D(\cdot)$ tends to $0$ when $D$ tends to infinity\footnote{
Indeed, $h$ tends to infinity when $D$ tends to infinity, so $\log_\Delta 8h = o(h)$.
}.

Let $\compleaves(\specT^*)$ denote the number of completely faulty leaves in $\specT^*=\textbf{H}^*(v,F)$.
Lemma \ref{lem:manybis} below is proven in  Section \ref{sec:manybis}. It bounds the typical value of $\compleaves(\specT^*)$ in those cases that $\specT$ is well defined.
\begin{lemma}\label{lem:manybis}
Let $C$ be a sufficiently large constant that depends only on $q$. Then
\[
\Pr\left(\compleaves(\specT^*) \leq (q\Delta)^{\ldepth-C} \mid \specT~ \mbox{is well defined}\right) \leq 0.5.
\]
\end{lemma}

The following two intermediate results express how ``indistinguishable'' is formalized in this context.
\begin{claim}\label{claim:indist}
Consider a leaf $v$ and a subtree $\specT^*$ (together with the faulty locations in it). Then, the value
of $p_{v,H^*}:=\Pr_F(\specT^*(v,F) = H^* \mid \tau = v)$ is the same for all $v$ such that $p_{v,H^*}>0$.
\end{claim}
\begin{proof}
Let $H$ be a fixed subtree of depth $h$ rooted at some node $u$. By definition, the statement that $\Pr(\specT^*(v,F) = H^* \mid \tau=v) > 0$ is equivalent to the following two statements:
\begin{itemize}
\item[$\mathcal{A}$:] There are no $h$ consecutive faulty nodes in $[\sigma,u]$ and,
if $u \neq \sigma$,
then $u$'s parent is not faulty.
\item[$\mathcal{B}$:] Leaf $v$ is a descendant of $u$ and
the leaf of $H^*$ which is an ancestor of $v$ is completely faulty in $H^*$.
\end{itemize}
The probability of $\mathcal{A}$ depends only (in some complicated way) on the length of $[\sigma, u]$ and $q$ and hence does not depend on $v$.
With these notations, if $v$,$H^*$ and $F$ are such that
$\mathcal{B}$ holds, then
$$
\Pr(\specT^*(v,F) = H^* \mid \tau=v) = q^{\lvert F \cap \subT \rvert} (1-q)^{\lvert H \setminus F\rvert}\Pr(\mathcal{A}).
$$
The right hand side does not depend on $v$.
The claim follows.
\end{proof}

\begin{lemma}\label{lem:uniform}
Conditioning on the subtree $\specT^*$ (and hence its existence), the leaf of $\specT^*$ which leads to the treasure is  uniform amongst all completely faulty leaves $v$ of $\specT^{*}$.
\end{lemma}
\begin{proof}
Denote by $\mathcal{L}$ the set of leaves of $T$. Using Bayes rule, and because we assume that $\tau$ is uniform over all leaves $\mathcal{L}$,
\begin{align*}
\Pr(\tau=v \mid \specT^*) &= \Pr(\tau=v) \cdot \frac{\Pr(\specT^* \mid \tau=v)}{\Pr(\specT^*)}= \frac{1}{\lvert \mathcal{L}\rvert}\cdot \frac{\Pr(\specT^* \mid \tau=v)}{\Pr(\specT^*)}.
\end{align*}
It follows that
$\Pr(\tau=v \mid \specT^*)$ has the same value for all leaves $v$ of $T$ such that $\Pr(\tau=w \mid \specT^*) > 0$.
Indeed, we saw that the right hand term is independent of $w$, as soon as $w$  is a descendant of a completely faulty leaf of $\specT^*$ (Claim \ref{claim:indist}), and otherwise it is~$0$.

Since the tree $T$ is complete and regular, each leaf of $\specT^*$ is the ancestor of the same number of leaves in $T$, and so each completely faulty leaf of $\specT^*$ is equally likely to lead to the treasure.
\end{proof}

We now condition on the event that $\specT$ is well defined and that it
has more than $s = (q\Delta)^{\ldepth - C}$ completely faulty leaves.
This event holds with probability at least $4\delta \times 0.5$ (combining the results of Lemma \ref{lem:faulty} and Lemma \ref{lem:manybis}).
Under this conditioning, with probability at least $0.5$ over treasure location the completely faulty leaf leading to the treasure is visited after at least $0.5s$ other faulty leaves have been visited.
Indeed there are $s$ faulty leaves, each being equally likely to lead to the treasure (Lemma \ref{lem:uniform}).
Overall, with probability $4\delta \cdot 0.5 \cdot 0.5 = \delta$, more than $0.5 s$ nodes need to be visited.
We saw that, $h = \frac{1}{\eps}  (1-o_D(1))\log_\Delta (\frac{D}{\delta}){}  $, hence
$s = (q\Delta)^{h-C} =  (q\Delta)^{h \cdot (1-o_D(1))}= (\Delta^{1-\eps})^{\frac{1}{\eps} (1-o_D(1)) \log_\Delta \frac{D}{\delta} }$. After simplification, this is
$(\delta^{-1} D)^{\frac {1-\eps} \eps  (1-o_D(1))}$.

\subsection{Proof of Lemma \ref{lem:manybis}}\label{sec:manybis}

The proof of Lemma \ref{lem:manybis} is broken into intermediate claims.
To begin with we ignore the treasure $\tau$, and consider a fixed complete $\Delta$-ary tree $\subT$ of depth $h$, with root $\sigma$. Each node of $\subT$ is faulty (namely, belongs to the set $F$) independently with probability $q$. Let $\compleaves$ denote the number of completely faulty leaves in $\subT$.
\begin{claim}\label{claim:branching}
It holds that
$
\Pr\left(\compleaves>\frac 12 (q\Delta)^{h}\right) = \Omega(q).$
\end{claim}
\begin{proof}
The proof uses a second moment argument. For every given leaf, the probability of the full path from the root being faulty is $q^{h}$ and there are $\Delta^h$ leaves.
Hence $\E(\compleaves) = (q\Delta)^h$.
Let us denote by $\mathcal{L}$ the set of leaves.
Using the Boolean indicator variable $\chi_i$ to denote that leaf $i$ is completely faulty, we have $B=\sum_i \chi_i$. Hence, we may write $\E(\compleaves^2)=\E((\sum_i \chi_i)^2)$ as
\[
\E(\compleaves^2) = \E(\compleaves) + \sum_{u \neq v \in \mathcal{L}} \Pr(\mbox{$u$ and $v$ are completely faulty}),\]
where the sum is taken on the ordered pairs $u \neq v$ where both are  in $\mathcal{L}$. Fix a leaf $v\in \mathcal{L}$, and let $\mathcal{L}_\ell$ denote the set of leaves whose common ancestor with $v$ has depth
$h-\ell$. For every $\ell \in [1, h]$, $\lvert \mathcal{L}_\ell \rvert \leq \Delta^\ell$.
Moreover, for $u \in \mathcal{L}_\ell$,
$u$ and $v$ being completely faulty is equivalent to $v$ being completely faulty and the $\ell-1$ nodes connecting $u$ to the root-to-$v$ path being faulty. Hence,
$\Pr(\mbox{$u$ and $v$ are completely faulty}) = q^{\ell + h -1}$.
Altogether,
\begin{align*}
\E(\compleaves^2) \leq (q\Delta)^h + \Delta^h \sum_{\ell=1}^h\Delta^\ell q^{\ell + h -1}
&= O\left(q^{-1}(q\Delta)^h \sum_{\ell=1}^h (q\Delta)^\ell \right)\\
&= O\left(q^{-1}(q\Delta)^{2h}\right) = O(q^{-1}\E(\compleaves)^2).
\end{align*}
Using the Paley-Zygmund inequality, we get
$
\Pr(\compleaves \geq \frac 12 \E(\compleaves)) \geq \frac 14 \frac{\E(\compleaves)^2}{\E(\compleaves^2)} = \Omega(q).
$
\end{proof}

\begin{claim}\label{claim:many}
For a constant $C$ that depends only on $q$, 
$
\Pr(\compleaves \leq (q\Delta)^{h-\smallfun} \mid \compleaves \geq 1) \leq 0.5.
$
\end{claim}
\begin{proof}
First observe that for any constant $\smallfun$, we may consider only $h\geq \smallfun$, since otherwise, if $h< \smallfun$, then $(q\Delta)^{h-\smallfun}<1$ and the requirement trivially holds. 

Since $\compleaves\geq 1$ there exists a path $[\sigma, v]$ which is completely faulty.
For every $u \in [\sigma, v]$ define $T_u$ as the subtree rooted at $u$ excluding the subtree rooted at the child of $u$ on $[\sigma, v]$. The subtrees $T_u$ are pairwise disjoint and form a partition of $T$.
For $u \in [\sigma, v]$, define $\compleaves_u:=\compleaves(T_u)$, the number of completely faulty leaves of $T_u$.
Note that $\compleaves =\sum_u \compleaves_u \geq \max_u \compleaves_u$.
Moreover since, for $u\neq u'$, $T_u \cap T_{u'} = \emptyset$, the variables $\compleaves_u$ are independent.

Let $S$ be the prefix of size $\smallfun$ of $[\sigma, v]$.
All subtrees rooted at a node $u\in S$ are of depth $> \ldepth-\smallfun$. 
Since $q\Delta>\Delta^{1/2}$, we have $(q\Delta)^{-\smallfun}<1/2$ for every $\smallfun\geq 2$.
Using Claim \ref{claim:branching}, together with the independence of the $\compleaves_u$'s, the probability that all $\compleaves_u$'s are smaller than $(q\Delta)^{h-\smallfun}\leq \frac{1}{2}(q\Delta)^{h}$ is less than
$(1-cq)^{\smallfun}$ for a  constant $c$.
 The result follows, if $C$ is large enough (as a function of  $q$).
\end{proof}

If it exists, by definition, $\specT^*$ has at least one completely faulty leaf, which is the one leading to $\tau$. Outside of the branch leading to $\tau$, the nodes of $\specT^*$ are still independently faulty with probability $q$. This means that
the number of completely faulty leaves of $\specT^*$, $\compleaves(\specT^*) \mid \{ \specT^* \mbox{ is well defined}\}$ is distributed as $\compleaves(H^*) \mid \{\compleaves(H^*) \geq 1\}$ for every fixed subtree $H$ of depth $h$.

Using this together with Claim \ref{claim:many} finishes the proof of Lemma \ref{lem:manybis}.
Indeed, we obtain
$$
\Pr\left(\compleaves(\specT^*) \leq (q\Delta)^{h-\smallfun} \mid \specT~\mbox{is well defined}\right) \leq 0.5.
$$


\section{A Query Algorithm for the Path}\label{sec:path}
This section and the following two sections are devoted to the  analysis of upper bounds on the query complexity.
In the current section we begin this analysis by  focusing on
 the special case that the tree is a path. For the path, we can adapt to our setting of permanent faults algorithms that were developed in models in which faults are not permanent, and repeated queries to the same node results in independent replies. Specifically, the path algorithm that we present and its analysis can be thought of as an adaptation of an algorithm of~\cite{bayes08} to our setting.

Our query model for paths can conveniently be described as follows. The path consists of $n$ vertices $\{1, \dots, n\}$, one of which (chosen by an adversary) contains a treasure. Denote this node by $\tau$. Every node other than $\tau$ holds advice, chosen independently at random, pointing towards $\tau$ with probability $p > \frac{1}{2}$ and away from $\tau$ with probability $1-p$. A query to a node reveals its advice (and if no advice is present, the node must be $\tau$). We propose the following algorithm to find the treasure with a small expected number of queries.

\paragraph{The median algorithm.}
Initially, every node has weight of $1$ and all nodes are {\em live}. In every step do the following.

\begin{itemize}

\item
Query the live node $i$ such that the weight on each side of it is no more than half the total weight. We refer to this node as the {\em median}. (There might be two such nodes, one with exactly half the weight below it and the other with exactly half the weight above it. In this case, query one of them arbitrarily.)

\item
If $i = \tau$, the algorithm ends.

\item
If $i \not= \tau$ then $i$ is declared dead and its weight is dropped to~0. If the advice at $i$ points up, multiply the weight of each node above $i$ by $2p$, and the weight of each node below $i$ by $2(1-p)$.  If the advice at $i$ points down, multiply the weight of each node below $i$ by $2p$, and the weight of each node above $i$ by $2(1-p)$.

\end{itemize}
\medskip
We now provide some intuition regarding the number of queries made by the median algorithm.
Let $H(p) = -p\log p - (1-p)\log (1-p)$ (all logarithms are in base~2). Hence $H$ is the entropy of $p$. Observe that for $\frac{1}{2} < p \le 1$ we have that $1 > H(p) \ge 0$. We further let $I(p) = 1 - H(p)$. This can be viewed as the information content of advice. Had there been no faults, the advice would contain one bit of reliable information, but given the probability of fault, the information content decreases by the entropy of $p$.

Given the fact that there are $n$ possible location for the treasure and each query gives $I(p)$ amount of information (in fact, slightly more, because it also excludes the queried node from containing the treasure), we do not expect to find the treasure in fewer than $\frac{\log n}{I(p)}$ queries (up to low order terms). The median algorithm manages to ask the most informative queries, and indeed finds the treasure in roughly $\frac{\log n}{I(p)}$ steps in expectation (up to low order terms), where expectation is taken over the choice of advice. The proof of the next theorem is sketched in Appendix~\ref{sec:median}.

\begin{theorem}
\label{thm:median}
The expected number of queries of the median algorithm is $\frac{\log n}{I(p)}$, up to low order terms.
\end{theorem}

Let us now try to extend the median algorithm to the case that the tree $T$ is not a path. This requires a notion of a median vertex in a tree. Fortunately, such a notion exists, and is referred to as a {\em separator} vertex. Given arbitrary nonnegative weights for the vertices of a tree $T$, we call $u$ a \emph{separator} of $T$ if each connected component of $T\setminus \left\{u\right\}$ contains at most half the weight of the vertices. If is well known that for every tree and every weight function, a separator node exists. However, this by itself does not suffice in order to extend the median algorithm to trees. The difficulty is that at various steps of the algorithm, the respective median node might be dead, and hence no information can be inferred by querying it again (in our model where faults are permanent). This problem does not occur on the path because given a nonnegative weight function for its vertices, there always is a median node of strictly positive weight (hence, one that was not previously queried by the algorithm), whereas on trees the median node might be unique and of weight~0 (and hence dead).

Circumventing the above problem is not easy. Inevitably, our algorithms spend most of their queries on non-separator vertices.
Unlike the median algorithm for the path whose number of queries is optimal up to low order terms, the algorithms that we shall design for trees will use a number of queries that is only approximately optimal.
Our approach is easiest to explain in the high probability setting, and this we do in Section~\ref{sec:corquery}. Then, for the expectation setting, we provide one algorithm in Section~\ref{sec:weak-upper}, and a refined algorithm with a stronger upper bound on the expected number of queries in Appendix~\ref{sec:regular-upper}. A lower bound (that does not quite match the upper bounds) is provided in Section~\ref{sec:lb-query}.

\section{A High Probability Query Upper Bound}\label{sec:corquery}
In this section we prove Theorem \ref{cor:query}.
The idea is based on a separator search.
We call $u$ a \emph{separator} vertex of tree $T$ (here, unlike the previous section, all vertices in $T$ are assumed to have equal weight) if each connected component of $T\setminus \left\{u\right\}$ contains at most $|T|/2$ nodes. It is well known that such a node exists.

We use a local procedure described in Lemma \ref{lem:local-query} that allows us to learn  with probability $1 - O(\frac{\delta}{\log n})$, in which one of the connected components of $T \setminus \{u\}$ the treasure resides. By a union bound, applying this local procedure on a separator of the tree, and recursing on the component  pointed out by the procedure, allows to find the treasure in logarithmic number of runs of the local procedure with probability at least $1 - O(\delta)$.

\begin{lemma}\label{lem:local-query}
Consider the adaptive semi-adversarial model. Let $d$ be any positive integer and let $u$ be a separator. There exists a search procedure that queries the vicinity of $u$ up to distance $d$ and outputs either $(*)$ the component of $T\setminus \{u \}$ that contains $\tau$ or $(**)$ $\tau$ itself if $d(\tau, u) \leq d$. The success probability is at least $1- \delta$ and the number of queries is not greater than $(\delta^{-1} d)^{O(\eps^{-1})}$.
\end{lemma}

\begin{proof}
Let $T_u$ be the subtree of depth $d$ rooted at $u$.
Hereafter, the reference tree is $T_u$, so the notion of fitness is with respect to $T_u$ (that is, the depth parameter involved in Definition~\ref{def:fit} is $d$ and not $D$).

Let us first describe the promised local procedure. It consists in applying algorithm $A'_{walk}$ from Theorem \ref{thm:walk-hp} on $T_u$ until either finding the treasure or finding a reachable fit node at distance precisely $d$ from $u$, denoted $x$. We will see that with high probability at least one of these events hold so that the behavior of the local procedure when none of these events happens is not relevant. For the sake of concreteness, we could say that it stops if all of $T_u$ has been explored.
We note that Algorithm $A'_{walk}$ makes walking steps, which are viewed as queries in the query model in this context. The output of the local search procedure is either $\tau$, if it was found, or the component of $x$ in $T\setminus \{u \}$.

Let us analyse the performance of the local search procedure.
If $\tau \in T_u$, then Theorem~\ref{thm:walk-hp} ensures that it is found with probability $1-\delta$ in at most $(\delta^{-1} d)^{O(\eps^{-1})}$ steps.
Otherwise, consider the node $x$ at distance $d$ from $u$ in $T_u$ that is on the path to $\tau$. Within $T_u$, the advice is sampled as if  $x$ was the treasure $\tau$, so Theorem~\ref{thm:walk-hp} guarantees in this case that $x$ is found with probability $1-\delta$ in less than $(\delta^{-1} d)^{O(\eps^{-1})}$ steps.
Moreover, under that event $x$ is fit and reachable.

To complete the argument, we just need to guarantee that with high probability, there are no reachable fit nodes at distance $d$ from $u$ outside the component of $x$.
Then, the local procedure may discover a reachable and fit node at distance $d$ different from $x$, but it will still be in the same component as $x$.

Recall that a $0$-node is a node whose common ancestor with the treasure is the root (which is $u$ in this case, since the reference tree is~$T_u$). These are precisely the nodes in the other components than the component of $x$. Claim \ref{claim:disco} asserts that all reachable fit $0$-nodes are within distance $\ldepth_2(d)$ of $u$  with probability at least $1-\frac{\delta}{4}$. We write $\ldepth_2(d)$ to emphasize that the parameter is defined here as a function of $d$, namely
$\frac{6}{\eps^2}\log_{\Delta} (4\delta^{-1} d)$. We see that if $d$ is big enough (as a function of $\eps, \Delta$) then $\ldepth_2(d) < d$.

Hence with that probability, every fit node at distance $d$ found by the local procedure is guaranteed to be in the right component, that is the component to which $x$ belongs.

Overall, the success probability of this procedure is $1-\delta - \frac{\delta}{4} = 1-\frac 54 \delta$. We may write $\delta'= \frac 45 \delta$ and get the desired statement\footnote{In fact, the rescaling $\delta' = \frac 45 \delta$ could be avoided, by observing that the event of probability $1-\delta /4$ discussed in the proof is included in the $1-\delta$ probability event that guarantees the success of $A'_{walk}$. This follows from inspecting the proof of Theorem \ref{thm:walk-hp}.}.
\end{proof}
To conclude the proof of Theorem \ref{cor:query}, we apply Lemma \ref{lem:local-query} with $\delta' = \delta /\log n$ and $d=\log n$ for every one of the at most $\log n$ separators leading to $\tau$.
By a union bound, the success probability is $1-\delta$ and the total number of queries is
$\log n \cdot (\delta^{-1} \log n)^{O(\varepsilon^{-1})} = (\delta^{-1} \log n)^{O(\varepsilon^{-1})}$.

\section{An $O(\sqrt{\Delta} \log \Delta \log^2 n)$ Upper Bound for Query Algorithms in Expectation}\label{sec:weak-upper}

This section is devoted to the proof  of Theorem \ref{thm:weak-upper}.
This theorem states that  for every $\varepsilon>0$, there exists a  deterministic query algorithm $\algcontract$ such that if Condition \condition holds with parameter $\varepsilon$ (see Eq. \eqref{eq:condition}), then $\algcontract$ needs at most $\bigO(\sqrt{\Delta}\log \Delta \cdot \log^2 n)$ queries in expectation.

As in the previous section, our technique in this section is based on separators.
Assume there is some local procedure, that given a vertex $u$ decides with probability $1 - \delta$ in which one of the connected components of $T \setminus \set{u}$, the treasure resides.
Applying this procedure on a separator of the tree, and then focusing the search recursively only in the component it pointed out, results in a type of algorithm we call a \emph{separator based} algorithm.
It uses the local procedure at most
$\Ceil{\log_2 n}$ times, and by a union bound, finds the treasure with probability at least $1 - \Ceil{\log_2 n } \delta$. Broadly speaking, we will be interested in the expected running time of this sort of algorithm conditioned on it being successful. This sort of conditioning complicates matters slightly.

In the remaining of the section, we assume that the set of separators for the tree is fixed.
We also use the notation $\queries(A)$ to denote the expected number of queries an algorithm $A$ uses before finding the treasure.

\begin{proof} \emph{(of Theorem
\ref{thm:weak-upper})}

$\algsep$ runs a separator based algorithm in parallel (i.e., in an alternating fashion) to some arbitrary exhaustive search algorithm.
Fix some small $h$, to be specified later. The local exploration procedure, denoted $\local$, for a vertex $u$ proceeds as follows.\newline

\noindent
\textbf{Procedure $\local(u)$.}
Recall from Eq. \ref{eq:beta} that $\beta(u) = \prod_{w \in \pathco{\sigma}{u}} \Delta_w$.
Consider the tree $T_h(u)$ rooted at $u$ consisting of all vertices satisfying $\log_\Delta\beta(v) < h$ together with their children. So a leaf $v$ in $T_h(u)$ is either a leaf of $T$, or
satisfies $\Delta^h \leq \beta(v) < \Delta^{h+1}$. Denote the second kind a {\em nominee}.
Call a nominee {\em promising} if the number of weighted arrows pointing to $v$ is large, specifically, if
$
\sum_{w\in [u,v \rangle} X_w\geq \F 2 3  h \log\Delta
$,
where  $X_w = \log \Delta_{w}$ if the advice at $w$ is pointing to $v$, $X_w = - \log \Delta_{w}$ if it is pointing to $u$, and
$X_w = 0$ otherwise.
Viewing it as a query algorithm, we now run the walking algorithm $\algzplus$ on $T_h(u)$ (starting at its root $u$) until it either finds the treasure or finds a promising nominee.
In the latter case, $\local(u)$ declares that the treasure is in the connected component of $T \setminus \set{u}$ containing this nominee.
If $\tau \in T_h(u)$ then set $\leaf[u] = \tau$. Otherwise let $\leaf[u]$ be the leaf of $T_h(u)$ closest to the treasure, and so in this case $\leaf[u]$ is a nominee. Denote by $\mathcal{U}(u)$ the set of nominees that are not in the same component as $\leaf[u]$ in $T \setminus \{ u \}$.
Say that $u$ is \emph{$h$-misleading} if either:
\begin{itemize}
\item $\tau \not\in T_h(u)$ and $\leaf[u]$ is not promising, or
\item there is some promising nominee $v \in T_h(u)$ that is in  $\mathcal{U}(u)$.
\end{itemize}
In particular, if $u$ is not $h$-misleading then $\local(u)$ necessarily outputs the correct component of $T \setminus \set{u}$, namely, the one containing the treasure.
The proof of the following lemma is to be found below in Section \ref{sec:lemmisleading}. The part regarding uniform noise will be needed later. Recall we say the noise is uniform if it does not depend on the node so that $q_u = q$ for every node $u$.

\begin{lemma}\label{lem:misleading}
For every  $u$,
$\prob{u \text{ is $h$-misleading}} \leq (\Delta + 1)(1-\varepsilon)^h$.
Also, for every event $X$ such that $X$ occurring always implies that $u$ is not $h$-misleading, we have
\[\prob{X}\queries\B{\local(u) \mid X} = \bigO(\sqrt{\Delta} \log \Delta \cdot h).\]
In the case the noise is uniform these bounds become
$2(1-\varepsilon)^h$ and
$\bigO(\sqrt{\Delta} \cdot h)$
respectively.
The constant hidden in the $\bigO$ notation only depends polynomially on $1/\varepsilon$.
\end{lemma}
Applying Lemma \ref{lem:misleading}, with $h = - 3\log(2n) / \log(1 - \varepsilon)$, gives
$\prob{u \text{ is misleading}} \leq 1/n^2$.
Denote by $\good$ the event that none of the separators encountered are misleading, and by $\good^c$ the complement of this event.
By a union bound,
$\prob{\good^c} \leq 1/n$.
\begin{equation}\label{eq:decompose}
\queries(\algsep) =
\prob{\good}\queries\left(\algsep \mid \good \right)
+
\prob{\good^c}\queries\B{\algsep \mid \good^c}.
\end{equation}
As $\algsep$ runs an exhaustive search algorithm in parallel, the second term is $\bigO(1)$. For the first term, note that conditioning on $\good$, all local procedures either find the treasure or give the correct answer, and so there are $\bigO(\log n)$ of them and they eventually find the treasure.
Denote by $u_i$ the $i$-th vertex that $\local$
is executed on. By linearity of expectation, and applying Lemma \ref{lem:misleading}, the first term of \eqref{eq:decompose} is
$
\prob{\good}\sum_i \queries\B{\local(u_i) \mid \good}
= \bigO(\log n \cdot \sqrt\Delta \log \Delta  \cdot h)
= \bigO(\sqrt\Delta \log \Delta  \log^2 n).$
As $e^{-x}>1-x$ always, then  $-1/\log(1 - \varepsilon) \leq 1/\varepsilon$, and the hidden factor in the $\bigO$ is as stated.
This establishes Theorem \ref{thm:weak-upper}, conditioning on proving Lemma \ref{lem:misleading}.
\end{proof}

\subsection{Proof of Lemma \ref{lem:misleading}}\label{sec:lemmisleading}
The proof makes use of Lemma \ref{lem:maintec}
given in Section \ref{sec:walk-exp}.
To check the probability that $u$ is misleading, consider two cases:
\begin{enumerate}
\I
$\tau \not \in T_h(u)$, and $\leaf[u]$ is not promising.
By Lemma \ref{lem:maintec}, and recalling that $\Delta^h \leq \beta(\leaf[u])$, the probability $\leaf[u]$ is not promising is:
\eq{
&
\prob{\sum_{w\in \pathco{u}{\leaf[u]}} X_w \leq \F 2 3 h \cdot \log(\Delta)}
\\ & \leq
\prod_{w\in \pathco{u}{\leaf[u]}} \F{1-\varepsilon}{\sqrt{\Delta_w}} \cdot e^{\F 34 \cdot \F 2 3 h \log(\Delta)}
=
\F{(1-\varepsilon)^{d(u, \leaf[u])}}{\sqrt{\beta(\leaf[u])}} \Delta^\F{h}{2}
\leq
(1-\varepsilon)^{d(u, \leaf[u])}.
}

As $d(u, \leaf[u]) \geq \log_\Delta \beta(\leaf[u]) \geq h$, this is at most $(1-\varepsilon)^h$.
\I
If $v$ is a nominee that is not in the same connected component of $T\setminus \set{u}$ as $\leaf[u]$, then by Lemma \ref{lem:maintec}, the probability that $v$ is promising is
\begin{align*}
\PP\B{\sum_{w\in \pathco{u}{v}} X_w \geq \F 2 3 \log \Delta \cdot h}
&=
\PP\B{\sum_{w\in \pathco{u}{v}} -X_w \leq -\F 2 3 \log \Delta \cdot h}\\
&\leq
\prod_{w\in \pathco{u}{v}} \F{1-\varepsilon}{\sqrt{\Delta_w}} \cdot e^{-\F 34 \cdot \F 2 3 h \log \Delta}\\
&=
\F{(1-\varepsilon)^{d(u,v)}}{\sqrt{\beta(v)}} \Delta^{-\F{h}{2}}
\leq
\F{(1-\varepsilon)^{d(u,v)}}{\Delta^h}.
\end{align*}

However, denote by $L$ the set of nominees in $T_u$. As they are a subset of the leaves of $T_u$, by the way $\theta$ is defined:
\begin{equation}\label{eq:numLeaves}
1 \geq \sum_{x \in L} \theta(v) \geq \sum_{x \in L} \R{\beta(v)} \geq \sum_{x \in L} \R{\Delta^{h+1}} = \F{|L|}{\Delta^{h+1}}
\end{equation}
So, $|L| \leq \Delta^{h+1}$. Therefore, by a union bound, the probability that there exists a nominee $v$ that renders $u$ misleading is at most $\Delta (1 - \varepsilon)^h$.
\end{enumerate}
The probability that $u$ is misleading is then at most $(1 + \Delta)(1 - \varepsilon)^h$ as stated. In the case where the tree is regular, the analysis is the same, except that in \eqref{eq:numLeaves}, $\beta(v) = \Delta^h$, and so following the same logic, $|L| \leq \Delta^h$, and this part contributes only $(1-\varepsilon)^h$.

For the second part of the lemma, consider some event $X$ where $u$ is not misleading.
As $\leaf[u]$ is either the actual treasure or promising, and acts as the treasure in the eyes of $\algzplus$, then the local procedure stops when it encounters $\leaf[u]$. It might actually stop before (because it found another promising node),
so,
\begin{align*}
\prob{X}\queries\B{\texttt{local}(u) \mid X} &\leq
\prob{X}\queries\B{\algzplus(T_h(u)) \mid X}
\\
&\leq \queries\B{\algzplus(T_h(u))} =
\bigO(\sqrt\Delta \cdot \texttt{depth}(T_h(u)))
\end{align*}
But the depth of $T_u$ is at most $\bigO(h\log\Delta)$, since its leaves satisfy
$\beta(v) < \Delta^{h+1}$, and $\beta(v) \geq 2^{\texttt{depth}(v)}$.
For the case of a regular tree, $\beta(v) = \Delta^{\texttt{depth}(v)}$ and so the depth of $T_u$ is at most $h$, giving the result.

\section{A Query Lower Bound of $\Omega(\sqrt{\Delta} \cdot \log_{\Delta} n)$ when  $q\sim {1}/{\sqrt{\Delta}}$}\label{sec:lb-query}

We now prove Theorem \ref{thm:lower-exp}. Specifically, we wish to prove that for $\Delta \geq 3$, on the complete $\Delta$-ary tree of depth $D$,
every algorithm needs $\Omega(q \Delta D)$ queries in expectation. Note that, in particular, when $q$ is roughly $1/\sqrt{\Delta}$, and $n$ is the tree size, the query complexity becomes $\Omega(\sqrt{\Delta} \cdot \log_{\Delta} n)$.

To prove the lower bound of $\Omega(q \Delta D)$, consider the complete $\Delta$-ary tree of depth $D$.  We prove by induction on $D$, that if the treasure is placed uniformly at random in one of the leaves, then the expected query complexity of every algorithm is at least $q(\Delta/2 - 1)D$.
If $D=0$, then there is nothing to show.
Assume this is true for $D$, and we shall prove it for $D+1$.
Let $T_1,\ldots ,T_{\Delta-1}$ be the subtrees hanging down from the root (in the induction, the ``root'' is actually an internal node, and so has $\Delta-1$ children), each having depth $D$.
Let $i$ be the index such that $\treasure \in T_i$,
and denote by $Q$ the number of queries before the algorithm makes its first query in $T_i$.
We will assume that the algorithm gets the advice in the root for free.
Denote by $Y$ the event that the root is \faulty. In this case, Observation \ref{choosing} applies, and we need at least $\Delta/2 - 1$ queries to hit the correct tree.
We subtracted one query from the count because we want to count the number of queries strictly before querying inside $T_i$.
We therefore get
$\expct{Q} \geq \prob{Y} \cdot \cexpct{Q}{Y} \geq q(\Delta/2 - 1).$
By linearity of expectation, using the induction hypothesis, we get the result for a uniformly placed treasure over the leaves, and so it holds also in the adversarial case.
\qed

\section{Open Problems}\label{sec:open}
As mentioned, the model with permanent noise can be extended to general graphs. Essentially, when a node $u$ is correct its advice points to one of the neighbors of $u$ on a shortest path to the treasure. As there might be several such neighbors, one may consider an adversary that chooses which of these neighbors to point to. In the purely probabilistic setting, with probability $q$, each node is faulty, in which case its advice points to a random neighbor. The semi-adversarial setting can similarly be defined. Obtaining efficient search algorithms for general graphs is highly intriguing. Even though the likelihood
of a node being the treasure under a uniform prior can still be computed in principle, it is not clear how to compare two nodes as in
Theorem \ref{thm:zplus} because there may be more than a single path between them.

In a limited regime of noise, we believe that memoryless strategies might very well be efficient also on general graphs, and we pose the following conjecture.
Proving it may require the use of tools from the theory of Random Walks in Random Environments, which seem to be lacking in the context of general graph topologies.
\begin{conjecture}
There exists a probabilistic following algorithm that finds the treasure in expected linear time on every undirected graph assuming $q<{c}/{\Delta}$ for a small enough $c>0$.
\end{conjecture}



\bibliographystyle{plain}
\bibliography{biblio}

\clearpage
\centerline{\huge{Appendix}}
\appendix

\section{Analysis of the median algorithm}\label{sec:median}

In this section we prove Theorem~\ref{thm:median}, concerning the expected number of queries for the median algorithm on the path.

\begin{proof}
Observe that if the weight of $\tau$ exceeds half the total weight of all nodes, the algorithm queries $\tau$.

\begin{lemma}
In every step, the total weight of all nodes does not increase.
\end{lemma}

\begin{proof}
Let $W$ be the total weight of all nodes before the query to node $i$, let $W^+ \le \frac{W}{2}$ be the total weight of nodes above $i$ and let $W^-\le \frac{W}{2}$ be the total weight of nodes below $i$. If the advice in $i$ points up then the total weight becomes $2pW^+ + 2(1-p)W^- \le pW + (1-p)W \le W$. A similar bound holds if the advice points down.
\end{proof}

Let $L_t$ denote the logarithm (in base 2) of the total weight of all nodes after step $t$. Initially, $L_0 = \log \left(\sum_{i=1}^n 1\right) = \log n$. The lemma above implies:

\begin{corollary}
For every $t \ge 1$, $L_t \le \log n$.
\end{corollary}

Recall the notation of $H(p)$ and $I(p)$.
Let $\ell_t$ denote the logarithm (in base~2) of the weight of $\tau$ after step $t$. Observe that $\ell_0 = \log 1 = 0$. Let $E[\ell_t]$ denote the expectation of this random variable, where expectation is taken over choice of random advice.

\begin{lemma}
In every step $t$:
$$E[\ell_t - \ell_{t-1}] = I(p)$$
\end{lemma}

\begin{proof}
In every step, the advice is correct with probability $p$ and then the weight of $\tau$ is multiplied by $2p$, and faulty with probability $1-p$, and then the weight of $\tau$ is multiplied by $2(1-p)$. Hence:
$$E[\ell_t] = p(\log (2p) + \ell_{t-1}) + (1-p)(\log (2 - 2p) + \ell_{t-1}) = \ell_{t-1} + 1 - H(p)$$
as desired.
\end{proof}

Summarizing:

\begin{itemize}

\item $L_0 - \ell_0 = \log n$.

\item $L_t$ does not increase with $t$.

\item $\ell_t$ drifts upward at an expected rate of $I(p)$.

\item Informally, one would expect $\ell_t$ to overtake $L_t$ in $\frac{\log n}{I(p)}$ steps.

\item For fixed $p$ and sufficiently large $n$, the above estimate is very close to the truth, because the step size of $\ell_t$ is bounded (it is either $\log (2p)$ or $\log (2 - 2p)$), and the steps are independent. Hence Chernoff's bounds can be applied to show strong concentration around the expectation.

\end{itemize}
This concludes the proof of Theorem~\ref{thm:median}.
\end{proof}

\section{A $O(\sqrt{\Delta}\log n \cdot \log \log n)$ Query Algorithm}\label{sec:regular-upper}

This section is dedicated to proving Theorem \ref{thm:regular-upper}.
Algorithm $\algregular$ is described and analyzed. It performs almost optimally (up to lower order terms), assuming the noise parameter satisfies  $q<c/\sqrt{\Delta}$ for a sufficiently small positive constant $c$ (as opposed to  $q < (1- \eps)\Delta^{-1/2}$ as in Theorem   \ref{thm:weak-upper}). More precisely, in that regime, it finds the treasure in $\bigO(\algtwosepcost)$ queries in expectation.
We stress that, in contrast to Theorem \ref{thm:weak-upper}, in this section we do not allow the noise parameter to depend on the node.

Before we continue, let us note that taking a small enough $c$, the condition $q<c/\sqrt{\Delta}$ we are using here actually implies\footnote{Indeed, recall that for regular trees, Condition \condition (see Eq. \eqref{eq:condition})
reads
$q <\frac{1-\varepsilon- \Delta^{-1/4}}{\sqrt{\Delta} + \Delta^{1/4}}
$. Now, $\Delta \geq 2$ implies that $1-\Delta^{-1/4} \geq 1-2^{-1/4}$ and
$\Delta^{1/4} \leq \sqrt{\Delta}$. Hence
$\frac{1-\varepsilon- \Delta_v^{-1/4}}{\sqrt{\Delta} + \Delta^{1/4}} \geq \frac{1-2^{-1/4}-\varepsilon}{2} \F 1 {\sqrt{\Delta}}$.
We may set $\varepsilon = \frac{1-2^{-1/4}}{2}$
so that, as soon as $c < \frac{1-2^{-1/4}-\varepsilon}{2} = \frac{1-2^{-1/4}}{4}$, $q<c\Delta^{-1/2}$ implies Condition \condition with that choice of $\varepsilon$.} Condition \condition with $\varepsilon = (1-2^{-1/4})/2$.

Algorithm $\algregular$
runs two algorithms in parallel, namely,
$\fast$, and $\intermediate$.
Algorithm $\fast$ is actually  $\algsep$, except that it applies the local procedure with parameter $h$ being $\smallh=\lceil \kappa_2 \log \log n\rceil$ rather
than $\Theta(\log n)$. Algorithm $\intermediate$ is similar to $\algsep$, in the sense that  it uses $h$ being $\bigh = \lceil \kappa_1 \log n\rceil$. However it uses a different local exploration procedure, see more details in Section \ref{sec:mid}. $\kappa_1$ and $\kappa_2$ are constant independent of $n$ whose value will be determined later.
We will henceforth omit the ceiling $\lceil \cdot \rceil$ in the interest of readability.

Let us first recall some of the definitions that were introduced in Section \ref{sec:weak-upper}.
Here $T_h(u)$ denotes the tree of nodes at distance at most $h$ from $u$. Call a leaf $v \in T_h(u)$ a \emph{nominee} if its distance to $u$ is exactly $h$. Let  $\leaf[u]$ be the leaf on $T_h(u)$ closest to $\target$ if $\target \notin T_h(u)$ and $\leaf[u]= \tau$ otherwise.
Denote by $\mathcal{U}(u)$ the set of nominees that are not in the same component as $\leaf[u]$ in $T \setminus \{ u \}$.
Call a nominee $v$ {\em promising} if
$
\sum_{w\in [u,v \rangle} X_w\geq \F 2 3  h
$,
where  $X_w = 1$ if the advice at $w$ is pointing to $v$, $X_w = - 1$ if it is pointing to $u$, and
$X_w = 0$ otherwise. (Note that $X_u$ can never be $-1$.)
 Recall also that $u$ is called  \emph{$h$-misleading}, if one of the two following events happens:
 \begin{itemize}
 \item $\leaf[u] \neq \target$ and $\leaf[u]$ is not promising, or
 \item
there is some promising nominee in $\mathcal{U}(u)$.
 \end{itemize}
Let $\excellent$ be the event that no separator on the way to the treasure is $\smallh$-misleading.
The following claim is a direct consequence of Lemma \ref{lem:misleading} (regular tree case) and linearity of expectation, summing the query complexity of the $\lceil \log n\rceil$ separators on the way to the treasure.
\begin{claim}\label{claim:excellent}
$
\PP(\excellent) \cdot \queries\left(\fast \mid  \excellent \right) =
\bigO\left(\algtwosepcost\right).
$
\end{claim}
To bound the total expected number of queries, we run in parallel  algorithm $\intermediate$.
All that remains is then to prove that $\PP(\excellentc) \cdot \queries\left(\intermediate \mid  \excellentc \right) = \bigO\left(\algtwosepcost\right)$. In fact, we shall prove a stronger claim --- that the bound on the r.h.s is only $\bigO(\sqrt{\Delta}\log n)$. We further remark that following the arguments below and taking appropriate constant parameters and reducing the noise, one can get down to $\bigO(1+\sqrt{\Delta}/\log^k n)$ for any constant $k$. However, since proving a bound of $\bigO(\sqrt{\Delta}\log n)$ suffices for us, we shall stick to this bound.

\subsection{Algorithm $\intermediate$}\label{sec:mid}
As mentioned, $\intermediate$ is similar to $\algsep$ except that it uses a different local procedure. More precisely, recall that $\algsep$ executes Procedure $\local(u)$ by running $\algzplus$ on $T_h(u)$ until it either finds the treasure or finds a promising nominee,
and in the latter case, it declares that the treasure is on the connected component of $T \setminus \set{u}$ containing this nominee. In the context of Algorithm $\intermediate$, for technical commodity, we choose to run Procedure $\local(u)$ with a simpler  exploration routine which we call $\algy$. It is less efficient than $\algzplus$ but its simplicity will be useful for analyzing its behaviour in various, ``less clean'', circumstances. Indeed, we will need to analyze the performances of $\algy$, conditioning on the event $\excellentc$, implying that some parts of the tree have to be pointing in the wrong direction.
The fact that $\algy$ is less efficient than $\algzplus$ will not affect the final bound, as its running time will dominate the total running time with very low probability.

\paragraph{Algorithm $\algy$.} Recall in this section we only deal with $\Delta$-regular trees.
Define \emph{level $i$} as the set of all nodes at distance $i$ from the root.
At each round,  $\algy$ only compares nodes within a given level $i$.
Specifically, it goes to the node in level $i$ with most arrows pointing at it among the non-visited nodes in level $i$.
It only considers vertices whose parent has been explored already. The index $i$ is incremented modulo the depth of the tree $D$, on every round.
Below is a description in pseudocode.
The loop over $i$ explains the name $\algy$.

\RestyleAlgo{boxruled}
\LinesNumbered{}

\begin{algorithm}[ht]
  \caption{Algorithm $\algy$
}
\label{alg:simple}
Continuously loop over the levels $1, 2, \ldots, \diam$ \\
When considering level $i$, go to the yet unexplored reachable node at the current level (if one exists) that has most arrows pointing to it.
\end{algorithm}

In what follows we will analyse Algorithm $\algy$ conditioning on some parts of the tree being misleading. For readability considerations, the interested reader might wish to first see how it behaves on a simpler scenario, without any conditioning. The proof of this is was shown in a preliminary version of the paper (see \cite{advice1}).

\begin{lemma}\label{lem:algy}
Consider a (not necessarily complete) $\Delta$-ary tree. Then the expected number of queries of $\algy$ is
$\queries(\algy) = \bigO(\diam^3 \sqrt{\Delta})$.
\end{lemma}

In fact, a slightly more refined analysis shows that
$\queries(\algy) = \bigO(\diam^2 \sqrt{\Delta})$, but this is not needed for our current purposes, and so we omit it.

\subsection{Analysis of $\intermediate$ Conditioning On The Complement of $\excellent$}\label{sec:mainclaim}

To complete the proof of Theorem \ref{thm:regular-upper} we will show that
if $c$ small enough, then
\eq{
\PP(\excellentc) \cdot \queries\left(\intermediate \mid  \excellentc \right) = \bigO(\sqrt{\Delta} \log n).
}

\paragraph{Decomposing $\excellentc$.} At a high level, we seek to break $\excellentc$ into many elementary bad events.
Denote by $u_1, \ldots u_{\ell}$ the sequence of separators on the way to the treasure $\target$.
Note that $\ell \leq \lceil \log n \rceil$.
First,
\eq{
\excellentc = \bigcup_{i \leq \ell} \left\{ u_i \text{ is $h_2$-misleading}\right\}.
}
Using the union bound argument in Section \ref{app:app} (Claim \ref{claim:unionbound}),
\eql{
 \queries\left(\intermediate \bigcap  \excellentc \right) \leq \sum_{i \leq \ell} \queries\left(\intermediate \bigcap \text{$u_i$ is $\smallh$-misleading}\right),
}{eq:ubound}
where, to keep the equation light we write $\queries(\alg \bigcap X)$ in place of  $\queries(\alg\mid X)\cdot \PP(X)$ where $\alg$ is an algorithm and $X$ is an event.

Since we ultimately want to show that the left hand side in the previous equation is $\bigO(\sqrt{\Delta}\log n)$, it is sufficient to show that for every fixed $i\leq \ell$,
\eql{
 \queries\left(\intermediate\bigcap \text{$u_i$ is $\smallh$-misleading}\right) =
 \bigO(\sqrt\Delta).
}{eq:sepbound}
From now on, we fix $i$ and focus on the case where $u_i$ is $\smallh$-misleading.
Recall that algorithm $\intermediate$, just as $\algsep$, proceeds in phases of local exploration,
running also an exhaustive search in parallel to handle the case that one of the local explorations ends with a wrong answer.
Denote by
$\good$ the event that all separators on the way to the treasure, namely, $u_1, \ldots, u_\ell$, are not $\bigh$-misleading.
Under $\good$, each local explorations outputs the correct output (the next separator leading to the treasure) and the local exploration phases amount to running $\algy$ on $T_{\bigh}(u_j)$ for $j \leq \ell$. Now,
\eq{
\queries\left(\intermediate\bigcap  \text{$u_i$ is $h_2$-misleading} \right)
 & =
\queries\B{\intermediate \bigcap  \B{\text{$u_i$ is $h_2$-misleading}\cap \good}}
\\ & +
\queries\left(\intermediate\bigcap \left( \text{$u_i$ is $h_2$-misleading}\cap \neg \good \right) \right).
}
By Lemma \ref{lem:misleading} (regular tree case),
\[
\PP(\neg\good) \leq 2(1 - \varepsilon)^{h_1}
=
2(1 - \varepsilon)^{\kappa_1 \log n}.
\]
Recall that Condition \condition is satisfied with the constant $\varepsilon = (1-2^{-1/4})/2$,
and so taking $\kappa_1$ to be a large enough constant, gives that $\PP(\neg\good) < 1/n$. This means that if $\good$ does not hold, it is fine to resort to exhaustive search, as the second term above becomes $\bigO(1)$.
Note that given that $u_i$ is $h_2$-misleading may affect the advice which is relevant for the local explorations corresponding to other separators. Hence, we have:
\eq{
& \queries\left(\intermediate\bigcap  \text{$u_i$ is $h_2$-misleading} \right)
\\ & \leq
\sum_{j \leq \log n}
 \queries\left(\algy\left(T_{\bigh}(u_j)\right) \bigcap \left(\text{$u_i$ is $h_2$-misleading} \cap \good\right) \right) + \bigO(1)
\\ & \leq
\sum_{j \leq \log n}
 \queries\left(\algy\left(T_{\bigh}(u_j)\right) \bigcap \text{$u_i$ is $h_2$-misleading} \right)
 +\bigO(1)
.
}
The first inequality is by linearity of expectation,
and the last inequality follows from the fact that for every algorithm $A$ and every two events $E_1 \subseteq E_2$,
$\queries(A \bigcap E_1) \leq \queries(A \bigcap E_2)$.

For the sake of lightening notations, we henceforth refer to $u_j$ as $\newsource$ and $u_i$ as $u$. This choice of notations reflects the fact that
we are rooting the tree at $u_j = \source'$ and running $\algy$ on $T_{h_1}(\newsource)$.
The fact that $\newsource$ and $u$ are separators is not relevant in this analysis.
We also denote by $\newtarget$ the leaf on $T_{h_2}(u)$ that is closest to $\treasure$ and by $\target'$ the leaf of $T_{h_1}(u)$ that is closest to $\target$ or simply $\target$ if
$\target \in T_{h_1}(u)$.
With these notations Equation \eqref{eq:sepbound} immediately follows once we prove:

\begin{lemma}\label{lem:corelem2}
For every $\sigma', u \in T$,
\eq{
 \queries\left(\algy\left(T_{\bigh}(\sigma')\right) \bigcap \text{$u$ is $h_2$-misleading}\right) =\bigO\left( \frac{\sqrt{\Delta}}{ \log n}\right).
}
\end{lemma}

\paragraph{Decomposing the event $\{\text{$u$ is $h_2$-misleading}\}$.}
So far we saw that it is sufficient to analyse the events  where one separator is $h_2$-misleading.
We now pursue decomposing these events into even smaller ones.
To this aim the following definition is convenient.
\begin{definition}\label{def:badevents}
Let $a,b \in T$ be two nodes such that $a$ is the closest one to $\tau$ out of the nodes in
$\pathcc{a}{b}$. Noting that a vertex can never point to itself:
\begin{itemize}
\item
For $S \subseteq \pathoc{a}{b}$, denote by $\badsides{a,b}$ the event that none of the nodes of $S$ point towards $a$ and none towards $b$.
\item
For $S \subseteq \pathco{a}{b}$, denote by $\badup{a,b}$ the event that the nodes of $S$ all point towards $b$.
\end{itemize}
See Figure \ref{fig:cases}.
\end{definition}
\begin{claim}\label{claim:estimates}
For every $a,b$ and $S$ as in Definition \ref{def:badevents},
\begin{itemize}
\item
$\PP\left(  \badsides{a,b} \right)
\leq q^{\lvert S \rvert}$,
\item
$\PP\left(\badup{a,b}\right)
 \leq \left( \F q {\Delta}\right)^{\lvert S\rvert}$.
\end{itemize}
\end{claim}

Let us now see in more detail what it means for a node $u$ to be $\smallh$-misleading.
Several cases need to be considered.
\begin{enumerate}
\item
$\leaf[u] \neq \target$ and $\tau_u$ is not promising. In this case $\lvert \path{u}{\leaf[u]} \rvert = \smallh$ and  the sum of advice on $\pathco{u}{\tau_u}$ is strictly less than $\F23 h_2$. In this case, at least one of the following two must be true:
\begin{enumerate}
\item
There are
$\kside \smallh$ locations on the path $\pathco{u}{\leaf[u]}$ where the advice points outside of the path
(the value of the corresponding $X_i$'s is $0$).
This corresponds to
$\badsides{\newtarget, u}$ for some set $S \subseteq\pathco{u}{\leaf[u]}$ of size
$|S| = \kside \smallh$.
\item
There are $\kup \smallh$ locations on $\pathoo{u}{\tau_u}$ that point towards $u$ (the value of the corresponding $X_i$'s is $1$). This corresponds to $\badup{\leaf[u], u}$ for some set $S \subseteq\pathcc{u}{\leaf[u]}$ of size $\lvert S \rvert = \kup \smallh$.
\end{enumerate}
To see why, let $X^{\tau_u}$ be the number of pointers pointing to $\tau_u$, $X^u$ the number of pointers pointing to $u$ and $X^{0}$ the number pointers pointing to the sides. We have $X^{\tau_u}-X^u\leq 2h_2/3$, and hence $X^u-X^{\tau_u}\geq -2h_2/3$. If $(a)$ does not hold, then $X^{0}<h_2/6$ which implies that
$X^{\tau_u}+X^u\geq 5h_2/6$. Summing the two equations, we get $X^u\geq h_2/12$, as stated in (b).
\item Some  $v\in \mathcal{U}(u)$ is promising. In this case there must be some $\F 2 3\smallh$ locations on $\path{u}{v}$ that point towards $v$. This corresponds to $\badup{u,v}$ for some
$S \subseteq\badup{\pathcc{v}{u}}$ of size $|S|=\F 2 3 \smallh$.
\end{enumerate}
Define
\begin{itemize}
    \item
$\mathcal{C}(u)=\{S\subseteq\path{u}{\leaf[u]}\mid |S|=\kside \smallh\}$,
\item $\mathcal{D}(u) = \{S\subseteq\path{u}{\leaf[u]}\mid |S|=\kup \smallh\}$, and
\item $\mathcal{E}(u)=\{(v,S)\mid \mbox{$v \in \mathcal{U}(u)$,  } S\subseteq\pathcc{u}{v}, \mbox{ and } |S|=\F 2 3 \smallh\}$.
\end{itemize}
Combining Definition \ref{def:badevents} with the previous paragraph, yields
\eql{
\{ \text{$u$ is $\smallh$-misleading} \}
& \subseteq  \left\{\leaf[u] \neq \target \text{ and }\newtarget \text{ is not promising} \right\} \bigcup  \bigcup_{v \in \mathcal{U}(u)}\left\{ v \text{ is promising} \right\}
\\ & \subseteq
\bigcup_{S \in \mathcal{C}(u)} \badsides{\newtarget, u}
\bigcup_{S \in \mathcal{D}(u)} \badup{\newtarget, u}
\bigcup_{(v,S) \in \mathcal{E}(u)} \badup{u,v}.
}{eq:cup}
In fact,
$\mathcal{E}(u)$ needs to be further decomposed taking into account the position of $u$ within the tree rooted at $\sigma'$ and the path from $\sigma'$ to $\tau'$, see Figure \ref{fig:cases}. For each $(v,S)\in\mathcal{E}(u)$, let \[k(v)= \lvert\pathcc{u}{v} \cap \pathcc{\sigma'}{\tau'}\rvert.\] For each non-negative integer $k\geq 0$, let
\[
\mathcal{E}_k(u)=\{(v,S)\in \mathcal{E}(u)\mid k(v)=k\}.
\]
Clearly, as $\lvert\pathcc{u}{v}\rvert\leq h_2$, we have $\mathcal{E}(u)=\cup_{k=0}^{\smallh}\mathcal{E}_k(u)$.\\
\begin{figure}[!ht]
    \centering
    \includegraphics[width=0.9\textwidth]{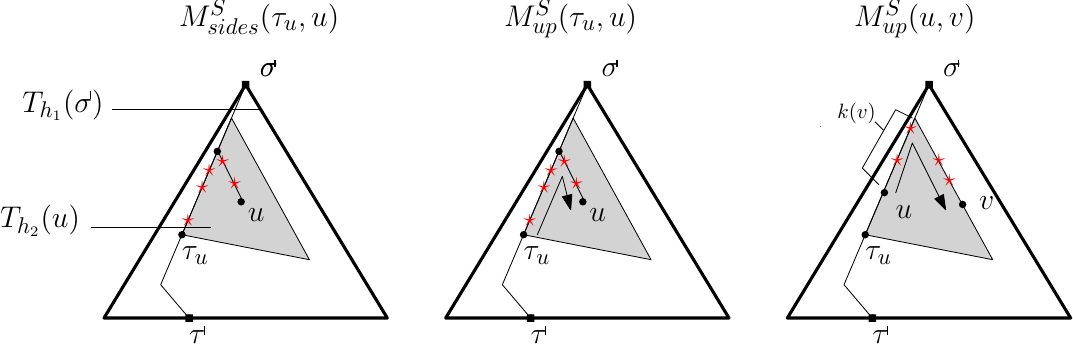}
        \caption{\small Different relative positions of $u,\newtarget$ and $\sigma'$. The path $\pathcc{u}{\newtarget}$ and different mistake patterns. On the left, mistakes (depicted as red stars) point outside of $\path{u}{\leaf[u]}$, on the middle tree they point towards $u$ and on the right one, they point towards a nominee of $T_{\smallh}(u)$, $v \in \mathcal{U}(u)$.}\label{fig:cases}
\end{figure}

Using the union bound (Claim \ref{claim:unionbound}) as in Eq. \eqref{eq:ubound}, the aforementioned decomposition in Eq. \eqref{eq:cup} implies:
\eql{
 \queries\left(\algy\left(T_{\bigh}(\newsource)\right)  \bigcap \text{$u$ is $\smallh$-misleading}\right)
 &\leq  \sum_{S \in \mathcal{C}(u)}
  \queries\left(\algy(T_{\bigh}(\newsource))\bigcap\badsides{u,\newtarget}\right)\nonumber\\
&+\sum_{S \in \mathcal{D}(u)} \queries\left(\algy(T_{\bigh}(\newsource))\bigcap  \badup{u,\newtarget}\right)\nonumber\\
&+\sum_{k=0}^{\smallh}\sum_{(v,S) \in \mathcal{E}_k(u)} \queries\left(\algy(T_{\bigh}(\newsource))\bigcap \badup{u,v}\right)
}{eq:maineq}
To prove Lemma \ref{lem:corelem2}, our goal will be to show that each sum in the above equation is at most $\bigO(\sqrt{\Delta} /  \log n)$.

\subsection{Analysing Atomic Expressions}\label{sec:corelem2}

To prove that each sum is indeed $\bigO(\sqrt\Delta / \log n)$ we use the following two lemmas (proved in Section \ref{sec:resilience}), which encapsulate the core of this proof, namely, the resilience of $\algy$ to certain kinds of error patterns.
\rLemma{lemResilienceE}{
Consider a tree $T$ rooted at $\sigma$ with treasure located at $\target$. Let $a,b\in T$ be two nodes such that $a$ is the closest one to $\tau$ out of the nodes in $[a,b]$. Then, for any $S \subseteq \pathoc{a}{b}$, we have
\eq{
\queries\left(\algy \mid \badsides{a,b}\right)
= \bigO\left( \diam^4 \Delta^{\F{|S| + 1}2}\right).
}
}
\rLemma{lemResilienceD}{
Consider a tree $T$ rooted at $\sigma$ with treasure located at $\target$. Let $a,b\in T$ be two nodes such that $a$ is the closest one to $\tau$ out of the nodes in $[a,b]$. Then, for any $S \subseteq \pathoc{a}{b}$, we have
\eq{
\queries\left(\algy \mid  \badup{a,b}\right)
= \bigO\B{\diam^4 \Delta^{K + \R2} 3^{|S|}},
}
where $K = |S \cap [\sigma, \tau]|$.
}
As a first step to bounding the three sums of Eq. \eqref{eq:maineq}, note that:
\begin{align}
\lvert \mathcal{C}(u)\rvert &\leq 2^{\smallh}\label{s1} \\
\lvert \mathcal{D}(u)\rvert &\leq 2^{\smallh}, \label{s2} \\
\lvert \mathcal{E}_k(u)\rvert &\leq 2^{\smallh}  \Delta^{\smallh - k}.\label{s3}
\end{align}
Indeed, $\mathcal{C}(u),\mathcal{D}(u)$ are subsets of a path of length $\smallh$. For the last term, the number of $v \in \mathcal{U}(u)$ at distance $\smallh$ from $u$ for which $k(v) = k$ is bounded by $\Delta^{\smallh-k}$.

We are now ready to bound the three sums.
\paragraph{Bounding the first term in Eq.~\eqref{eq:maineq}.}
We consider $S \in \mathcal{C}(u)$,
so
$S\subseteq \path{u}{\leaf[u]}$ and $|S|=\kside \smallh$, and $\leaf[u]$ is the closest to $\tau$ of all the nodes on the path.
By Lemma \ref{lemResilienceE},
\[
\queries\left(\algy(T_{\bigh}(\newsource))\mid\badsides{\newtarget, u}\right)
=
\bigO\B{\bigh^4 \Delta^{\F{|S| + 1}2 }}.
\]
According to Claim \ref{claim:estimates},
\[
\PP(\badsides{\leaf[u], u})\leq q^{|S|}.
\]
Combining these bounds and \eqref{s1} yields
\begin{align*}
\sum_{S \in \mathcal{C}(u)}
\queries\left(\algy(T_{\bigh}(\newsource))\bigcap\badsides{\newtarget, u}\right)
&=
\bigO\left( 2^{\smallh} \cdot q^{|S|} \cdot \bigh^4  \Delta^{\F{|S| + 1}2}
\right)\\
&=
\bigO\B{ \sqrt\Delta \cdot 2^{\smallh} \cdot c^{|S|} \cdot \bigh^4}
,
\end{align*}
because $q<c/\sqrt{\Delta}$.
Recall that $\bigh = \kappa_1\log n$, $\smallh = \kappa_2 \log \log n$, and $|S| = \R6 h_2$.
$\kappa_1$ was already set to be some constant.
For any constant $\kappa_2$, taking a small enough constant $c$ guarantees that
the previous expression is $\bigO(\sqrt{\Delta} / \log n)$ as needed.
\paragraph{Bounding the second term in Eq.~\eqref{eq:maineq}.}
$S \in \mathcal{D}(u)$,
so
$S\subseteq \path{u}{\leaf[u]}$ and $|S|=\kup \smallh$.
Therefore, by Lemma \ref{lemResilienceD}, and noticing that $K \leq |S|$ and $3^{|S|} \leq 2^{h_2}$, we have
\[
\queries\left(\algy(T_{\bigh}(\newsource))\mid  \badup{\newtarget, u}\right)
=
\bigO\B{\bigh^4 \Delta^{|S| + \R2} 2^{h_2}}.
\]
Combined with Claim \ref{claim:estimates} and \eqref{s2}:
\eq{
\sum_{S \in \mathcal{D}(u)}
\queries\left(\algy(T_{\bigh}(\newsource))\bigcap  \badup{\newtarget,u}\right)
& =
\bigO\left( 2^{\smallh} \cdot \left( \frac{q}{\Delta}\right)^{|S|} \cdot \bigh^4  \Delta^{|S| + \R{2}} 2^{\smallh} \right)
\\ & =
\bigO\left( \sqrt\Delta \cdot 4^{\smallh} \cdot q^{|S|} h_1^4 \right).
}
Again, since $|S| = \R{12} h_2$, then for any constants $\kappa_1,\kappa_2$, the constant $c$ can be chosen so that this is $\bigO(\sqrt{\Delta} / \log n)$.
\paragraph{Bounding the third term in Eq.~\eqref{eq:maineq}.}

$(v,S) \in \mathcal{E}_k(u)$,
where $v \in \mathcal{U}(u)$, $S\subseteq \pathcc{u}{v}$, and  $|S|=\F 2 3 \smallh$.
Also,
$|\pathcc{u}{v} \cap \pathcc{\sigma'}{\tau'}| = k$, and
so $|S \cap \pathcc{\sigma'}{\tau'}| \leq k$.
As $v \in \mathcal{U}(u)$, then $u$ is the closest to treasure of the vertices on $\pathcc{u}{v}$.
By Lemma \ref{lemResilienceD},
\[
\queries\left(\algy(T_{\bigh}(\newsource))\mid \badup{u,v}\right) =
\bigO\B{h_1^4 \Delta^{k + \R{2}} 3^{h_2} }
\]
Combined with \eqref{s3} and Claim \ref{claim:estimates}:
\eq{
&\sum_{k=0}^{\smallh}\sum_{(v,S) \in \mathcal{E}_k(u)} \queries\left(\algy(T_{\bigh}(\newsource))\bigcap \badup{u,v}\right)
\\ & =
\bigO\left(\sum_{k \leq \smallh} 2^{\smallh} \Delta^{\smallh-k} \cdot  \left( \frac{q}{\Delta}\right)^{\F 2 3 \smallh}  \bigh^4 \cdot \Delta^{k + \R2} 3^{\smallh}\right)
\\ & =
\bigO\left( \sqrt\Delta \cdot \smallh 6^{\smallh} \bigh^4 \B{q^2 \Delta}^{\F 1 3 \smallh} \right).
\\ & =
\bigO\left( \sqrt\Delta \cdot \smallh 6^{\smallh} \bigh^4 \cdot c^{\F 1 3 \smallh} \right).
}
Similarly to the two previous sums, this whole expression can be made as small as $\bigO(\sqrt{\Delta} / \log n)$.

Note that we assumed for simplicity that $u$, $\tau_u$ and $v$ are all inside $T_{h_1}(\sigma')$. If they are not, we take nodes that are the closest to them on this subtree, which can only improve the bounds.

This concludes the proof of Lemma \ref{lem:corelem2} and hence completes the proof of Theorem \ref{thm:regular-upper}.

\subsection{The Lemmas About the Resilience of $\algy$}\label{sec:resilience}

Recall that Algorithm $\algy$ loops over the levels $1, 2, \ldots, \diam$, and
when considering level $i$, it goes to the yet unexplored reachable node at the current level (if one exists) that has most arrows pointing to it on the path from the root. In this section, we prove two lemmas that bound the expected number of queries done by Algorithm $\algy$ conditioning on some ``bad'' events.

Before stating the lemmas we need a couple of definitions that will be used in both corresponding proofs. We say that a node $v$ is a {\em competitor} of $u$ if it has the same depth as $u$. A competitor $v$ of $u$  {\em beats}  $u$ if the number of pointers pointing to it on the path from the root is at least  the number pointing to $u$.

\begin{figure}[!ht]
    \centering
    \includegraphics[width=0.6\textwidth]{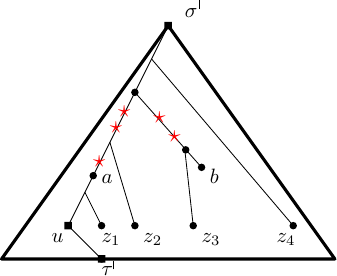}
        \caption{\small
        Notations introduced in the proof of Lemma \ref{lemResilienceE} and \ref{lemResilienceD}.
        Points of $S$ are depicted in red. On the figure $n(z_1)=0, m(z_1)=0$, $n(z_2)=1, m(z_2)=0$, $n(z_3)=3,m(z_3)=2$ and $n(z_4)=3,m(z_4)=0$.}\label{fig:resilience3}
\end{figure}

\lemResilienceE
\begin{proof}
As in the proof of Lemma \ref{lem:algy}, we break the number of queries made by $\algy$ conditioning on $\badsides{a,b}$ into a sum of random variables $Q_j$ which correspond to the expected number of queries needed to discover the $j$-th node $u_j$ on the path $\pathcc{\source}{\target}$ once the $(j-1)$-th node, $u_{j-1}$, was discovered.
Each $Q_j$ is bounded above by $D$ times the expected number of competitors of $u_j$ that beat it.
This is because each loop takes $D$ steps, and only a subset of the competitors that beat $u_j$ will actually be checked by $\algy$ on level $j$ before trying the correct node.
Hence, an upper bound on the expected number of competitors who beat any given $u\in \pathcc{\source}{\target}$ translates to an upper bound on $\queries\left(\algy(T) \mid \badsides{\target,\source}\right)$ by multiplying it by $D^2$.

Let $u$ be any node on the path $\pathcc{\source}{\target}$, and $z$ be a competitor of $u$.
Define $k(z)$ as half the distance between $z$ and $u$, namely, $d(z,u)/2$, and denote $n(z) := \lvert S \cap \pathcc{\sigma}{\tau} \cap \pathcc{u}{z} \rvert$ and $m(z):=\lvert S \cap \pathcc{\sigma}{\tau}^c \cap \pathcc{u}{z} \rvert$.
See Figure \ref{fig:resilience3} for illustraion.

First note, that since all advice of $S \subseteq [a,b]$ points sideways w.r.t.\ to this path, then any of it which is on the path $[u, z]$ also points sideways w.r.t.\ it, except possibly at one point, the least common ancestor between $u$ and $z$, which may actually point towards $z$. The different cases are seen in Figure \ref{fig:resilience3}:
\begin{itemize}
\I
For $z=z_1$, the paths do not intersect at all.
\I
In the case that $z= z_2$, if the least common ancestor of $u$ and $z_2$ is a member of $S$, then it could point towards $z_2$, and that would be sideways w.r.t.\ $[a,b]$.
\I
For $z=z_3$, the least common ancestor of $b$ and $z_3$ could point towards $z_3$.
\I
For $z=z_4$, the least common ancestor of $a$ and $b$ could point towards $z_4$.
\I
There is also the case where $a \notin \path{\sigma}{\tau}$, which is not depicted on Figure \ref{fig:resilience3}. The analysis remains valid, and in fact $n(z)=0$ for all competitors $z$.
\end{itemize}
This one special vertex, if it exists,
conditioned on that it points sideways w.r.t.\ $[a,b]$, points towards $z$ with probabilty $1/(\Delta - 2)$, and otherwise points sideways w.r.t.\ $[u,z]$.

Fix $k, n$ and $m$, and consider a competitor $z$ such that $k(z) = k$, $n(z) = n$, and $m(z) = m$.
On the path $\pathcc{u}{z}$ the number of advice remaining to be sampled is $2k - n - m - 1$. For any of these samples $s$, let $X_s$ be 1 if the pointer points to $u$, $-1$ if it points to $z$, and 0 otherwise.
By Lemma \ref{lem:minortec}:
\eq{
\PP\B{z \text{ beats } u}
& \leq
\B{1 - \R{\Delta -2}}
\PP\B{\sum_{s=1}^{2k-1-n-m} X_s \leq 0}
+
\R{\Delta - 2}
\PP\B{\sum_{s=1}^{2k-1-n-m} X_s \leq 1}
\\ & \leq
\BF1{\sqrt\Delta}^{2k-1-n-m}
+
\F3{\Delta - 2}
\BF1{\sqrt\Delta}^{2k-2-n-m}
\\ & =
\B{1 + \F{3\sqrt{\Delta}}{\Delta - 2}}
\BF1{\sqrt\Delta}^{2k-1-n-m}
\leq
7 \cdot \BF1{\sqrt\Delta}^{2k-1-n-m},
}
as $\Delta \geq 3$.
For fixed $k,n,m$ there are at most $\Delta^{k-m}$ nodes $z$ with $k(z)=k$ and $m(z)=m$. Also, for each such node, $n + m \leq 2k$. Hence, the total expected number of competitors that beat $u$ is at most:
\eq{
\sum_{k \leq D, n+m\leq 2k}  \Delta^{k-m} \cdot 7 \BF1{\sqrt\Delta}^{2k-1-m-n}=\bigO\left(\sum_{k \leq D, n+m\leq 2k} \Delta^{(n +1 -m)/2}\right).
}
For each choice of $k$ there is exactly one corresponding value of $n$. This $n$ satisfies $n \leq |S|$. There are also at most $D$ choices for $m$. Thus, the above is at most
\eq{
  \bigO\left(D^2 \Delta^{(|S|+1)/2}\right).
}
Multiplying this bound by $D^2$ gives the desired upper bound on $\queries\left(\algy(T) \mid \badsides{\target,\source}\right)$.
\end{proof}

\lemResilienceD
\begin{proof}
The general structure of the proof is similar to the proof of Lemma \ref{lemResilienceE}. Let $u$ be a node on the path $\pathcc{\source}{\target}$. Our aim is to show that the expected number of competitors of $u$ that beat it is $\bigO(D^2  \Delta^{K+\R{2}} 3^{|S|})$. Once this is established, multiplying this bound by $D^2$ gives the desired bound on the number of queries.

As in the proof of Lemma \ref{lemResilienceE},
let $z$ be a competitor of $u$.
Define $k(z)$ as half the distance between $z$ and $u$, namely
$k(z) := d(z,u)/2$. Denote $n(z) := \lvert S \cap \pathcc{\sigma}{\tau} \cap \pathcc{u}{z} \rvert$, and $m(z):=\lvert S \cap \pathcc{\sigma}{\tau}^c \cap \pathcc{u}{z} \rvert$.

Fixing $k,n$ and $m$, take a competitor $z$ such that $k(z) = k$, $n(z) = n$, and $k(z) = k$. For any of the nodes $s$ on the path from $u$ to $z$, let $X_s$ be 1 if the pointer points to $u$, $-1$ if it points to $z$, and 0 otherwise.
By Lemma \ref{lem:minortec}, the probability that such a $z$ beats $u$ is:
\[
\PP\B{z \text{ beats } u}\leq
 \PP\left( \sum_{s=1}^{2k-1-n-m} X_s \leq n+m\right)
\leq
3^{n+m} \Delta^{n+m-k +\R2}.\]
There are at most $\Delta^{k-m}$ such nodes $z$.
We bound the probability that each of these nodes $z$ beats $u$ using the trivial bound $1$ or the one above, depending on whether $n+m\leq k$ or $n+m> k$.
Hence the total expected number of competitors of $u$ who beat it is at most
\eq{
\sum_{k \leq \diam, n+m\leq k } \Delta^{k-m} \cdot 3^{n+m} \Delta^{n+m-k +\R2}
+ \sum_{k \leq D,n +m > k} \Delta^{k-m}.
}
Since $n + m \leq |S|$, and $n \leq K$, the term on the left is at most:
\[
3^{|S|} \sum_{k \leq D,n+m \leq k} \Delta^{K+\R{2}}
\leq
3^{|S|} \cdot D^2 \cdot \Delta^{K + \R{2}},
\]
where we used the fact that there are most $D$ distinct values for $k$ and $D$ distinct values for $m$, while there is only one choice of $n$ for each $k$.
As for the second term, since $n+m>k$, then it is at most:
\[
\sum_{k \leq D,n + m > k} \Delta^n
\leq
\sum_{k \leq D,n + m > k} \Delta^K
\leq
\sum_{k, m \leq D} \Delta^K
\leq
D^2 \cdot \Delta^K,
\]
concluding the proof.
\end{proof}

\section{Complementary Proofs}\label{app:app}

\subsection{Another Large Deviation Estimate}\label{sec:largb}
Here, we introduce another large deviation estimate used for the analysis of the query algorithm with uniform noise. It gives better results for large $h$, yet works only for identical random variables, and so suits regular trees, unlike Lemma \ref{lem:maintec}. It is used in the proof of the query algorithm presented in Theorem \ref{thm:regular-upper}, specifically in Lemmas \ref{lem:algy}, \ref{lemResilienceD} and \ref{lemResilienceE}.

\begin{lemma}\label{lem:minortec}
Consider random variables $X_i$ taking values $\{-1, 0, 1\}$ with respective probabilities
$(\frac{q}{\Delta}, q \left(1- \frac{2}{\Delta}\right), 1- q+\frac{q}{\Delta})$.
If $q< \F{c}{\sqrt\Delta}$, where $c<1/9$, then for all $0 \leq h \leq l$,
\[
\PP\left( \sum_{i=1}^{\ell} X_i \leq h\right)
\leq (3\sqrt{\Delta})^h \Delta^{-\ell/2}.
\]
\end{lemma}
\begin{proof}
Denote by $N,Z,P$ the number of $X_i$'s respectively equal to $-1,0,1$. We thus have
\begin{equation}\label{eq:IJ}
    N+Z+P=\ell.
\end{equation}
Under the assumption that $\sum_{i=1}^{\ell} X_i \leq h$, we also have
\begin{equation}\label{eq:-IJ}
    -N+P \leq h.
\end{equation}
These two equations characterize all possible values $N,Z,P$ can take. We denote by $A$ the set of such tuples.
Decomposing according to the $N$ locations of $-1$'s, $P$ locations of $1$'s and $Z$ locations of $0$'s we obtain
\eq{
\PP\left( \sum_{i=1}^{\ell} X_i \leq h\right)
&\leq
\sum_{N,Z,P \in A} \left(1-q+\frac{q}{\Delta}\right)^{P} q^{Z} \left(\frac{q}{\Delta}\right)^{N}
\binom{\ell}{N} \binom{\ell - N}{P}.
}
We crudely bound $\left(1-q+\frac{q}{\Delta}\right)^{P}\leq 1$ and obtain that
the aforementioned bound is at most
\eq{
\sum_{N,Z,P \in A} q^{Z} \left(\frac{q}{\Delta}\right)^{N}
\binom{\ell}{N} \binom{\ell - N}{P}.
}
Subtracting Eq. \eqref{eq:-IJ} from Eq. \eqref{eq:IJ}, we obtain
\eq{2N +Z \geq \ell  - h.}
Hence, for fixed $N,Z,P \in A$, we bound one of the inner terms:
\[
q^{Z} \left(\frac{q}{\Delta}\right)^{N}\leq \hat{q}^{Z+2N}\leq \hat{q}^{\ell  - h},\] where $\hat{q}:=\max\{q, (\frac{q}{\Delta})^{1/2}\}$. Under the assumption that $q<\frac{c}{\sqrt{\Delta}}$, we have $\hat{q}\leq \frac{\sqrt{c}}{\sqrt{\Delta}}$.
Hence, altogether
\[
\PP\left( \sum_{i=1}^{\ell} X_i \leq h\right)
 \leq \left(\frac{\sqrt{c}}{\sqrt{\Delta}}\right)^{\ell  - h} \sum_{N,Z,P \in A}
\binom{\ell}{N} \binom{\ell - N}{P}.
\]
There are at most $3^\ell$ possible outcomes for the sequence $X_i$ because it is comprised of $\ell$ variables that can take $3$ values each. Hence the term $\sum_{N,Z,P \in A}\binom{\ell}{N} \binom{\ell - N}{P}\leq 3^{\ell}$.
Altogether
\[
\PP\left( \sum_{i=1}^{\ell} X_i \leq h\right)
\leq 3^{\ell}  \cdot  \left(\frac{\sqrt{c}}{\sqrt{\Delta}}\right)^{\ell-h}.
\]
Finally, as $\sqrt{c}<1/3$,
the right hand side is thus
\eq{\leq (3\sqrt{\Delta})^h \Delta^{-\ell/2}.}
as stated.
\end{proof}

\subsection{Algorithm $\algy$ without Conditioning}\label{app:y}
\begin{lemma*}[Lemma \ref{lem:algy} restated]
Consider a (not necessarily complete) $\Delta$-ary tree. Then
$\queries(\algy) = \bigO(\diam^3 \sqrt{\Delta})$.
\end{lemma*}
\begin{proof}
Denote by $\NL(u)$ the number of nodes on the same depth as $u$ which have more discovered arrows than $u$ pointing to them.
This definition is central because of the following observation. The number of moves needed before finding $u_{i+1}$ once $u_i$ has been found is less than $\bigO(\diam \NL(u_i))$. Indeed, once $u_i$ is discovered, only \emph{a subset of} the nodes which have more arrows pointing to them than $u_{i+1}$ on layer $i+1$ are tried before $u_{i+1}$ (at step $(2)$ in the pseudocode description).
The loop over the levels (at step $(1)$) induces a multiplicative factor of $\bigO(\diam)$.

Using linearity of expectation, it only remains to estimate
$\EE\left(\NL(u_i)\right)$ where $u_i$ is the ancestor of the treasure at depth $i \leq d$.
There are at most $ \Delta^{\ell}$ nodes on layer $i$ at distance $2\ell - 1$ from $u_i$, for every $1\leq \ell\leq i$.
Moreover, for each of these nodes, the probability that it has at least as many arrows pointing towards it than $u_i$ exactly corresponds to $\prob{\sum_{j = 1}^{2\ell-1} X_j \leq 0}$, with the notations of Lemma \ref{lem:minortec}.

Indeed, when comparing the amount of advice pointing to two different nodes $u$ and $v$,
only the nodes of $\pathoo{u}{v}$ matter.

When estimating the probability that $v$ beats $u$,
each random variable $X_j$  has to be interpreted as taking value $-1$ if the advice points towards $v$, $0$ if it points neither to $u$ nor $v$, and $+1$ if it points towards $u$.
In the case that $u=u_{j}$ and $v$ is another node on layer $j$,
these events happen respectively with probability
$q/\Delta$,
$q(1-2\F 1 \Delta)$, and
$1-q + q/\Delta$.

This means that for each $i$,
\eq{
\EE\left(\NL(u_i)\right)\leq \sum_{\ell=1}^{i}
\PP\left(\sum_{j= 1}^{2\ell-1} X_j \leq 0\right)
 \Delta^\ell\leq \sum_{\ell=1}^{d}\PP\left(\sum_{j = 1}^{2\ell-1} X_j \leq 0\right)\Delta^\ell.}
By Lemma \ref{lem:minortec} this is at most
\[
\bigO\B{\sum_{\ell=1}^{d} \Delta^{-\ell + \R{2}} \cdot \Delta^\ell}
=
\bigO(D\sqrt{\Delta}) = \bigO\left(D\sqrt{\Delta}\right).
\]
\end{proof}

\subsection{Special Form of Union Bound}\label{sec:unionbound}
\begin{claim}\label{claim:unionbound}
Let $A$ be an event that can be decomposed as the union of events $(A_i)_{i \in I}$, $A \subseteq \bigcup_{i \in I} A_i$. Let $X$ be a random variable.
\eq{
\EE(X \mid A) \PP(A) \leq \sum_i \EE(X \mid A_i) \PP(A_i)
}
\end{claim}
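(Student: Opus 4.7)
The plan is to rewrite both sides using indicator functions so that the inequality becomes a pointwise comparison, after which linearity of expectation finishes the job. Specifically, the identity $\EE(X \mid A)\PP(A) = \EE(X \mathbf{1}_A)$ (for $\PP(A)>0$, and trivially when $\PP(A)=0$) reduces the claim to
\[
\EE(X\mathbf{1}_A) \leq \sum_{i\in I} \EE(X\mathbf{1}_{A_i}).
\]

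The key observation is that the inclusion $A \subseteq \bigcup_{i\in I} A_i$ gives the pointwise bound $\mathbf{1}_A \leq \sum_{i\in I}\mathbf{1}_{A_i}$ on the underlying sample space (any outcome in $A$ lies in at least one $A_i$, contributing at least $1$ to the right-hand sum). In the applications used in the paper, $X$ is a query or move count, hence $X \geq 0$; multiplying the pointwise bound by $X$ preserves the inequality, yielding $X\mathbf{1}_A \leq \sum_{i\in I} X\mathbf{1}_{A_i}$ almost surely.

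Taking expectations and invoking linearity (or monotone convergence if $I$ is countably infinite, so that the interchange of sum and expectation is justified for nonnegative summands) gives
\[
\EE(X\mathbf{1}_A) \leq \sum_{i\in I}\EE(X\mathbf{1}_{A_i}) = \sum_{i\in I}\EE(X \mid A_i)\PP(A_i),
\]
which is the desired inequality. I do not foresee a real obstacle: the only subtlety is the tacit nonnegativity assumption on $X$, which holds throughout the paper since $X$ always measures a count of queries or moves, and the handling of terms with $\PP(A_i)=0$, which can be resolved by the convention $\EE(X\mid A_i)\PP(A_i):=\EE(X\mathbf{1}_{A_i})=0$ in that case.
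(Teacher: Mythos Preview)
Your proposal is correct and follows essentially the same route as the paper's proof: rewrite $\EE(X\mid A)\PP(A)=\EE(X\chi(A))$, use the pointwise indicator inequality $\chi(A)\le\sum_i\chi(A_i)$, and apply linearity of expectation. You add a useful remark about the tacit nonnegativity of $X$ (needed to multiply the indicator inequality by $X$) and about the $\PP(A_i)=0$ convention, both of which the paper leaves implicit.
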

\begin{proof}
We denote by $\chi (B)$ the indicator function of event $B$. Then
\begin{align*}
\EE(X \mid A) \PP(A)
& = \EE(X \cdot \chi(A))
\leq \EE\left(X \cdot \chi\left(\bigcup_i A_i\right)\right)
\leq \EE\left(X \cdot \sum_i \chi(A_i)\right)
\\ & = \sum_i  \EE\left(X \cdot \chi(A_i)\right)
=  \sum_i \EE(X \mid A_i) \PP(A_i),
\end{align*}
where we used the union bound in the form $\chi(\bigcup_i A_i) \leq \sum_i \chi A_i$ and then linearity of expectation.
\end{proof}

\end{document}

The following lemma is proved in below, in Section \ref{app:1stlower}:
\begin{lemma}\label{lem:competitors}
Assume the treasure is placed in a leaf $\tau$ of the complete $\Delta$-ary tree.
Denote by $\adv$ the random advice on all internal nodes, then the expected number of leaves $u$ satisfying
$|\advTo{u}| > |\advTo{\tau}|$,
is a lower bound on the query complexity of every algorithm.
\end{lemma}

Using Lemma \ref{lem:competitors}, all we need to do is approximate the number of leaves $u$ satisfying
$|\advTo{u}| > |\advTo{\tau}|$.
When comparing the number of pointers that point towards each of two different nodes, only the pointers of the internal nodes on the path between them
may influence on the result.
The probability that a leaf $u$ ``beats'' the treasure $\treasure$ in the sense of Lemma \ref{lem:competitors},
is at least the probability that exactly one node on the path points to $u$ and
none of the rest point towards the treasure. This probability is at least
\[\F q \Delta \cdot \left(q\cdot \B{1 - \R\Delta}\right)^{\dist(u,\treasure)-2}.\]
There are precisely $(\Delta-1)^{D}$  leaves whose distance from the treasure is $2D$. Therefore, the expected number of leaves that beat the treasure is at least:
\eq{
\F q \Delta (\Delta - 1)^D q^{2D-2} \cdot \B{1 - \R\Delta}^{2D-2}
&=
\F{\Delta}{q(\Delta- 1)^2} \cdot \BF{q^2(\Delta -1)^3}{\Delta^2}^D\\
&\geq
\F{\Delta}{q(\Delta- 1)^2} \cdot (1+\varepsilon)^2D
.
}
Theorem \ref{thm:main-lower} follows.

\subsubsection{Proof of Lemma \ref{lem:competitors}}\label{app:1stlower}
For the lower bound, assume the algorithm is given the advice $\adv$ for all the internal nodes for free.
By Yao's principle, instead of taking the worst case placement of the treasure for a randomized algorithm, we obtain a lower bound by considering only deterministic algorithms when the treasure is placed uniformly at random at one of the leaves.

In this simplified setting, an optimal algorithm can be described explicitly: It sorts the leaves according
$\cprob{\cdot}{\adv}$ (Claim \ref{claim:explicit}) and tries them in this order. This order in fact corresponds to ranking nodes by how many arrows point to them (Claim \ref{claim:likelihood}).
The expected number of nodes which are higher than the treasure in this ordering is therefore a lower bound for this algorithm, and thus for all algorithms.

Let $\leaves$ be the set of leaves.
For a given leaf $u \in \leaves$ and an advice configuration $\adv$, let $C(\alg, \adv, u)$ be the cost (number of queries) of $\alg$ when the advice is equal to $\adv$ and the treasure is located at $u$.
We also define the cost $C(\alg, u)$ of an algorithm $\alg$ when the treasure $\target$ is located at $u$ to be the expected cost of $\alg$ before finding $\target$ where the expectation is over advice setting. That is:
$$
C(\alg, u) = \sum_{\adv} C(\alg, \adv, u) \cprob{\adv}{u}.
$$
In our setting, the expected number of queries of $\alg$ is:
$$
C(\alg) =  \sum_{u \in \leaves} \prob{u} \sum_{\adv} C(\alg, \adv, u) \cprob{\adv}{u}.
$$
\begin{claim}\label{claim:explicit}
The algorithm $\alg$ that tries the locations $u$ in the order given by $\cprob{u}{\adv}$, i.e., the most likely $u$ is tried first and the least likely tried last, minimizes $C(\alg)$.
\end{claim}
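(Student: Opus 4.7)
The plan is to combine a Bayesian reformulation of $C(\alg)$ with the standard rearrangement inequality. First, I would make a small reduction: since the advice $\adv$ on all internal nodes is handed to the algorithm for free, and since the treasure sits at a leaf, we may restrict attention without loss of generality to deterministic algorithms that query only leaves. For such an algorithm, once $\adv$ is fixed the sequence of queries is in fact deterministic until the treasure is found, because each ``not the treasure'' answer conveys no information beyond ruling out that one leaf. Consequently, for each advice configuration $\adv$ the algorithm commits to some permutation $\pi_{\adv}$ of $\leaves$, and $C(\alg, \adv, u) = \pi_{\adv}(u)$ is simply the rank of $u$ in this order.

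Next, I would apply Bayes' rule in the form $\prob{u}\cprob{\adv}{u} = \prob{\adv}\cprob{u}{\adv}$ and swap the two outer sums in the expression for $C(\alg)$:
\[
C(\alg) \;=\; \sum_{u \in \leaves} \prob{u} \sum_\adv \cprob{\adv}{u}\, \pi_{\adv}(u)
\;=\; \sum_\adv \prob{\adv} \sum_{u\in \leaves} \cprob{u}{\adv}\, \pi_{\adv}(u).
\]
This rewriting is the key step because it decouples the optimization across different $\adv$: we may choose $\pi_\adv$ independently for each $\adv$ to minimize the inner sum $\sum_u \cprob{u}{\adv}\,\pi_\adv(u)$.

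Finally, I would invoke the rearrangement inequality: for any fixed nonnegative weights $\{\cprob{u}{\adv}\}_{u\in \leaves}$, the quantity $\sum_u \cprob{u}{\adv}\cdot \pi_\adv(u)$ is minimized exactly by the permutation that assigns rank $1$ to the leaf with the largest posterior $\cprob{u}{\adv}$, rank $2$ to the next largest, and so on (ties broken arbitrarily). This is precisely the algorithm described in the claim, and since it is optimal $\adv$ by $\adv$ it is optimal on average, proving the claim.

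The only slightly delicate point — and the one I would be most careful about — is the initial reduction: one must justify that an optimal algorithm gains nothing from querying internal nodes (their advice is already free, and they cannot contain the treasure) and nothing from adaptively reshuffling its leaf order based on the trivial ``not here'' answers. Once this is argued, the behavior of any algorithm is faithfully encoded by the family $\{\pi_\adv\}_\adv$, and the rearrangement step closes the proof without further work.
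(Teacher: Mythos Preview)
Your proposal is correct and follows essentially the same route as the paper: rewrite $C(\alg)$ via Bayes as $\sum_{\adv}\prob{\adv}\sum_{u}\cprob{u}{\adv}\,C(\alg,\adv,u)$, then apply the rearrangement inequality advice-by-advice. You are more explicit than the paper about the preliminary reduction to leaf-only permutations $\pi_{\adv}$ (the paper takes this for granted, having already restricted to deterministic algorithms via Yao's principle just before the claim), but the core argument is identical.
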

\begin{proof}
We can write
\eq{
C(\alg) = \sum_{\adv}\prob{\adv}\sum_{u \in \leaves} C(\alg, \adv,  u)\cprob{u}{\adv},
}
where it is understood that  $\prob{\adv}$ is the marginal of $\prob{\adv,u}$ with respect to the advice.
The term $\cprob{u}{\adv}$, standing for the probability of $u$ holding the treasure given that the advice configuration is $\adv$, is only defined because we assume the treasure is placed according to a known distribution (uniform in our case).
For a fixed advice setting $\adv$, it follows from the {\em rearrangement inequality}
that $ \sum_{u \in \leaves} C(\alg, \adv, u)\cprob{u}{\adv}$ is minimized when
$C(\alg, \adv, u)$ and $\cprob{u}{\adv}$
are sorted in the same order with respect to $u$. This corresponds to
algorithm $\alg$ trying the locations $u$ in the order given by $\cprob{u}{\adv}$, which is exactly the statement of the claim. Hence, since we assume that all advice is known, the algorithm we have just described is feasible, and, in fact, optimal. Moreover, its query complexity is at least $1$ plus the expected number of nodes  which are strictly more likely than the treasure, where the expectation is taken over the randomness of the advice.
\end{proof}

 It only remains to check that a node
$u$ is more likely than $\treasure$ given an advice setting $\adv$ iff more arrows point to $u$ than $\treasure$. This will conclude the proof of Lemma \ref{lem:competitors} and hence of the exponential lower bound in Theorem \ref{thm:main-lower}.

\begin{claim}\label{claim:likelihood}
For two leaves $u,v \in \leaves$, and advice configuration $\adv$, $\cprob{u}{\adv}> \cprob{v}{\adv}$
if and only if there is more advice pointing towards $u$ than advice pointing towards $v$.
\end{claim}
\begin{proof}
Recall that, by definition of the model
\[\cprob{\adv}{\tau = u} = \left(p+\F q \Delta\right)^{\left|\advTo{u}\right|}
\left(q(1-\F 1 \Delta)\right)^{\left|\advAway{u}\right|},
\]
In our regime it will always be the case that $p+ \F q \Delta > q(1-\F 1 \Delta)$, simply because we assume $q<p$.
Hence $\cprob{\adv}{\tau = u}$ is an increasing function of $|\advTo{u}|$.

Since $\tau$ is placed uniformly at random, it follows from Bayes rule that
$\cprob{\adv}{\tau = u} \propto \cprob{\tau = u}{ \adv}$.
The symbol $\propto$ indicates that we omit the renormalizing factor.
Hence, we obtain that
$\cprob{\tau=u}{\adv} >\cprob{\tau=v}{ \adv}$
if and only if $|\advTo{u}| > |\advTo{v}|$.
\end{proof}